\documentclass[12pt]{amsart}
\usepackage{amssymb, enumerate, mdwlist}

\textwidth 170mm
\textheight 230mm
\evensidemargin\paperwidth
\advance\evensidemargin-\textwidth
\oddsidemargin.5\evensidemargin
\advance\oddsidemargin-1in
\evensidemargin\oddsidemargin

\topmargin\paperheight
\advance\topmargin-\textheight
\topmargin.5\topmargin
\advance\topmargin-1in

\theoremstyle{plain}
\newtheorem{theorem}{Theorem}[section]
\theoremstyle{plain}

\theoremstyle{plain}
\newtheorem{lemma}[theorem]{Lemma}
\theoremstyle{plain}
\newtheorem{corollary}[theorem]{Corollary}
\theoremstyle{plain}

\theoremstyle{plain}
\newtheorem{definition}[theorem]{Definition}
\theoremstyle{plain}

\theoremstyle{remark}
\newtheorem{remark}[theorem]{Remark}
\theoremstyle{remark}
\newtheorem{example}[theorem]{Example}
\theoremstyle{remark}

\newcommand{\B}{\mathcal{B}}
\newcommand{\C}{\mathbb{C}}

\newcommand{\Hil}{\mathcal{H}}
\newcommand{\ii}{\mathbf{i}}
\newcommand{\kk}{\mathbf{k}}
\newcommand{\N}{\mathbb{N}}

\newcommand{\R}{\mathbb{R}}
\newcommand{\T}{\mathbb{T}}
\newcommand{\vv}{\mathbf{v}}

\newcommand{\wU}{\widehat{U}}
\newcommand{\xx}{\mathbf{x}}
\newcommand{\yy}{\mathbf{y}}
\newcommand{\Z}{\mathbb{Z}}

\title
[Categories of quantum walks]
{Categories of quantum walks}
\author{Hiroki Sako}
\address
{Faculty of engineering, Niigata University, Niigata 950-2181, Japan}
\email
{sako@eng.niigata-u.ac.jp}
\subjclass[2010]{18B99, 46M15, 81P16}

\begin{document}

\begin{abstract}
We propose {\it categories} of $1$-dimensional and multi-dimensional quantum walks.
In the categories, {\it an object} is a quantum walk, and {\it a morphism} is an intertwining operator between two quantum walks.
The new framework enables us to discuss quantum walks in a unified way.
The purposes of this paper are the following:
\begin{enumerate}
\item
We reinterpret known results in our new framework.
\item
We show several new theorems.
For example,
it is proved that {\it every} space-homogeneous {\it time-periodic} analytic quantum walk on $\Z^d$ has a limit distribution of velocity for {\it every} initial unit vector.
Analyticity is a very weak condition.
\item
We ask whether there exists a continuous-time quantum walk $(V^{(t)})_{t \in \R}$ which realizes a given discrete-time quantum walk $U$.
Existence of $(V^{(t)})_{t \in \R}$ is equivalent to that of a $1$-parameter group of automorphisms
$(V^{(t)})_{t \in \R}$ from the object $U$ to $U$.
\end{enumerate}

%%%%%%Arxiv—p%%%%%%
%We propose categories of $1$-dimensional and multi-dimensional quantum walks. In the categories, an object is a quantum walk, and a morphism is an intertwining operator between two quantum walks. The new framework enables us to discuss quantum walks in a unified way. The purposes of this paper are the following:
%(1) We reinterpret known results in our new framework.
%(2) We show several new theorems. For example, it is proved that every space-homogeneous time-periodic analytic quantum walk on $\mathbb{Z}^d$ has a limit distribution of velocity for every initial unit vector.
%Analyticity is a very weak condition.
%(3) We ask whether there exists a continuous-time quantum walk $(V^{(t)})_{t \in \mathbb{R}}$ which realizes a given discrete-time quantum walk $U$. Existence of $(V^{(t)})_{t \in \mathbb{R}}$ is equivalent to that of a $1$-parameter group of automorphisms $(V^{(t)})_{t \in \mathbb{R}}$ from the object $U$ to $U$.
\end{abstract}

\keywords{Category; Quantum walk; Intertwining operator}

\maketitle

\section{Introduction}
To see the motivation of this paper,
let us consider the following concrete $1$-dimensional quantum walk
acting on $\ell_2(\Z) \otimes \C^2$:
\[U = 
\left(
\begin{array}{cc}
a S & -b S\\
b & a
\end{array}
\right).
\]
In the above, let $S \colon \ell_2(\Z) \to \ell_2(\Z)$ be the bilateral shift,
$a, b$ be real numbers satisfying $a^2 + b^2 = 1$.
The above operator is unitary, and the dynamical system $(U^t)_{t \in \Z}$ forms a quantum walk.
Although our new project leads to several kinds of theorems,
let us first focus on the following corollary proved in Example \ref{example : existence of CTQW}: 
The discrete-time quantum walk 
$(U^t)_{t \in \Z}$ cannot be realized by a continuous-time quantum walk
.

In this paper, we propose a definition of quantum walks and define a new category consisting of quantum walks and intertwiners between them. This enables us to clearly understand such a result as above in an appropriate framework.

What do we need in order to understand the above result? 
First of all, we need to define quantum walks.
Unfortunately, many researchers study quantum walks without definitions.
The above result asks whether there exists a quantum walk satisfying some required condition, so it is unavoidable to clarify
what are quantum walks. 

We also propose categories of quantum walks.
This new framework enables us to study relationship between quantum walks.
Category is a modern tool for mathematics.
It consists of objects and morphisms.
An object is a point and a morphism is an arrow between two objects.
Objects are not necessarily sets and morphisms are not necessarily maps.
If every point has the identity morphism, and if there exists an associative
composition rule on morphisms, then
the whole system of objects and morphisms is said to form a category.
See \cite{EilenbergMacLane} for the general theory of categories.
In our new category, a quantum walk is an object (a point), and an intertwiner between two quantum walks is a morphism (an arrow).
An intertwiner $W \colon \Hil_1 \to \Hil_2$ between two quantum walks $(U_1 \curvearrowright \Hil_1)$, $(U_2 \curvearrowright \Hil_2)$ 
is an operator satisfying $W U_1 = U_2 W$.
See Section \ref{section: category of QWs}.

Using our new category, we study relationship between given two quantum walks $(U_1 \curvearrowright \Hil_1)$, $(U_2 \curvearrowright \Hil_2)$.
If there exists an isometry $W \colon \Hil_1 \to \Hil_2$ which is an intertwiner from $U_1$ to $U_2$, and if some dynamical system can be described by $U_1$, then the dynamical system is also realized by $U_2$.
We will give several concrete examples in Section
\ref{section: morphisms between 1-dim homogeneous QWs}.

This category is useful, when we study whether a given quantum walk $(U \curvearrowright \Hil)$ can be realized by a continuous-time quantum walk $(V^{(t)})_{t \in \R}$. The condition is equivalent to the condition that there exists $(V^{(t)})_{t \in \R} \curvearrowright \Hil$ satisfying $V^{(1)} = U$.
For every real number $t$, the operator $V^{(t)}$ satisfies
\[V^{(t)} U = V^{(t + 1)} = V^{(1 + t)} = U V^{(t)}.\]
This means that $V^{(t)}$ is a morphism from the object $(U \curvearrowright \Hil)$ to $(U \curvearrowright \Hil)$.
Thus, realizability by a continuous-time quantum walk
is equivalent to existence of a $1$-parameter family of automorphisms at the object $(U \curvearrowright \Hil)$.
Once we capture
the whole structure of the category,
we can determine whether the condition holds or not.
%In general, it is difficult to determine the structure of the category. However, we can determine the structure of the category of the $1$-dimensional homogeneous quantum walks.
%Once we calculate the eigenvalue function of the inverse Fourier transform of such a discrete-time quantum walk, we can determine realizability by a continuous-time quantum walk. 
%
%We also study asymptotic behavior of distributions of `position' determined by a quantum walk $(U \curvearrowright \Hil)$ and a unit vector $\xi \in \Hil$. See Section \ref{section: asymptotic behavior}.

\subsection*{Contents of this paper}

\paragraph{\bf New results:}
\begin{itemize}
\item
We define four kinds of regularity for quantum walks:
having finite propagation, analyticity, smoothness, uniformity.
In preceding research, almost all the quantum walks have finite propagation.
We prove in Theorem \ref{theorem: 4 kinds of regularity} that the condition for having finite propagation implies analyticity.
We also have implications ``analyticity $\Rightarrow$ smoothness'', ``smoothness $\Rightarrow$ uniformity''.
\item
For every multi-dimensional discrete-time smooth quantum walk,
and for {\it every initial unit vector},
the support of the distribution of velocity is asymptotically compact.
See Theorem \ref{theorem: asymptotically having a compact support} for the statement.
In the proceeding papers, researchers mainly study the case that the initial unit vector is {\it finitely supported}.
\item
In Theorem \ref{theorem: convergence and intertwiner},
we prove that if two smooth quantum walks have a good intertwiner, and one of them has a limit distribution of velocity, then the other walk also has.
\item
In Theorem
\ref{theorem: convergence of time-periodic homogeneous walk},
we prove that
{\it every} space-homogeneous {\it time-periodic} discrete-time analytic quantum walk has a weak limit distribution of velocity for {\it every initial unit vector}.

\end{itemize}
\paragraph{\bf Reinterpretations of known arguments:}
\begin{itemize}
\item
Many researchers have already utilized Fourier analysis for the study of quantum walks.
We clarify in this paper that the inverse Fourier transform of a walk is also an object. See Section \ref{section: general theory of multi-dim QWs}.
\item
The paper \cite{SaigoSako} has already explained that model quantum walks play an important role in the study of $1$-dimensional homogeneous quantum walks.
In our new framework, we can make it easy to manipulate the model quantum walks. 
See Section \ref{section: structure theorems of $1$-dim HQW}
and Section \ref{section: morphisms between 1-dim homogeneous QWs}.
\item
For a given discrete-time quantum walk $U$, existence of a continuous quantum walk $(V^{(t)})_{t \in \R}$ satisfying that $V^{(1)} = U$ is equivalent to existence of $1$-parameter auto morphisms $(V^{(t)})_{t \in \R}$ on the object $U$. We explain how one can determine this type of realizability for $1$-dimensional discrete-time homogeneous analytic quantum walks $U$. See Section \ref{section: realizability by CTQW}.
\end{itemize}

\section{The category of Hilbert spaces with coordinates}

\subsection{Objects}

We consider Hilbert spaces which is associated to the Euclidean space $\R^d$.
We can also define the same kind of notion for topological spaces other than $\R^d$.
\begin{definition}\label{definition: d-dim Hilbert space}
Let $\Hil$ be a complex Hilbert space, and let $d$ be a natural number.
Let $E$ is a map which maps every Borel subset of $\R^d$ to an orthogonal projection in $\B(\Hil)$. 
A pair $(\Hil, E)$ is called {\rm a Hilbert space with a $d$-dimensional coordinate system}, if
$E$ is countably additive and $E(\R^d)$ is the identity operator on $\Hil$.
The map $E$ is called {\rm a spectral measure} on $\R^d$ or a coordinate system of $\Hil$. 
\end{definition}

Consider the case that $d$-tuple $(h_1, \cdots, h_d)$ of self-adjoint operators (or observables) on $\Hil$ is given. 
If these operators have a common core $\mathcal{D} \subset \Hil$, and if these operators are commutative on $\mathcal{D}$, then the $d$-tuple is expressed by a spectral measure $E$.
The pair $(\Hil, E)$ is a Hilbert space with a $d$-dimensional coordinate system.
We can also describe $(\Hil, E)$ as $(\Hil, h_1, \cdots, h_d)$.

Given a Hilbert space $(\Hil, E)$ with a $d$-dimensional coordinate system
and a vector $\xi$, we obtain a measure
$\mu_\xi(\Omega) = \langle E(\Omega) \xi, \xi \rangle$
on $\R^d$.
The volume $\mu_\xi(\R^d)$ of $\R^d$ is equal to $\|\xi\|^2$.
If $\xi$ is a unit vector, then $\mu_\xi$ is a probability measure.
The system $(\Hil, E)$ associates a unit vector $\xi$ to a probability measure $\mu_\xi$.
We sometimes call $\mathrm{supp}(\mu_\xi)$ the support of $\xi$. 
If the self-adjoint operators $(h_1, \cdots, h_d)$ correspond to observables of position, 
then the measure $\mu_\xi$ is the distribution of position of particles whose states are expressed by $\xi$.

\begin{example}\label{example: standard coordinate system}
To the Hilbert space $\ell_2 (\Z^d) \otimes \C^n$, we implicitly associate the following form of spectral measure $E$ on $\R^d$:
\[E \colon \Omega \mapsto (\textrm{The\ orthogonal\ projection\ onto\ } \ell_2(\Omega \cap \Z^d) \otimes \C^n).\]
In this paper, we call $E$ the standard coordinate system of $\ell_2 (\Z^d) \otimes \C^n$.
For a vector $\xi = \sum_{\xx \in \Z^d} \delta_\xx \otimes \xi_\xx$,
the measure $\mu_\xi$ is equal to the following:
\[\mu_\xi(\Omega) 
= \langle E(\Omega) \xi, \xi \rangle
= \left \langle  
\sum_{\xx \in \Omega \cap \Z^d} \delta_\xx \otimes \xi_\xx, 
\sum_{\xx \in\Z^d} \delta_\xx \otimes \xi_\xx 
\right\rangle^2
= \sum_{\xx \in \Omega \cap \Z^d} \|\xi_\xx\|^2.\]

In the study of quantum walks, many researchers have been using this kind of Hilbert spaces.
We can widen the framework of quantum walks by Definition \ref{definition: d-dim Hilbert space}.
\end{example}

\begin{example}
Let us denote by $\T_{2 \pi}$ the torus $\R/ 2 \pi \Z$.
For every open interval in $\T_{2 \pi}$, we
have a coordinate $k$ arising from $\R$.
Let us consider the Hilbert space $L^2(\T_{2 \pi})$. 
The differential operator $\frac{1}{\ii} \frac{d}{d k}$ is an unbounded self-adjoint operator on $L^2(\T_{2 \pi})$. The support of the spectral decomposition of 
$\frac{1}{\ii} \frac{d}{d k}$ is $\Z$. The spectral measure defines a $1$-dimensional coordinate system of $L^2(\T_{2 \pi})$.
Recycling notations, we simply denote by $\left( L^2(\T_{2 \pi}), \frac{1}{\ii} \frac{d}{d k} \right)$ this Hilbert space with the $1$-dimensional coordinate system.
\end{example}

\begin{example}
Let $(\Hil_1, E_1)$ and $(\Hil_2, E_2)$ be Hilbert spaces with $d$-dimensional coordinate systems.
We define the spectral measure $E_1 \oplus E_2$ by
$\R^d \supset \Omega \mapsto E_1(\Omega) \oplus E_2(\Omega)$.
The direct sum $(\Hil_1 \oplus \Hil_2, E_1 \oplus E_2)$ is also a Hilbert space with a $d$-dimensional coordinate system.
\end{example}

In the next definition, we define four kinds of regularity for vectors in $\Hil$.
We make use of the following unitary representation of $\R^d \cong \widehat \R^d$:
\[v_E(\kk) = \int_{\xx \in \R^d} \exp(\ii \xx \cdot \kk) E(\xx) d \xx.\]
In the above equation, $\ii$ stands for the imaginary unit, and $ \xx \cdot \kk$ stands for the standard inner product of $\R^d$.
In the case that $\xx$ stands for position, $\kk$ stands for a wavenumber.

\begin{definition}
Let $(\Hil, E)$ be a Hilbert space with a $d$-dimensional coordinate system.
Let $\xi$ be a vector in $\Hil$.
\begin{enumerate}
\item
It is said that the support of $\xi$ is compact,
if there exists a compact subset $\Omega$ of $\R^d$
such that
$\xi \in E(\Omega) \Hil$.
\item
The vector $\xi$ is said to be {\rm analytic},
if the mapping $\R^d \ni \kk \mapsto v_E(\kk) \xi \in \Hil$
can be extended to a holomorphic map defined in a domain in $\C^d$.
\item
The vector $\xi$ is said to be {\rm smooth},
if the mapping $\R^d \ni \kk \mapsto v_E(\kk) \xi \in \Hil$
is smooth.
\item
The vector $\xi$ is said to be {\rm uniform},
if the mapping $\R^d \ni \kk \mapsto v_E(\kk) \xi \in \Hil$
is continuous.
\end{enumerate}
\end{definition}
To discuss differentiability of a map from $\R^d$ or a domain of $\C^d$,
for the target space,
we have only to require linearity and a compete metric
on it. 
Since $\Hil$ is a Banach space, the above definition makes sense.

\begin{lemma}\label{lemma: smoothness for vectors}
A $\xi$ in $\Hil$ is smooth,
if and only if for every real-valued monomial function $p$,
$\xi$ is in the domain of $\int_{\R^d} p(\xx) d E(\xx)$.
\end{lemma}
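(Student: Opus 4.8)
The plan is to read off smoothness of $\kk \mapsto v_E(\kk)\xi$ from the scalar spectral measure $\mu_\xi(\Omega) = \langle E(\Omega)\xi,\xi\rangle$, after recognizing $v_E$ as the strongly continuous unitary representation of $\R^d$ generated by the commuting self-adjoint operators $h_j = \int_{\R^d} x_j\,dE(\xx)$. The single computation I will use throughout is that, because each $v_E(\kk)$ is unitary and $v_E(\kk+\kk') = v_E(\kk)v_E(\kk')$, one has for $\xi$ in the domain of $h_j$ (that is, $\int x_j^2\,d\mu_\xi < \infty$) the identity
\[
\left\| \frac{v_E(\kk + h\ee_j)\xi - v_E(\kk)\xi}{h} - v_E(\kk)\,\ii h_j\xi \right\|^2
= \int_{\R^d} \left| \frac{\exp(\ii x_j h) - 1}{h} - \ii x_j \right|^2 d\mu_\xi(\xx),
\]
which tends to $0$ as $h \to 0$ by dominated convergence (the bracket is dominated by $|x_j|$). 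A first reduction makes the rest one-dimensional: I claim the condition ``$\xi \in \mathrm{Dom}\big(\int p\,dE\big)$ for every monomial $p$'' is equivalent to ``$\int x_j^{2n}\,d\mu_\xi < \infty$ for all $j$ and all $n$'', i.e. $\xi \in \mathrm{Dom}(h_j^n)$ for all $j,n$. One direction is immediate; the other follows from the weighted AM--GM bound $|\xx^\alpha|^2 = \prod_j x_j^{2a_j} \le \sum_j x_j^{2|\alpha|}$, which gives $\int |\xx^\alpha|^2\,d\mu_\xi \le \sum_j \int x_j^{2|\alpha|}\,d\mu_\xi$.

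For the direction ``domain condition $\Rightarrow$ smooth'', I would set $F(\kk) = v_E(\kk)\xi$ and use the displayed estimate to show by induction on $|\alpha|$ that every partial derivative exists and equals $\partial^\alpha F(\kk) = v_E(\kk)\,\xi_\alpha$, where $\xi_\alpha = \int_{\R^d}(\ii\xx)^\alpha\,dE(\xx)\,\xi$ is a genuine vector since $\int|\xx^\alpha|^2\,d\mu_\xi < \infty$. The induction step uses that $d\mu_{h_j\xi} = x_j^2\,d\mu_\xi$, so the hypothesis on $\xi$ is inherited by each $\xi_\alpha$, allowing the estimate to be reapplied. Each $\partial^\alpha F$ is then continuous by strong continuity of $v_E$; since all partial derivatives of all orders exist and are continuous, $F$ is $C^\infty$.

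For the converse ``smooth $\Rightarrow$ domain condition'', the point I want to exploit is that I need only smoothness along the coordinate axes, never mixed partials. Restricting to the $j$-th axis gives a $C^\infty$ map $t \mapsto v_E(t\ee_j)\xi$. Differentiability at $0$ forces membership in $\mathrm{Dom}(h_j)$ via Fatou's lemma applied to
\[
\left\| \frac{v_E(h\ee_j)\xi - \xi}{h} \right\|^2 = \int_{\R^d} \left| \frac{\exp(\ii x_j h) - 1}{h} \right|^2 d\mu_\xi(\xx) \longrightarrow \int_{\R^d} x_j^2\,d\mu_\xi(\xx),
\]
the left side being bounded as $h \to 0$ since the difference quotient converges. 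Iterating this Stone-type argument --- at the $k$-th stage applying it to the vector $(\ii h_j)^k\xi$, whose axis orbit is the $k$-th derivative by the forward computation --- yields $\xi \in \mathrm{Dom}(h_j^n)$ for every $j$ and $n$, and the first reduction then upgrades this to the full domain condition.

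The main obstacle is the converse direction, and specifically the temptation to relate the \emph{mixed} partial $\partial^\alpha F(0)$ directly to $\mathrm{Dom}\big(\int \xx^\alpha\,dE\big)$, which would entangle one in the delicate domain theory of products of unbounded commuting operators. The clever step that avoids this is the AM--GM reduction, which lets me extract the entire domain condition from one-variable information alone; the only genuinely analytic input is then the Fatou argument, which is the multi-dimensional shadow of Stone's theorem and is where care is needed to ensure the limit of difference quotients controls the spectral integral.
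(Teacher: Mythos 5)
Your proposal is correct, and its analytic engine is the same one the paper runs: the spectral-calculus identity
\[
\left\| \frac{v_E(\kk + h\ee_j)\xi - v_E(\kk)\xi}{h} - v_E(\kk)\,\ii D_j\xi \right\|^2
= \int_{\R^d} \left| \frac{\exp(\ii h x_j) - 1}{h} - \ii x_j \right|^2 d\mu_\xi(\xx),
\]
with dominated convergence extracting differentiability from the domain condition, and Fatou (the paper uses Beppo Levi monotone convergence, to the same effect) extracting the domain condition from differentiability. The genuine difference is one of scope. The paper reduces the lemma to the claim that the first partial derivative $\frac{\partial}{\partial k_j} v_E(\kk)\xi$ exists if and only if $\xi$ lies in the domain of $D_j = \int x_j\, dE(\xx)$, proves exactly that first-order statement, and stops; it never addresses monomials of degree larger than one, mixed monomials $\xx^\alpha$, or the passage from existence of all partials to $C^\infty$. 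Your two additional devices are precisely what closes these gaps. The AM--GM bound $|\xx^\alpha|^2 \le \sum_j x_j^{2|\alpha|}$ converts the full domain condition into a one-variable condition on each $D_j$, so that mixed monomials --- and with them the domain theory of products of unbounded commuting operators --- never have to be touched. The induction on $|\alpha|$, with the observation $d\mu_{\xi_\alpha} = |\xx^\alpha|^2\, d\mu_\xi$ propagating the hypothesis to the derived vectors, together with the axis-iteration of the Fatou argument in the converse, upgrades the paper's first-order equivalence to the all-orders statement the lemma actually asserts. (You are also more careful on one small point: the limit of the difference quotient at general $\kk$ is $v_E(\kk)\,\ii D_j\xi$, which your formula records and the paper's display elides.) In short, your write-up is a completed version of the paper's argument rather than an alternative to it; what it buys is a proof of the lemma in its stated generality, at the cost of modest extra bookkeeping.
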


\begin{proof}
Express $\xx \in \R^d$ by $(x_1, \cdots, x_j, \cdots, x_d)$.
Express $\kk \in \R^d$ by $(k_1, \cdots, k_j, \cdots, k_d)$.
Define a self-adjoint operator $D_j$ by $\int_{\xx \in \R^d} x_j d E(\xx)$.
It suffices to show that
the mapping $\R^d \ni \kk \mapsto v_E(\kk) \xi \in \Hil$
has the partial derivative $\frac{\partial}{\partial k_j} v_E(\kk) \xi$,
if and only if $\xi$ is in the domain of $D_j$.

For every $\xi$ in $\Hil$,
we calculate the norm of the average of the change as follows:
\begin{eqnarray*}
&&
\left\|
\frac{
v_E(k_1, \cdots, k_{j - 1}, k_j + h, k_{j +1}, \cdots, k_d) \xi - 
v_E(k_1, \cdots, k_d) \xi}{h}
\right\|^2
\\
&=&
\left\|
\int_{\xx \in \R^d} \frac{\exp(\ii \xx \cdot \kk)(\exp(\ii h x_j) - 1)}{h}
d E(\xx) \xi
\right\|^2
\\
&=&
\int_{\xx \in \R^d} \left| \frac{\exp(\ii h x_j) - 1}{h} \right|^2
d \left\| E(\xx) \xi \right\|^2.
\end{eqnarray*}
As $h$ tends to $0$, by Beppo Levi's monotone convergence theorem for Lebesgue integral,
the right hand side converges to
$\int_{\xx \in \R^d} x_j^2
d \left\| E(\xx) \xi \right\|^2 \in [0, \infty]$.
Suppose that the partial derivative $\frac{\partial}{\partial k_j} v_E(\kk) \xi$
exists.
Since the left hand side converges to 
the real number $\left\| \frac{\partial}{\partial k_j} v_E(\kk) \xi \right\|^2 < \infty$,
the quantity $\int_{\xx \in \R^d} x_j^2
d \left\| E(\xx) \xi \right\|^2$ is finite in this case.
It follows that $\xi$ is in the domain of $D_j$.

Conversely suppose that $\xi$ is in the domain of $D_j$.
In this case, $x_j^2$ is integrable with respect to the probability measure $\Omega \mapsto \|E(\Omega) \xi\|^2$.
The average of the change satisfies the following:
\begin{eqnarray*}
&&
\left\|
\frac{
v_E(k_1, \cdots, k_{j - 1}, k_j + h, k_{j +1}, \cdots, k_d) \xi - 
v_E(k_1, \cdots, k_d) \xi}{h} - \ii D_j \xi
\right\|^2
\\
&=&
\int_{\xx \in \R^d} \left| \frac{\exp(\ii h x_j) - 1}{h} - \ii x_j \right|^2
d \left\| E(\xx) \xi \right\|^2.
\end{eqnarray*}
The integrand is dominated by the integrable function $2 x_j^2$ and converges to $0$ as $h$ tends to $0$.
By Lebesgue's dominated convergence theorem,
as $h$ tends to $0$,
the average of the change converges to $\ii D_j \xi$.
\end{proof}

\subsection{Morphisms}

\begin{definition}
Let $(\Hil_1, E_1)$ and $(\Hil_2, E_2)$ be Hilbert spaces with $d$-dimensional coordinate systems. A bounded linear operator $W$ from $\Hil_1$ to $\Hil_2$ is called {\rm a morphism} from the object $(\Hil_1, E_1)$ to the object $(\Hil_2, E_2)$ 
\end{definition}

In the next definition, we are going to define four kinds of regularity on morphisms.

\begin{definition}\label{definition: regularity}
Let $W$ be a morphism from $(\Hil_1, E_1)$ to $(\Hil_2, E_2)$.
Let $v_j = v_{E_j}$ be the unitary representation of $\R^d = \widehat\R^d$ associated to  $(\Hil_j, E_j)$.
For every $\kk \in \R^d$, define $W_{\kk} \in \B(\Hil_2 \leftarrow \Hil_1)$ by
$W_\kk 
= v_2(\kk) \cdot W \cdot v_1(- \kk).$
We define four kinds of regularity as follows:
\begin{enumerate}
\item
The operator $W$ is called {\rm a morphism with finite propagation},
if the following condition holds:
there exists a constant $R = R_W$ such 
that for every Borel subset $A_1, A_2$ of $\R^d$,
if $\mathrm{dist}(A_1, A_2) > R$, then $E_2(A_2) W E_1(A_1) = 0$.
\item
The operator $W$ is called an {\rm analytic} morphism,
if the following condition holds:
there exists a holomorphic map from a domain in $\C^d$ including $\R^d$ to $\B(\Hil)$ which extends the mapping
\[\R^d \ni \kk 
\mapsto 
W_\kk
\in \B(\Hil_2 \leftarrow \Hil_1).
\]
\item
The operator $W$ is called a {\rm smooth} morphism,
if the mapping
\[\R^d \ni \kk 
\mapsto W_\kk
\in \B(\Hil_2 \leftarrow \Hil_1)
\]
is smooth.
\item
The operator $W$ is called a {\rm uniform} morphism,
if the mapping
\[\R^d \ni \kk 
\mapsto W_\kk
\in \B(\Hil_2 \leftarrow \Hil_1)
\]
is continuous with respect to the operator norm.
\end{enumerate}
\end{definition}

Since $\B(\Hil_2 \leftarrow \Hil_1)$ is a Banach space, the above definition makes sense.

It is not hard to show the following claims:
\begin{itemize}
\item
If $W \in \B(\Hil_2 \leftarrow \Hil_1)$ has finite propagation, and if the support of $\xi \in \Hil_1$ is compact, then the support of $W \xi \in \Hil_2$ is also compact.
\item
A composition of two morphisms with finite propagation has finite propagation.
\item
If $W \in \B(\Hil_2 \leftarrow \Hil_1)$ is analytic (smooth, or uniform), and if $\xi \in \Hil_1$ is analytic (smooth, or uniform, respectively), then the support of $W \xi \in \Hil_2$ is also analytic (smooth, or uniform, respectively).
\item
A composition of two analytic (smooth, or uniform) morphisms
is analytic (smooth, or uniform, respectively).
\end{itemize}

\begin{example}
Denote by $\mathcal{F} \colon L^2(\T_{2 \pi}) \to \ell_2(\Z)$ the Fourier transform.
The Hilbert space $L^2(\T_{2 \pi})$ has the $1$-dimensional coordinate system by the eigenvalue decomposition of the self-adjoint operator $\frac{1}{\ii} \frac{d}{d k}$.
The Hilbert space $\ell_2(\Z)$ has the standard $1$-dimensional coordinate system.
Suppose that a vector $\xi \in L^2(\T_{2 \pi})$ is located at $x \in \Z$, or equivalently, suppose that $\xi(k)$ is a scalar multiple of $\exp(\ii k x)$.
Then the support of $\mathcal{F} \xi \in \ell_2(\Z)$ is $\{x\}$.
This means that the operator $\mathcal{F}$ does not change the support of vectors. It follows that $\mathcal{F}$ has finite propagation.
\end{example}

\begin{example}
Let $f$ be a bounded Borel function on $\T_{2 \pi}$.
Then we obtain a bounded linear operator $M[f] \colon L^2(\T_{2 \pi}) \to L^2(\T_{2 \pi})$ defined by $\xi \mapsto f \xi$.
We call it the multiplication operator of $f$.
As a morphism from $(L^2(\T_{2 \pi}), \frac{1}{\ii}\frac{d}{dk})$ to itself,
$M[f]$ is analytic (smooth, or uniform), if and only if $f$ is analytic (smooth, or continuous, respectively).
The key for the proof is that the unitary representation $(v(k))_{k \in \R}$
on $(L^2(\T_{2 \pi}), \frac{1}{\ii}\frac{d}{dk})$ is given by the rotation on the torus $\T_{2 \pi}$.
The morphism $M[f]$ has finite propagation, if and only if support of the Fourier expansion of $f$ is a sum of finitely many terms.
\end{example}

\begin{example}
The bounded linear operator 
$\mathcal{F} M[f] \mathcal{F}^{-1} \colon \ell_2(\Z) \to \ell_2(\Z)$
is equal to the convolution operator given by the Fourier coefficients of $f$.
As a morphism from $\ell_2(\Z)$ to itself,
$\mathcal{F} M[f] \mathcal{F}^{-1}$ is analytic (smooth, or uniform), if and only if $f$ is analytic (smooth, or continuous, respectively).
The morphism $\mathcal{F} M[f] \mathcal{F}^{-1}$ has finite propagation, if and only if support of the Fourier coefficients of $f$ is finite.
\end{example}

\begin{theorem}\label{theorem: 4 kinds of regularity}
Let $W$ be a morphism from $(\Hil_1, E_1)$ to $(\Hil_2, E_2)$.
\begin{enumerate}
\item
If the morphism $W$ has finite propagation, then $W$ is analytic.
\item
If the morphism $W$ is analytic, then $W$ is smooth.
\item
If the morphism $W$ is smooth, then $W$ is uniform.
\end{enumerate}
\end{theorem}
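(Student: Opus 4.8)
The plan is to treat the three implications in increasing order of difficulty, disposing of (3) and (2) quickly by general principles and reserving the real work for (1). For (3) there is nothing to prove beyond unwinding definitions: a smooth map $\R^d \ni \kk \mapsto W_\kk \in \B(\Hil_2 \leftarrow \Hil_1)$ is in particular continuous in the operator norm, which is exactly what it means for $W$ to be uniform. For (2) I would invoke the standard theory of Banach-space-valued holomorphic functions. By hypothesis $\kk \mapsto W_\kk$ extends to a holomorphic map from a domain $\Omega \subset \C^d$ with $\R^d \subset \Omega$ into the complex Banach space $\B(\Hil_2 \leftarrow \Hil_1)$; such a map is infinitely complex-differentiable on $\Omega$ (Cauchy integral formula and Cauchy estimates, applied coordinatewise), so all its partial derivatives exist and are continuous. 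Restricting to the real subspace $\R^d$, which lies in the interior of $\Omega$, shows that $\kk \mapsto W_\kk$ is $C^\infty$, so $W$ is smooth.

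The substance is (1), and here \emph{finite propagation} is essential. The naive formal derivative of $W_\kk = v_2(\kk) W v_1(-\kk)$ involves the unbounded position operators $D_j^{(i)} = \int_{\R^d} x_i \, dE_j(\xx)$, and the difference quotients $h^{-1}(v_j(h\ee_i) - I)$ converge to $\ii D_j^{(i)}$ only strongly, not in operator norm, so one cannot differentiate directly in the Banach space $\B(\Hil_2 \leftarrow \Hil_1)$. I would introduce the derivation $\delta_i(W) := \ii\bigl(D_2^{(i)} W - W D_1^{(i)}\bigr)$, a priori only densely defined, and prove the central lemma: if $W$ has finite propagation $R$, then $\delta_i(W)$ extends to a bounded operator, again with finite propagation $R$, and $\|\delta_i(W)\| \le C\|W\|$ for a constant $C = C(R,d)$ independent of $W$. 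The finite-propagation claim for $\delta_i(W)$ is immediate, since $E_2(A_2) D_2^{(i)} = \bigl(\int_{A_2} y_i\, dE_2\bigr) E_2(A_2)$ and likewise for $E_1$, so the vanishing of $E_2(A_2) W E_1(A_1)$ propagates to $\delta_i(W)$.

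To prove boundedness I would fix a grid of half-open cubes $Q_{\mathbf n} = \prod_i [n_i \epsilon, (n_i+1)\epsilon)$, $\mathbf n \in \Z^d$, with associated orthogonal projections $P_j^{(\mathbf n)} = E_j(Q_{\mathbf n})$ summing to the identity. Finite propagation forces $P_2^{(\mathbf m)} W P_1^{(\mathbf n)} = 0$ unless $\mathrm{dist}(Q_{\mathbf m}, Q_{\mathbf n}) \le R$, so only finitely many displacement classes $\mathbf s = \mathbf m - \mathbf n$ contribute. On each surviving block, writing $D_2^{(i)}$ and $D_1^{(i)}$ as their cube-centers plus fluctuations of norm $\le \epsilon$ gives $\|P_2^{(\mathbf m)} \delta_i(W) P_1^{(\mathbf n)}\| \le R' \|P_2^{(\mathbf m)} W P_1^{(\mathbf n)}\| \le R'\|W\|$ with $R' = R + O(\epsilon)$. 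Grouping terms by displacement $\mathbf s$ produces finitely many operators whose domains $P_1^{(\mathbf n)}\Hil_1$ and ranges $P_2^{(\mathbf n + \mathbf s)}\Hil_2$ are, for each fixed $\mathbf s$, mutually orthogonal; for such block-diagonal-with-shift operators the norm equals the supremum of the block norms, so each group has norm $\le R'\|W\|$, and summing over the finitely many $\mathbf s$ yields $\|\delta_i(W)\| \le C\|W\|$. The same orthogonal bookkeeping shows that on each block the difference quotient $h^{-1}\bigl(v_2(h\ee_i) W v_1(-h\ee_i) - W\bigr)$ converges uniformly, because there the coordinates $x_i, y_i$ range over bounded cubes where $h^{-1}(e^{\ii h x_i}-1) \to \ii x_i$ uniformly; this gives genuine operator-norm differentiability $\partial_{k_i} W_\kk = \bigl(\delta_i(W)\bigr)_\kk$. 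This Cotlar--Stein-type orthogonality estimate, exploiting that finite propagation localizes the position operators to bounded ranges, is the main obstacle.

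With the lemma in hand the conclusion is routine. Since $\delta_i(W)$ again has finite propagation $R$, I can iterate: every multi-index derivative satisfies $\partial^\alpha W_\kk = (\delta^\alpha W)_\kk$ with $\|\partial^\alpha W_\kk\| = \|\delta^\alpha W\| \le C^{|\alpha|}\|W\|$, the norm being preserved under conjugation by the unitaries $v_j(\kk)$. The Taylor series of $\kk \mapsto W_\kk$ about any base point is then dominated by $\|W\| \sum_n C^n \bigl(\sum_i |h_i|\bigr)^n / n! = \|W\| \exp\bigl(C \sum_i |h_i|\bigr)$, which converges for every complex displacement $h \in \C^d$. Hence $\kk \mapsto W_\kk$ extends to an entire $\B(\Hil_2 \leftarrow \Hil_1)$-valued function, in particular to a holomorphic map on a domain containing $\R^d$, so $W$ is analytic.
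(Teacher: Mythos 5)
Your treatment of items (2) and (3) matches the paper, which simply declares them trivial; invoking vector-valued Cauchy theory for (2) is fine. For item (1) your route is genuinely different from the paper's, though it rests on the same geometric core. The paper decomposes $W$ over the grid of unit cubes $C_{\xx}$, groups the blocks $E_2(C_{\xx+\yy})\,W\,E_1(C_{\xx})$ by displacement $\yy$ (finitely many by finite propagation), and then produces the holomorphic extension \emph{directly}: shifting the spectral integrals to the base cube $C_{\mathbf{0}}$ turns each block of $W_\kk$ into $\exp(\ii\yy\cdot\kk)$ times integrals over the bounded cube $C_{\mathbf{0}}$ of entire integrands, so each displacement component is manifestly entire, and orthogonality of the blocks (uniform bounds independent of $\xx$) lets one sum them to an entire $\B(\Hil_2\leftarrow\Hil_1)$-valued map. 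You instead introduce the derivation $\delta_i(W)=\ii\bigl(D_2^{(i)}W-WD_1^{(i)}\bigr)$, prove $\|\delta^\alpha W\|\le C^{|\alpha|}\|W\|$ with propagation preserved, identify $\partial^\alpha W_\kk=(\delta^\alpha W)_\kk$, and sum the Taylor series. Both arguments use the identical tricks (cube grid, finitely many displacement classes, orthogonal block summation, cube-center-plus-fluctuation splitting); the paper's version is shorter because holomorphy is read off from an explicit formula, while yours is longer but yields quantitative output the paper does not state: the extension is entire of exponential type, with $\|\partial^\alpha W_\kk\|\le C^{|\alpha|}\|W\|$ uniformly in $\kk$, and the lemma that $\delta_i$ preserves finite propagation and is bounded on finite-propagation operators is of independent interest.

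One step of yours needs repair, although the tool to fix it is already in your own text. You justify operator-norm differentiability by saying that on each block the coordinates range over bounded cubes, where $h^{-1}(e^{\ii h x_i}-1)\to\ii x_i$ uniformly. That is true \emph{per block}, but the cube $Q_{\mathbf{n}}$ sits at distance roughly $|\mathbf{n}|\epsilon$ from the origin, so the per-block error in the difference quotient is of order $|h|\,\sup_{Q_{\mathbf{n}}}x_i^2\,\|W\|$, which is unbounded as $\mathbf{n}$ runs over $\Z^d$; since the orthogonality argument reduces the norm of the full difference quotient to a \emph{supremum over blocks}, per-block uniformity does not suffice. The correct argument is exactly your center-plus-fluctuation device (equivalently, the paper's shift to $C_{\mathbf{0}}$): on the block with displacement $\mathbf{s}$, factor out the scalar phase $e^{\ii h s_i\epsilon}$ coming from the cube centers, so the difference quotient involves only the bounded displacement $|s_i|\epsilon\le R+O(\epsilon)$ and the fluctuation operators of norm at most $\epsilon$; the error is then $O\bigl(|h|\,(R+O(\epsilon))^2\|W\|\bigr)$ uniformly over all blocks, and the supremum tends to $0$ with $h$. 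With that sentence replaced, your proof is complete.
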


\begin{proof}
The second and the third items are trivial. 
We prove the first one.

For $\xx = (x_1, \cdots, x_d) \in \Z^d$, define $C_\xx \subset \R^d$ by the direct product of the semi-open intervals
\[[x_1, x_1 + 1) \times \cdots \times [x_d, x_d + 1).\]
%Define $\Hil_{j, \xx}$ by the image of the orthogonal projection $E_j(C_{\xx})$.
For every morphism $W$ with finite propagation, we have
\begin{eqnarray*}
W 
&=& 
\left( \sum_{\yy \in \Z^d} E_2(C_{\yy}) \right) 
W 
\left( \sum_{\xx \in \Z^d} E_1(C_{\xx}) \right)\\
&=& 
\sum_{\yy \in \Z^d} \sum_{\xx \in \Z^d} E_2(C_{\xx + \yy}) 
W 
E_1(C_{\xx}).
\end{eqnarray*}
Since $W$ has finite propagation, there exists a finite subset $F$ of $\Z^d$ such that 
for every $\yy \in \Z^d \setminus F$, $E_2(C_{\xx + \yy}) W E_1(C_{\xx}) = 0$.
It follows that 
\begin{eqnarray*}
W 
&=& 
\sum_{\yy \in F} \left( \sum_{\xx \in \Z^d} E_2(C_{\xx + \yy}) 
W 
E_1(C_{\xx}) \right).
\end{eqnarray*}
Define $W^{(\yy)}$ by $\sum_{\xx \in \Z^d} E_2(C_{\xx + \yy}) 
W 
E_1(C_{\xx})$.
It suffices to show that for every fixed $\yy \in F$, 
the map
\[\kk \mapsto v_2(\kk) \cdot W^{(\yy)} \cdot v_1(- \kk)\]
can be extended to a holomorphic map defined on $\C^d$.

For every $\xx \in \Z^d$, we have
\begin{eqnarray*}
&&v_2(\kk) \cdot E_2(C_{\xx + \yy}) W E_1(C_{\xx}) \cdot v_1(- \kk)\\
&=&
\int_{\mathbf{t} \in C_{\xx + \yy}} \exp(\ii \mathbf{t} \cdot \kk) d E_2 (\mathbf{t})
\cdot W \cdot  
\int_{\mathbf{s} \in C_{\xx}} \exp(- \ii \mathbf{s} \cdot \kk) d E_1 (\mathbf{s})\\
&=&
\int_{\mathbf{t} \in C_{\mathbf{0}}} \exp(\ii (\mathbf{t} + \xx + \yy) \cdot \kk) 
d E_2 (\mathbf{t} + \xx + \yy)
\cdot W \cdot  
\int_{\mathbf{s} \in C_{\mathbf{0}}} \exp(- \ii (\mathbf{s} + \xx) \cdot \kk) 
d E_1 (\mathbf{s} + \xx)\\
&=&
 \exp(\ii \yy \cdot \kk)
\int_{\mathbf{t} \in C_{\mathbf{0}}} \exp(\ii \mathbf{t} \cdot \kk) 
d E_2 (\mathbf{t} + \xx + \yy)
\cdot W \cdot  
\int_{\mathbf{s} \in C_{\mathbf{0}}} \exp(- \ii \mathbf{s} \cdot \kk) 
d E_1 (\mathbf{s} + \xx).
\end{eqnarray*}
Denote by $W^{(\yy + \xx, \xx)}_\kk$ the right hand side. 
Note that for every $\xx, \yy \in \Z^d$, the three mappings 
\[\exp(\ii \yy \cdot \kk),
\quad
\int_{\mathbf{t} \in C_{\mathbf{0}}} \exp(\ii \mathbf{t} \cdot \kk) 
d E_2 (\mathbf{t} + \xx + \yy),
\quad
\int_{\mathbf{s} \in C_{\mathbf{0}}} \exp(- \ii \mathbf{s} \cdot \kk) 
d E_1 (\mathbf{s} + \xx).\]
defined on $\kk \in \C^d$, are holomorphic.
Therefore
the mapping
$\kk \mapsto W^{(\yy + \xx, \xx)}_\kk$  is a holomorphic map defined on $\C^d$.
Also note that for every $\yy \in F$, there exists a positive function $\rho(\kk), (\kk \in \C^d)$ independent of $\xx$ such that 
\begin{itemize}
\item
The norm of $W^{(\yy + \xx, \xx)}_\kk$ is at most $\rho(\kk)$,
\item
The norms of the partial derivatives of $W^{(\yy + \xx, \xx)}_\kk$ with respect to $\kk = (k_1, \cdots, k_d) \in \C^d$ are at most $\rho(\kk)$.
\end{itemize}
For every $\kk \in \C^d$, the operators $\{W^{(\yy + \xx, \xx)}_\kk\}_{\xx \in \Z^d}$
have mutually orthogonal images, and the orthogonal complements of the kernels of them are also mutually perpendicular.
This means that the 
\[\kk \mapsto v_2(\kk) \cdot W^{(\yy)} \cdot v_1(- \kk) =
\sum_{\xx \in \Z^d} W^{(\yy + \xx, \xx)}_\kk 
\]
can be extended to a holomorphic mapping defined on $\kk \in \C^d$.
\end{proof}

In the study of quantum walks, researchers mainly pay attention on unitary operators with finite propagation, and call them quantum walks.
For the rest of this paper, we study quantum walks using knowledge of functional analysis.
Instead of the condition for having finite propagation,
we make use of three other conditions as functional analytic necessary conditions.

\begin{definition}
A triple $(\Hil, E, (\rho(\xx))_{\xx \in \Z^d})$
is called a Hilbert space with {\rm a right regular representation} of $\Z^d$,
if 
$(\Hil, E)$ is a Hilbert space with a $d$-dimensional  coordinate system and $\rho$ is a unitary representation of $\Z^d$ on $\Hil$ and if the following condition holds:
for every Borel subset $A \subset \R^d$ and for every $\xx \in \Z^d$, $\rho(\xx) E(A) \rho(- \xx) = E(A - \xx)$.
\end{definition}

\section{The categories of quantum walks}
\label{section: category of QWs}

\subsection{Objects}

\begin{definition}
A triple $(\Hil, (U^t)_{t \in \Z}, E)$
is called {\rm a $d$-dimensional quantum walk},
if
$(\Hil, E)$ is a Hilbert space with a $d$-dimensional  coordinate system and $(U^t)_{t \in \Z}$ is a unitary representation of $\Z$ on $\Hil$.
\end{definition}
To emphasize that $U$ is not a representation of $\R$ but of $\Z$, we call it a discrete-time quantum walk.

\begin{definition}
A triple $(\Hil, (U^{(t)})_{t \in \R}, E)$
is called {\rm a $d$-dimensional continuous-time quantum walk},
if
$(\Hil, E)$ is a Hilbert space with a $d$-dimensional coordinate system and $(U^{(t)})_{t \in \R}$ is a unitary representation of $\R$ on $\Hil$
which is continuous with respect to the strong operator topology.
\end{definition}

We can also define quantum walks on Hilbert spaces which is associated to topological spaces other than $\R^d$.
We also define four kinds of regularity for $d$-dimensional quantum walks.

\begin{definition}
Let $(\Hil, (U^{(t)})_{t}, E)$
be a discrete-time or a continuous-time quantum walk.
\begin{enumerate}
\item
The quantum walk is said {\rm to have finite propagation},
if for every $t$, the unitary operator $U^{(t)}$ has finite propagation as a morphism from 
$(\Hil, E)$ to $(\Hil, E)$.
\item
The quantum walk is said to be {\rm analytic (smooth, or uniform)},
if for every $t$, the unitary operator $U^{(t)}$ is analytic (smooth, or uniform, respectively) as a morphism from $(\Hil, E)$ to $(\Hil, E)$.
\end{enumerate}
\end{definition}

For a continuous-time quantum walk,
we may define some notion of regularity
using the generator of the representation $(U^{(t)})_{t \in \R}$.
It seems that such kind of regularity is stronger than above.

\begin{example}
Let $E$ be the standard coordinate system of $\ell_2(\Z) \otimes \C^2$ defined in Example \ref{example: standard coordinate system}.
Define a unitary operator $U$ by
\[U = 
\left(
\begin{array}{cc}
a S & -b S\\
b S^{-1}& a S^{-1}
\end{array}
\right).
\]
In the above, $S \colon \ell_2(\Z) \to \ell_2(\Z)$ be the bilateral shift,
$a, b$ are real numbers satisfying $a^2 + b^2 = 1$.
The dynamical system $(U^t)_{t \in \Z}$ forms a $1$-dimensional discrete-time quantum walk acting on the Hilbert space $\ell_2(\Z) \otimes \C^2$ with the standard coordinate system.
This walk has finite propagation.
\end{example}

\begin{example}
The $1$-parameter family of unitary operators
$(\exp(\ii t (S + S^{-1})))_{t \in \R}$
gives a $1$-dimensional continuous-time quantum walk acting on the Hilbert space $\ell_2(\Z)$. This quantum walk is analytic, although it does not have finite propagation.
\end{example}

\begin{example}
The unitary operator 
$\left(
\begin{array}{cc}
a S & -b S\\
b & a
\end{array}
\right)$
is a $1$-dimensional discrete-time quantum walk acting on the Hilbert space $\ell_2(\Z) \otimes \C^2$ with finite propagation.
\end{example}

\begin{example}
The unitary operator 
$\dfrac{1}{2} \left(
\begin{array}{cc}
S^3 + S & S - S^{-1}\\
S - S^{-1} & S^{-1} + S^{-3}
\end{array}
\right)$
is a $1$-dimensional discrete-time quantum walk acting on the Hilbert space $\ell_2(\Z) \otimes \C^2$ with finite propagation.
\end{example}

\begin{example}\label{example: 4-state Grover walk}.
The unitary operator 
$\dfrac{1}{2}
\left(
\begin{array}{ccccc}
S^{-3} &    0    &   0  & 0       \\
0       & S^{-1} &   0  & 0 \\
0       &    0     &  S  & 0       \\
0       &    0     &  0  & S^{3}
\end{array}
\right)
\left(
\begin{array}{ccccc}
1 & -1 & -1 & -1 \\
-1 & 1 & -1 & -1 \\
-1 & -1 & 1 & -1 \\
-1 & -1 & -1 & 1 \\
\end{array}
\right)$
is a $1$-dimensional discrete-time quantum walk acting on the Hilbert space $\ell_2(\Z) \otimes \C^4$ with finite propagation.
We will call it the $1$-dimensional $4$-state Grover walk.
\end{example}

\begin{example}
Let
$
u_x =
\left(
\begin{array}{cc}
a_x & b_x\\
c_x & d_x
\end{array}
\right), x \in \Z
$
be a two-sided infinite sequence of complex unitary matrices.
Define a unitary operator $U$ on $\ell_2(\Z) \otimes \C^2$ as follows:
\begin{eqnarray*}
U (\delta_x \otimes \delta_1) &=& a_x \delta_{x - 1} \otimes \delta_1 + c_x \delta_{x + 1} \otimes \delta_2,\\
U (\delta_x \otimes \delta_2) &=& b_x \delta_{x - 1} \otimes \delta_1 + d_x \delta_{x + 1} \otimes \delta_2, \quad x \in \Z.
\end{eqnarray*}
Then $(U^t)_{t \in \Z}$ gives a $1$-dimensional discrete-time quantum walk with finite propagation.
\end{example}

\begin{example}
Let $E$ be the $2$-dimensional standard coordinate system of $\ell_2(\Z^2) \otimes \C^2$ defined in Example \ref{example: standard coordinate system}.
Denote by $S_{\rm r} \colon \ell_2(\Z^2) \to \ell_2(\Z^2)$ be the bilateral shift to the right.
Denote by $S_{\rm u} \colon \ell_2(\Z^2) \to \ell_2(\Z^2)$ be the upward bilateral shift.
The unitary operator
\[\dfrac{1}{2}
\left(
\begin{array}{ccccc}
S_{\rm r} &    0    &   0  & 0       \\
0       & S_{\rm r}^{-1} &   0  & 0 \\
0       &    0     &  S_{\rm u}  & 0       \\
0       &    0     &  0  & S_{\rm u}^{-1}
\end{array}
\right)
\left(
\begin{array}{ccccc}
1 & -1 & -1 & -1 \\
-1 & 1 & -1 & -1 \\
-1 & -1 & 1 & -1 \\
-1 & -1 & -1 & 1 \\
\end{array}
\right)
\]
defines a discrete-time $2$-dimensional quantum walks with finite propagation acting on $\ell_2(\Z^2) \otimes \C^4$.
\end{example}

\begin{example}
The unitary operator
$U
=
\dfrac{1}{2}
\left(
\begin{array}{cc}
S_{\rm r} + S_{\rm u} & - S_{\rm r}^{-1} + S_{\rm u}^{-1} \\ 
S_{\rm r} - S_{\rm u} & S_{\rm r}^{-1} + S_{\rm u}^{-1}
\end{array}
\right)
$
defines a discrete-time $2$-dimensional quantum walks with finite propagation acting on $\ell_2(\Z^2) \otimes \C^2$.
\end{example}

\begin{example}\label{example: direct sum of QWs}
Let $(\Hil_1, (U_1^t)_{t \in \Z}, E_1)$
and 
$(\Hil_2, (U_2^t)_{t \in \Z}, E_2)$
be $d$-dimensional quantum walks.
Then the direct sum
$(\Hil_1 \oplus \Hil_2, (U_1^t \oplus U_2^t)_{t \in \Z}, E_1 \oplus E_2)$
is also a $d$-dimensional quantum walk.
The four kinds of regularity are preserved under this manipulation.
\end{example}

Every quantum walk $(\Hil, (U^{(t)})_{t \in \Z}, E)$ forms an object in our new category.

\subsection{Morphisms in the category}

\begin{definition}
Let $(\Hil_1, (U_1^t)_{t \in \Z}, E_1)$
and 
$(\Hil_2, (U_2^t)_{t \in \Z}, E_2)$
be $d$-dimensional discrete-time quantum walks.
Let $T$ be a bounded linear operator from $\Hil_1$ to $\Hil_2$.
The operator $T$ is called {\rm an intertwiner (or a morphism)} from
$(\Hil_1, (U_1^t)_{t \in \Z}, E_1)$
to 
$(\Hil_2, (U_2^t)_{t \in \Z}, E_2)$,
if $T U_1 = U_2 T$ holds.
\end{definition}

The identity operator on $\Hil_1$ is an intertwiner from
$(\Hil_1, (U_1^t)_{t \in \Z}, E_1)$
to 
$(\Hil_1, (U_1^t)_{t \in \Z}, E_1)$
Let $T_1$ be an intertwiner (or a morphism) from
$(\Hil_1, (U_1^t)_{t \in \Z}, E_1)$
to 
$(\Hil_2, (U_2^t)_{t \in \Z}, E_2)$.
Let $T_2$ be an intertwiner (or a morphism) from
$(\Hil_2, (U_2^t)_{t \in \Z}, E_2)$
to 
$(\Hil_3, (U_3^t)_{t \in \Z}, E_3)$.
Then $T_2 T_1$ is an intertwining operator (or a morphism) from
$(\Hil_1, (U_1^t)_{t \in \Z}, E_1)$
to 
$(\Hil_3, (U_3^t)_{t \in \Z}, E_3)$.
It follows that $d$-dimensional discrete-time quantum walks and morphisms between them form a category.

It is straightforward to show the following lemma.

\begin{lemma}
If $T$ is an intertwiner(or a morphism) from
$(\Hil_1, (U_1^t)_{t \in \Z}, E_1)$
to 
$(\Hil_2, (U_2^t)_{t \in \Z}, E_2)$,
then
\begin{itemize}
\item
$T^*$ is an intertwiner from $(\Hil_2, (U_2^t)_{t \in \Z}, E_2)$ to $(\Hil_1, (U_1^t)_{t \in \Z}, E_1)$,
\item
$T^* T$ and $|T|$ is an intertwiner from 
$(\Hil_1, (U_1^t)_{t \in \Z}, E_1)$ to $(\Hil_1, (U_1^t)_{t \in \Z}, E_1)$.
\item
The partial isometry $V$ in the polar decomposition $T = |T| V$ is an intertwiner from
$(\Hil_1, (U_1^t)_{t \in \Z}, E_1)$
to 
$(\Hil_2, (U_2^t)_{t \in \Z}, E_2)$.
\end{itemize}
\end{lemma}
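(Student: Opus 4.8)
The plan is to read each assertion against the definition of morphism just given: a bounded operator is an intertwiner between the quantum walks $(\Hil_1,(U_1^t)_{t\in\Z},E_1)$ and $(\Hil_2,(U_2^t)_{t\in\Z},E_2)$ exactly when it intertwines $U_1$ and $U_2$, the coordinate systems $E_1,E_2$ being part of the objects but imposing no condition on morphisms. Thus for each of $T^*$, $T^*T$, $|T|$ and $V$ I only need to exhibit the single intertwining identity appropriate to the source and target triples named in the statement; the $E_j$ are then carried along automatically. I would treat the four operators in order of increasing difficulty.

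For $T^*\colon\Hil_2\to\Hil_1$ I would take adjoints in the hypothesis $TU_1=U_2T$, obtaining $U_1^*T^*=T^*U_2^*$, and then use that $U_1,U_2$ are unitary, so $U_j^*=U_j^{-1}$, to rearrange this into $T^*U_2=U_1T^*$. This is precisely the condition for $T^*$ to be a morphism from $(\Hil_2,(U_2^t)_{t\in\Z},E_2)$ to $(\Hil_1,(U_1^t)_{t\in\Z},E_1)$. For $T^*T\colon\Hil_1\to\Hil_1$ I would compose the two relations already in hand, $(T^*T)U_1=T^*(TU_1)=T^*(U_2T)=(T^*U_2)T=(U_1T^*)T=U_1(T^*T)$, which shows $T^*T$ intertwines $U_1$ with itself and hence is a morphism from $(\Hil_1,(U_1^t)_{t\in\Z},E_1)$ to itself. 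Both steps are purely algebraic.

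The two remaining operators need more than algebra. For $|T|=(T^*T)^{1/2}$, which again acts on $\Hil_1$, I would argue from uniqueness of the positive square root: since $U_1$ is unitary and, by the previous step, commutes with $T^*T$, conjugation by $U_1$ is a $*$-automorphism of $\B(\Hil_1)$, so $U_1|T|U_1^{-1}$ is a positive operator whose square is $U_1(T^*T)U_1^{-1}=T^*T$; by uniqueness of the positive square root it equals $|T|$, giving $U_1|T|=|T|U_1$ and hence that $|T|$ is a morphism from $(\Hil_1,(U_1^t)_{t\in\Z},E_1)$ to itself. (Equivalently one may invoke the continuous functional calculus, noting that an operator commuting with $T^*T$ commutes with every continuous function of it.) For the partial isometry $V$ in the polar decomposition of $T$, characterized by $V|T|=T$ and by $V$ vanishing on $\ker T$ (so that $V\colon\Hil_1\to\Hil_2$ has initial space $\overline{\mathrm{ran}\,|T|}=(\ker T)^\perp$ and final space $\overline{\mathrm{ran}\,T}$), I would compute, for $\xi\in\Hil_1$, $VU_1(|T|\xi)=V(|T|U_1\xi)=TU_1\xi=U_2T\xi=U_2V(|T|\xi)$, where the first equality uses $U_1|T|=|T|U_1$. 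Since $U_1$ commutes with $|T|$ it preserves both $\overline{\mathrm{ran}\,|T|}$ and $\ker T$, so $VU_1$ and $U_2V$ agree on $\mathrm{ran}\,|T|$ and on $\ker T$, hence on all of $\Hil_1$; this gives $VU_1=U_2V$, making $V$ a morphism from $(\Hil_1,(U_1^t)_{t\in\Z},E_1)$ to $(\Hil_2,(U_2^t)_{t\in\Z},E_2)$.

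I expect the main obstacle to be exactly the passage from ``$U_1$ commutes with $T^*T$'' to the assertions about $|T|$ and $V$. For $|T|$ this is where I must use a genuine operator-theoretic fact — uniqueness of the positive square root, or the functional calculus — rather than bare multiplication of the intertwining relations; for $V$ it is where I must check that $U_1$ preserves the initial subspace $\overline{\mathrm{ran}\,|T|}$ and the kernel $\ker T$, so that the intertwining relation, verified first on $\mathrm{ran}\,|T|$, extends to the whole Hilbert space. Everything else is formal, and at no point do the coordinate systems $E_1,E_2$ enter the verification beyond fixing the source and target objects of each morphism.
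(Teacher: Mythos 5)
Your proof is correct. Note that the paper itself offers no argument here --- it states the lemma with only the remark that it is straightforward --- and what you have written is exactly the standard verification the author intends: adjoints plus unitarity for $T^*$, composition for $T^*T$, uniqueness of the positive square root (equivalently, functional calculus) for $|T|$, and the piecewise check on $\overline{\mathrm{ran}\,|T|}$ and $\ker T$ for $V$. You were also right to insist on the two genuinely operator-theoretic steps: commutation with $T^*T$ does not formally give commutation with $|T|$, and the relation $VU_1=U_2V$ must be verified on $\mathrm{ran}\,|T|$ and extended by continuity and by the kernel condition, since $V$ is only a partial isometry.

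One point worth flagging: the paper writes the polar decomposition as $T=|T|V$, which is ill-typed if, as the second bullet requires, $|T|=(T^*T)^{1/2}$ acts on $\Hil_1$ while $V\colon\Hil_1\to\Hil_2$; the composition $|T|V$ then does not parse. You silently used the standard convention $T=V|T|$ with $V$ vanishing on $\ker T$, which is the correct reading of the statement (the paper's formula is evidently a typo), and under that reading your argument establishes precisely the claimed intertwining $VU_1=U_2V$.
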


In the above definition, 
morphisms have no relation with the coordinates $E_j$ of the Hilbert spaces $\Hil_j$.
For the rest of this paper, we focus on morphisms satisfying regularity conditions
defined in the following definition.

\begin{definition}
Let $T$ be an intertwiner (or a morphism) from
$(\Hil_1, (U_1^t)_{t \in \Z}, E_1)$
to 
$(\Hil_2, (U_2^t)_{t \in \Z}, E_2)$.
\begin{enumerate}
\item
The intertwining operator $T$ is said {\rm to have finite propagation},
if $T$ has finite propagation as a morphism from $(\Hil_1, E_1)$ to $(\Hil_2, E_2)$.
\item
The intertwining operator $T$ is said to be {\rm analytic (smooth, or uniform)},
if $T$ is analytic (smooth, or uniform, respectively) morphism from $(\Hil_1, E_1)$ to $(\Hil_2, E_2)$.
\end{enumerate}
\end{definition}

\subsection{Our new categories}
For every $d \in \N$, we have
the category $\mathcal{C}_{d, \textrm{DTQW}}$ whose objects are $d$-dimensional discrete-time quantum walks and morphisms are intertwiners.
We often erase objects and morphisms which do not satisfy regularity.
We have defined the four families of objects and four kinds of morphisms
Thus we obtain sixteen categories.
We will mainly pay attention on the category consisting of analytic quantum walks
and of smooth intertwiners.
We can not find geometric meaning of 
quantum walks without regularity nor
intertwiners without regularity

\subsection{Similarity between two $d$-dimensional quantum walks}

\begin{definition}
Let $U_1 = (\Hil_1, (U_1^{(t)})_{t}, E_1)$, $U_2 = (\Hil_2, (U_2^{(t)})_{t}, E_2)$ be two discrete-time or continuous-time $d$-dimensional quantum walks.
If there exists a smooth invertible intertwiner $T \colon \Hil_1 \to \Hil_2$ between $U_1$ and $U_2$ (or equivalently, a smooth isomorphism between two objectis $U_1$ and $U_2$), then we say that these two quantum walks are similar.
\end{definition}

It is not hard to see that similarity defines an equivalence relation.

\begin{lemma}\cite[Lemma2.13]{SakoMultidimQW}
If two  discrete-time $d$-dimensional quantum walks are similar,
then there exists a smooth intertwiner between them which is unitary.
\end{lemma}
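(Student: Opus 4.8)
The plan is to push the polar decomposition through the smooth category. Write the given smooth invertible intertwiner $T \colon \Hil_1 \to \Hil_2$ in polar form $T = V |T|$, with $|T| = (T^*T)^{1/2} \in \B(\Hil_1)$ and $V$ the polar partial isometry; this is exactly the decomposition analysed in the preceding lemma, which already records that $V$ is an intertwiner from $U_1$ to $U_2$ and that $|T|$ is a (positive) intertwiner from $U_1$ to $U_1$. Since $T$ is invertible, $|T|$ is positive and invertible and $V = T |T|^{-1}$, so the partial isometry $V$ is in fact unitary. Thus the only thing left to establish is that $V$ is a \emph{smooth} morphism.

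To obtain smoothness of $V = T |T|^{-1}$ it suffices, by the composition rule for smooth morphisms recorded after Definition \ref{definition: regularity}, to show that $T$ and $|T|^{-1}$ are smooth. Smoothness of $T$ is the hypothesis. For the positive part I would first note that $T^*$ is smooth: since $v_j(\kk)^* = v_j(-\kk)$ one has $(T^*)_\kk = (T_\kk)^*$, and taking adjoints is a conjugate-linear isometry of $\B(\,\cdot\,)$ that commutes with real partial differentiation, so $\kk \mapsto (T^*)_\kk$ is smooth. Hence $A := T^*T$ is a composition of smooth morphisms and is a smooth morphism from $(\Hil_1, E_1)$ to itself. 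It then remains to pass smoothness through the maps $z \mapsto z^{1/2}$ and $z \mapsto z^{-1/2}$ to reach $|T|$ and $|T|^{-1}$.

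This functional-calculus step is the crux, and I would handle it with the Riesz integral. The essential observation is that conjugation by the unitary $v_1(\kk)$ commutes with the continuous functional calculus, so $(|T|^{-1})_\kk = (A_\kk)^{-1/2}$, where $A_\kk = v_1(\kk) A v_1(-\kk)$. Because $A_\kk$ is unitarily conjugate to $A$, its spectrum is independent of $\kk$, and by invertibility of $T$ it lies in a fixed compact set $[\varepsilon, \|T\|^2] \subset (0,\infty)$. Hence a single contour $\Gamma$ encircling this set in $\C \setminus (-\infty, 0]$ serves for all $\kk$, and I would write
\[
(A_\kk)^{-1/2} \;=\; \frac{1}{2\pi \ii}\oint_{\Gamma} z^{-1/2}\,(z - A_\kk)^{-1}\,dz .
\]
Differentiating the resolvent under the integral sign via $\partial_{k_j}(z - A_\kk)^{-1} = (z - A_\kk)^{-1}\,(\partial_{k_j}A_\kk)\,(z - A_\kk)^{-1}$ and iterating shows that $\kk \mapsto (A_\kk)^{-1/2}$ has partial derivatives of every order, each a convergent contour integral over the fixed $\Gamma$; the same argument with $z^{1/2}$ handles $|T|$ itself. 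This yields smoothness of $|T|^{-1}$, and therefore of $V = T|T|^{-1}$.

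The main obstacle is precisely this last step: confirming that the operator (inverse) square root of a smooth $\kk$-dependent family of positive invertible operators is again smooth in $\kk$. The two facts that make it tractable are that the conjugation $X \mapsto v_1(\kk) X v_1(-\kk)$ leaves the spectrum unchanged, so a single contour can be fixed once and for all, and that the resolvent depends smoothly on the operator with derivatives expressible through the resolvent identity. Everything else, namely the intertwining property and unitarity of $V$, is immediate from the preceding lemma together with the invertibility of $T$.
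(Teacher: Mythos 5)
Your proof is correct, and note that this lemma is one the paper does not prove at all --- it is imported from \cite{SakoMultidimQW} --- so there is no internal argument to compare against; your write-up serves as a legitimate self-contained substitute. The route you take is the standard one for statements of this type: polar decomposition, then pushing smoothness through the operator square root via a fixed Riesz contour. Each step checks out: $(T^*)_\kk = (T_\kk)^*$ so $T^*$ is smooth; $A_\kk = (T^*)_\kk\, T_\kk$ so $A = T^*T$ is smooth; conjugation by the unitary $v_1(\kk)$ commutes with functional calculus, so $(|T|^{-1})_\kk = (A_\kk)^{-1/2}$; and since $A_\kk$ is self-adjoint with $\kk$-independent spectrum contained in $[\varepsilon, \|T\|^2]$, the resolvent bound $\|(z - A_\kk)^{-1}\| = 1/\mathrm{dist}\bigl(z, \mathrm{spec}(A)\bigr)$ is uniform over $z \in \Gamma$ and over $\kk$, which is exactly what justifies differentiating the contour integral to all orders. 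One cosmetic remark: the paper's preceding lemma states the polar decomposition as $T = |T|V$ rather than $T = V|T|$, so rather than citing it for the unitarity and intertwining of $V$, it is cleaner to argue as you in effect do: $|T|$ commutes with $U_1$ and is invertible, hence so does $|T|^{-1}$, and $V = T|T|^{-1}$ is then both an intertwiner (composition of intertwiners) and a smooth morphism (composition of smooth morphisms), while unitarity follows since $V$ is an invertible isometry.
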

%
%
%\begin{lemma}(\cite[Lemma2.8]{SakoMultidimQW})
%Let $(\Hil_1, E_1)$ and $(\Hil_2, E_2)$ be Hilbert spaces with $d$-dimensional coordinate systems.
%Let $T \colon \Hil_1 \to \Hil_2$ be an invertible bounded linear operator which is smooth with respect to $E_1$ and $E_2$.
%Then $T^{-1} \colon \Hil_1 \to \Hil_2$ is smooth with respect to $E_2$ and $E_1$. 
%\end{lemma}
%
%
%\begin{lemma}(\cite[Lemma2.9]{SakoMultidimQW})
%Let $X$ be an invertible positive operator on a Hilbert space $(\Hil, E)$ with a $d$-dimensional coordinate system.
%If $X$ is smooth with respect to $E$,
%then $\sqrt{X}$ is also smooth.
%\end{lemma}

H. Ohno has already used the notion of {\it unitary equivalence} between two quantum walks on countable sets (\cite[Definition 8]{Ohno}).
The condition for unitary equivalence is stronger than similarity.
If two quantum walks $U_1 = (\Hil_1, (U_1^{(t)})_{t}, E_1)$, 
$U_2 = (\Hil_2, (U_2^{(t)})_{t}, E_2)$  are unitary equivalent,
then there exists a unitary operator $T \colon \Hil_1 \to \Hil_2$
which preserves the support of 
vectors, namely
\[\mathrm{supp}(\mu_\xi) = \mathrm{supp}(\mu_{T \xi}), 
\quad \xi \in \Hil_1.\]
It implies that $T$ has finite propagation.
If two quantum walks are unitary equivalent, then they are similar.

\section{Asymptotic behavior of $d$-dimensional smooth quantum walks}
\label{section: asymptotic behavior}

Let $U = (\Hil, (U^t)_{t \in \Z}, E)$ be a $d$-dimensional discrete-time smooth quantum walk.
For a unit vector $\xi$, the coordinate system $\Hil$ on $E$ defines the sequence of the Borel probability measures $\{ \mu_{U^t \xi} \}_{t \in \N}$ on $\R^d$ by the equation
\[\mu_{U^t \xi} (\Omega) = \langle E(\Omega) U^t \xi, U^t \xi \rangle.\]
The dynamical system by $U$ defines the sequence of unit vectors 
$\{U \xi, U^2 \xi, \cdots \}$.
The coordinate system $E$ defines the sequence of probability measures
$\{\mu_{U \xi}, \mu_{U^2 \xi}, \cdots \}$.
In the case that $E$ describes observables of position, 
each probability measure $\mu_{U^t \xi}$ describes the distribution of position of particles whose state is described by $U^t \xi$.

\begin{lemma}\label{lemma: smoothness and integrability}
Suppose that $\xi \in \Hil$ is smooth with respect to $E$.
Then for every $t \in \N$, and every real-valued polynomial function is integrable with respect to $\mu_{U^t \xi}$.
\end{lemma}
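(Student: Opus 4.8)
The plan is to transport the smoothness hypothesis on $\xi$ through the dynamics and then read off integrability from Lemma~\ref{lemma: smoothness for vectors}.

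First I would check that, for every $t \in \N$, the unitary $U^t$ is a \emph{smooth morphism} from $(\Hil, E)$ to itself. By hypothesis the quantum walk $U$ is smooth, so $U = U^{(1)}$ is a smooth morphism; the excerpt already records (in the list of claims after Definition~\ref{definition: regularity}) that a composition of smooth morphisms is again smooth, so $U^t$, the $t$-fold composition with $U^0 = I$, is smooth for every $t \in \N$. Invoking another of those claims---that a smooth morphism carries a smooth vector to a smooth vector---together with the assumption that $\xi$ is smooth, I conclude that $U^t \xi$ is a smooth vector for every $t \in \N$.

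Next I would apply Lemma~\ref{lemma: smoothness for vectors} to $U^t \xi$. Smoothness of $U^t \xi$ is equivalent to $U^t \xi$ lying in the domain of $\int_{\R^d} p(\xx)\, dE(\xx)$ for every real-valued monomial $p$, and membership in the domain of this self-adjoint operator is precisely the condition $\int_{\R^d} |p(\xx)|^2\, d\mu_{U^t\xi}(\xx) < \infty$. Thus every monomial lies in $L^2(\mu_{U^t\xi})$.

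Finally, $\mu_{U^t\xi}$ is a finite measure, of total mass $\|U^t\xi\|^2 = \|\xi\|^2$, so $L^2(\mu_{U^t\xi}) \subseteq L^1(\mu_{U^t\xi})$; explicitly, Cauchy--Schwarz against the constant function $1$ gives $\int_{\R^d} |p|\, d\mu_{U^t\xi} \le \|1\|_2\, \|p\|_2 < \infty$ for each monomial $p$. Since any real polynomial is a finite linear combination of monomials and a finite sum of integrable functions is integrable, every real-valued polynomial is integrable with respect to $\mu_{U^t\xi}$. I do not expect a genuine obstacle here: the argument is a bookkeeping combination of the smoothness-preservation properties with Lemma~\ref{lemma: smoothness for vectors}. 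The only point deserving a moment's care is that smoothness supplies \emph{square}-integrability of monomials whereas the assertion is about plain integrability of polynomials; this gap is closed by the inclusion $L^2 \subseteq L^1$ valid for the finite measure $\mu_{U^t\xi}$, together with linearity over the finitely many monomial terms of a polynomial.
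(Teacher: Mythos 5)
Your proposal is correct and follows essentially the same route as the paper: both pass smoothness through the dynamics to $U^t\xi$, invoke Lemma~\ref{lemma: smoothness for vectors} to get square-integrability of monomials, and then deduce integrability of polynomials. The only cosmetic difference is in the last step, where the paper dominates $|q|$ pointwise by a sum of squared monomials while you use the inclusion $L^2(\mu_{U^t\xi}) \subseteq L^1(\mu_{U^t\xi})$ for the finite measure via Cauchy--Schwarz; these are interchangeable.
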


\begin{proof}
Let $p$ be an arbitrary real-valued monomial function.
Since $U^t \xi$ is smooth, by Lemma 
\ref{lemma: smoothness for vectors},
$U^t \xi$ is in the domain of $\int_{\R^d} p(\xx) d E(\xx)$.
We have
\begin{eqnarray*}
\infty > 
\left\|
\int_{\R^d} p(\xx) d E(\xx) \cdot U ^t \xi
\right\|^2
= 
\int_{\R^d} p(\xx)^2 \|d E(\xx) U^t \xi\|^2
= 
\int_{\R^d} p(\xx)^2 d \langle E(\xx) U^t \xi , U^t \xi \rangle.
\end{eqnarray*}
It follows that $p^2$ is integrable with respect to $\mu_{U^t \xi}$.
For every polynomial function, its absolute value is dominated by a sum of functions of the form $p^2$.
\end{proof}

It has been already known that for a many quantum walks,
the mean of $\|\xx\|$ with respect to $\mu_{U^t \xi}$ increases linearly
(see e.g.~\cite{KonnoJMSJ}).
We now consider the following sequence $\{ \nu_{U, t, \xi} \}_{t \in \N}$
of probability measures on $\R^d$:
\[\nu_{U, t, \xi} (\Omega) 
= \mu_{U^t \xi} (t\Omega) 
= \langle E(t \Omega) U^t \xi, U^t \xi \rangle.\]
This is the push-forward of the measure $\mu_{U^t \xi}$
along the mapping $\xx \mapsto \xx/t$ on $\R^d$.
In the case that $E$ describes observables of position, 
the measure $\nu_{U, t, \xi}$ is something like the distribution of velocity.
The sequence $\{ \nu_{U, t, \xi} \}_{t \in \N}$ asymptotically has a compact support.
\begin{theorem}\label{theorem: asymptotically having a compact support}
Suppose that a $d$-dimensional discrete-time smooth quantum walk 
$(\Hil, (U^t)_{t \in \Z}, E)$ is smooth.
Then there exists a bounded closed subset $\Omega$ in $\R^d$ such that
for {\rm every} unit vector $\xi$ in $\Hil$,
\[\lim_{t \to \infty} \nu_{U, t, \xi}(\Omega) = 1.\]
\end{theorem}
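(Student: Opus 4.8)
The plan is to reduce the statement to an operator-norm estimate on how fast the position observables $D_j = \int_{\R^d} x_j\, dE(\xx)$ move under the Heisenberg evolution $D_j \mapsto U^{-t} D_j U^t$, and then convert that estimate into concentration of the velocity measures $\nu_{U,t,\xi}$ by a high-moment Chebyshev argument. First I would extract commutator bounds from smoothness. By the computation underlying Lemma \ref{lemma: smoothness for vectors} one has $\partial_{k_j} v_E(\kk)\xi = \ii D_j v_E(\kk)\xi$ on smooth vectors, so differentiating $U_\kk = v_E(\kk)\, U\, v_E(-\kk)$ at $\kk=\mathbf{0}$ gives $\partial_{k_j} U_\kk|_{\kk=\mathbf{0}} = \ii[D_j,U]$, and by iteration $\partial_{k_j}^k U_\kk|_{\kk=\mathbf{0}} = \ii^k \mathrm{ad}_{D_j}^k(U)$, where $\mathrm{ad}_{D_j}(X)=[D_j,X]$. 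Since $U$ is smooth, $\kk\mapsto U_\kk$ is $C^\infty$ into $\B(\Hil)$, so every iterated commutator $C_{j,k}:=\mathrm{ad}_{D_j}^k(U)$ is a bounded operator; write $\rho_j = \|C_{j,1}\| = \|[D_j,U]\|$. The identity $v_E(\kk)\, U^t\xi = U_\kk^t\, v_E(\kk)\xi$ then shows that each $U^t$ carries smooth vectors to smooth vectors, and in particular carries compactly supported vectors (which are smooth, as $v_E(\kk)\xi'$ extends to an entire $\Hil$-valued function when $\supp(\mu_{\xi'})$ is compact) to smooth vectors.

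Next I would turn boundedness of the commutators into a moment bound. Because $\mathrm{ad}_{D_j}$ is a derivation, $\mathrm{ad}_{D_j}^k(U^t) = \sum_{k_1+\cdots+k_t=k}\binom{k}{k_1,\ldots,k_t}C_{j,k_1}\cdots C_{j,k_t}$, and counting the compositions shows $\|\mathrm{ad}_{D_j}^k(U^t)\|$ is dominated by a polynomial in $t$ of degree $k$ whose leading coefficient is $\rho_j^k$ (the top-degree term comes from the $\binom{t}{k}k!$ compositions with $k$ entries equal to $1$). The same derivation rule yields, on smooth vectors, the identity
\[ D_j^m\, U^t \;=\; \sum_{l=0}^m \binom{m}{l}\, \mathrm{ad}_{D_j}^l(U^t)\, D_j^{m-l}. \]
If $\xi'$ is a unit vector with $\supp(\mu_{\xi'})\subseteq\{\xx:|x_j|\le N\}$, then $\|D_j^{m-l}\xi'\|\le N^{m-l}$, so for each fixed $m$ the $l=m$ summand dominates as $t\to\infty$ and $\limsup_{t\to\infty} t^{-m}\|D_j^m U^t\xi'\| \le \rho_j^m$.

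Then I would pass from moments to concentration and finish by density. Since $D_j$ is self-adjoint and $U^t\xi'$ is smooth, $\|D_j^m U^t\xi'\|^2 = \int_{\R^d} x_j^{2m}\, d\mu_{U^t\xi'}$, so Chebyshev's inequality gives
\[ \nu_{U,t,\xi'}(\{|x_j|>R\}) = \mu_{U^t\xi'}(\{|x_j|>Rt\}) \le (Rt)^{-2m}\,\|D_j^m U^t\xi'\|^2, \]
whence $\limsup_{t\to\infty}\nu_{U,t,\xi'}(\{|x_j|>R\}) \le (\rho_j/R)^{2m}$. Choosing $R>\rho_j$ and letting $m\to\infty$ forces this limit to be $0$. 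Taking $\Omega=\{\xx:|x_j|\le R\text{ for all }j\}$ with $R>\max_j\rho_j$ — a bounded closed set depending only on $U$ — and summing over $j$ gives $\nu_{U,t,\xi'}(\Omega)\to 1$ for every compactly supported unit vector $\xi'$. For an arbitrary unit vector $\xi$ and $\epsilon>0$, I would approximate by $\xi'=E(\overline{B(\mathbf{0},N)})\xi/\|E(\overline{B(\mathbf{0},N)})\xi\|$ with $N$ large enough that $\|\xi-\xi'\|<\epsilon$; since $\|E(t\Omega)U^t\xi\|\ge \|E(t\Omega)U^t\xi'\|-\epsilon$ and the latter tends to $1$, we obtain $\liminf_{t\to\infty}\nu_{U,t,\xi}(\Omega)\ge (1-\epsilon)^2$, and letting $\epsilon\downarrow 0$ completes the proof.

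The hard part will be the middle step: justifying the derivation identity for $D_j^m U^t$ on the domain of smooth vectors while the $D_j$ are unbounded, and establishing the degree-$k$ polynomial-in-$t$ bound on $\|\mathrm{ad}_{D_j}^k(U^t)\|$ with the correct leading coefficient $\rho_j^k$. Everything downstream is routine once that estimate is in hand, because the order of limits (first $t\to\infty$ for fixed $m$, then $m\to\infty$) lets the $m$-dependent subleading constants be absorbed into the $o(1)$ terms, so the clean geometric decay $(\rho_j/R)^{2m}$ survives and pins down a single $\Omega$ valid for every initial unit vector.
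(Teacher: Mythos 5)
Your proposal is correct in substance, but it takes a genuinely different route from the paper's own proof. The paper does very little hard analysis inside this theorem: it reduces to the one-dimensional coordinate walks $(\Hil, (U^t)_{t \in \Z}, E_j)$, quotes \cite[Proposition 2.28]{SakoIntertwiner} for the concentration statement in the case of \emph{smooth} initial vectors, and then carries out exactly the approximation argument you use at the end (cutting off an arbitrary unit vector to a compactly supported, hence smooth, vector $\eta$ and comparing $\nu_{U,t,\xi}$ with $\nu_{U,t,\eta}$ up to $2\epsilon$). You instead re-prove the analytic core from scratch: bounded iterated commutators $\mathrm{ad}_{D_j}^k(U)$ extracted from norm-smoothness of $\kk \mapsto U_\kk$, the Leibniz expansion of $\mathrm{ad}_{D_j}^k(U^t)$ with leading term $\binom{t}{k}\,k!\,\rho_j^k$, and a $2m$-th moment Chebyshev estimate giving $\limsup_{t} \nu_{U,t,\xi'}(\{|x_j|>R\}) \le (\rho_j/R)^{2m}$ for compactly supported $\xi'$. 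What your route buys: it is self-contained (no external citation), it needs the core estimate only for compactly supported vectors rather than for all smooth ones, and it is quantitative — it exhibits an explicit admissible region, namely $\Omega = \{\xx \,:\, |x_j| \le R \ \mathrm{for\ all}\ j\}$ for any $R > \max_j \|\overline{[D_j,U]}\|$, i.e.\ a maximal-velocity bound in terms of commutator norms, which the paper's citation leaves implicit. What the paper's route buys is brevity and the clean reduction of the $d$-dimensional statement to the one-dimensional one.

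Concerning the step you flag as hard: it becomes routine if you never let the unbounded $D_j$ act by itself. Since $(U^t)_\kk = (U_\kk)^t$ and $\kk \mapsto U_\kk$ is norm-smooth, the Leibniz rule for products of smooth Banach-space-valued functions yields both the smoothness of $\kk \mapsto (U^t)_\kk$ and your multinomial expansion, with $\mathrm{ad}_{D_j}^{k_i}(U)$ entering only as the bounded operators $\ii^{-k_i}\partial_{k_j}^{k_i} U_\kk \big|_{\kk = \mathbf{0}}$; no domain questions arise. Similarly, the moment identity follows by differentiating the vector-valued identity $v_E(\kk)\, U^t \xi' = (U^t)_\kk\, v_E(\kk)\xi'$ $m$ times in $k_j$ at $\kk = \mathbf{0}$: by Lemma \ref{lemma: smoothness for vectors} the factor $\partial_{k_j}^{m-l} v_E(\kk)\xi' \big|_{\kk=\mathbf{0}}$ equals $\ii^{m-l} D_j^{m-l}\xi'$ and has norm at most $N^{m-l}$ when $\supp(\mu_{\xi'}) \subset \{\xx : |x_j| \le N\}$, while $\|D_j^m U^t\xi'\|^2 = \int_{\R^d} x_j^{2m}\, d\mu_{U^t\xi'}$ is the spectral theorem applied to the smooth vector $U^t\xi'$. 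With that reformulation your argument closes completely.
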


\begin{proof}
It suffices to show that for every $j \in \{1, \cdots, d\}$, there exists a constant $L_j$ independent of $\xi$ such that
\[\lim_{t \to \infty} \nu_{U, t, \xi}(\R^{j -1} \times [-L_j, L_j] \times \R^{d - j}) = 1.\]
We define a spectral measure $E_j$ on $\R$ by
\[E_j (\Omega_j) = E(\R^{j -1} \times \Omega_j \times \R^{d - j}).\]
It suffices to show the theorem for the $1$-dimensional quantum walk
$(\Hil, (U^t)_{t \in \Z}, E_j)$.
It is easy to show that this quantum walk is also smooth.
In \cite[Proposition 2.28]{SakoIntertwiner}, the above theorem is proved for $1$-dimensional smooth quantum walks and for {\it smooth} unit vectors.

Let us prove it for an {\it arbitrary} unit vector $\xi$.
For a large positive number $a$, define $\eta = \eta(a)$ by 
$E_j([-a, a]) \xi / \| E_j([-a, a]) \xi \|$.
For every positive number $\epsilon$,
there exists a constant $a = a(\xi, \epsilon)$,
such that $\|\eta - \xi\| < \epsilon$. 

On the closed subspace $E_j([-a, a]) \Hil$, the operator $X_j^m = \int_{x \in \R} x^m d E_j(x)$
is equal to
$X_j^m = \int_{-a}^a x^m d E_j(x)$
and therefore bounded.
In particular, $\eta$ is in the domain of $X_j^m$ for every $m \in \N$.
By Lemma \ref{lemma: smoothness for vectors}, $\eta$ is a smooth unit vector.
By \cite[Proposition 2.28]{SakoIntertwiner},
there exists a constant $L_j$ determined by $(\Hil, (U^t)_{t \in \Z}, E_j)$
such that
\[\lim_{t \to \infty} \nu_{U, t, \eta}(\R^{j -1} \times [-L_j, L_j] \times \R^{d - j}) = 1.\]
Since $\|\eta - \xi\| < \epsilon$,
we have
\begin{eqnarray*}
&&
|\nu_{U, t, \xi}(\R^{j -1} \times [-L_j, L_j] \times \R^{d - j})
-
\nu_{U, t, \eta}(\R^{j -1} \times [-L_j, L_j] \times \R^{d - j})|\\
&=&
|\langle E(\R^{j -1} \times [-L_j, L_j] \times \R^{d - j}) \xi, \xi \rangle
-
\langle E(\R^{j -1} \times [-L_j, L_j] \times \R^{d - j}) \eta, \eta \rangle|\\
&<&
2 \epsilon.
\end{eqnarray*}
It follows that
\[\liminf_{t} \nu_{U, t, \xi}(\R^{j -1} \times [-L_j, L_j] \times \R^{d - j})
> 1 - 2 \epsilon.\]
Since $\epsilon$ is arbitrary,
and $L_j$ is independent of $\epsilon$,
we conclude that
\[\lim_{t \to \infty} \nu_{U, t, \xi}(\R^{j -1} \times [-L_j, L_j] \times \R^{d - j}) = 1.\]
\end{proof}

\begin{theorem}\label{theorem: convergence and intertwiner}
Let $U_1 = (\Hil_1, (U_1^t)_{t \in \Z}, E_1)$,
$U_2 = (\Hil_2, (U_2^t)_{t \in \Z}, E_2)$ be 
smooth discrete-time $d$-dimensional quantum walks.
Suppose that $V \colon \Hil_1 \to \Hil_2$ is a uniform unitary intertwiner between them.
For every $\xi \in \Hil_{1}$,
if the sequence 
$\{\nu_{U_1, t, \xi}\}_{t \in \N}$ weakly converges,
then the sequence 
$\{\nu_{U_2, t, V \xi}\}_{t \in \N}$ weakly converges.
\end{theorem}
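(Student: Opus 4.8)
The plan is to deduce the weak convergence of $\{\nu_{U_2, t, V\xi}\}_{t}$ from the assumed weak convergence of $\{\nu_{U_1, t, \xi}\}_{t}$ by comparing, for a suitable class of test functions $f$, the integrals $\int f\, d\nu_{U_2, t, V\xi}$ and $\int f\, d\nu_{U_1, t, \xi}$. The starting point is the intertwining relation $U_2^t V = V U_1^t$, which gives $U_2^t V\xi = V U_1^t\xi$. Writing $f_t^{(j)} = \int_{\R^d} f(\xx/t)\, d E_j(\xx)$ for $j \in \{1, 2\}$ (a bounded operator by Borel functional calculus when $f$ is bounded), a direct computation gives
\[\int f\, d\nu_{U_2, t, V\xi} = \langle f_t^{(2)} U_2^t V\xi,\, U_2^t V\xi\rangle = \langle V^* f_t^{(2)} V\, U_1^t\xi,\, U_1^t\xi\rangle,\]
so the difference of the two integrals equals $\langle (V^* f_t^{(2)} V - f_t^{(1)}) U_1^t\xi,\, U_1^t\xi\rangle$. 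It therefore suffices to show that this matrix element tends to $0$ as $t \to \infty$; since $\int f\, d\nu_{U_1, t, \xi}$ converges by hypothesis, this forces $\int f\, d\nu_{U_2, t, V\xi}$ to converge to the same limit.

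To bring in the uniformity of $V$, I pass to the Fourier side. For $f \in C_c^\infty(\R^d)$ we have $\hat f \in L^1(\R^d)$, and Fourier inversion together with Fubini gives, in the weak sense, $f_t^{(j)} = \int \hat f(\kk)\, v_j(\kk/t)\, d\kk$, where $v_j = v_{E_j}$ is the unitary representation of $\R^d$ attached to $(\Hil_j, E_j)$. Hence
\[\langle (V^* f_t^{(2)} V - f_t^{(1)}) U_1^t\xi, U_1^t\xi\rangle = \int \hat f(\kk)\, \langle (V^* v_2(\kk/t) V - v_1(\kk/t)) U_1^t\xi,\, U_1^t\xi\rangle\, d\kk.\]
The crux is the operator-norm identity $\|V^* v_2(\mathbf{k}) V - v_1(\mathbf{k})\| = \|V_{\mathbf{k}} - V_0\|$, valid for every $\mathbf{k} \in \R^d$: from $V_{\mathbf{k}} = v_2(\mathbf{k}) V v_1(-\mathbf{k})$ one obtains $V^* v_2(\mathbf{k}) V - v_1(\mathbf{k}) = V^*(V_{\mathbf{k}} - V_0) v_1(\mathbf{k})$, and the outer factors $V^*$ and $v_1(\mathbf{k})$ are unitary and so do not change the norm (here $V_0 = V$ because $v_1(0) = v_2(0)$ is the identity). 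Uniformity of the intertwiner $V$ is exactly the statement that $\mathbf{k} \mapsto V_{\mathbf{k}}$ is continuous in operator norm, so $\|V_{\mathbf{k}} - V_0\| \to 0$ as $\mathbf{k} \to 0$.

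Each integrand above is therefore bounded in absolute value by $|\hat f(\kk)|\, \|V_{\kk/t} - V_0\|\, \|\xi\|^2$. For fixed $\kk$ we have $\kk/t \to 0$, hence $\|V_{\kk/t} - V_0\| \to 0$, while $\|V_{\kk/t} - V_0\| \le 2$ uniformly (both $V_{\kk/t}$ and $V_0$ are unitary) and $\hat f \in L^1(\R^d)$. Lebesgue's dominated convergence theorem then yields $\langle (V^* f_t^{(2)} V - f_t^{(1)}) U_1^t\xi, U_1^t\xi\rangle \to 0$, so $\int f\, d\nu_{U_2, t, V\xi}$ converges as $t \to \infty$ for every $f \in C_c^\infty(\R^d)$.

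Finally I upgrade this to genuine weak convergence using tightness. Applying Theorem \ref{theorem: asymptotically having a compact support} to the smooth walk $U_2$ and the unit vector $V\xi$ produces a fixed compact set $\Omega$ with $\nu_{U_2, t, V\xi}(\Omega) \to 1$, so $\{\nu_{U_2, t, V\xi}\}_t$ is tight; by Prokhorov's theorem it is relatively weakly compact. Any two subsequential weak limits integrate every $f \in C_c^\infty(\R^d)$ to the same value, and since $C_c^\infty(\R^d)$ determines finite Borel measures on $\R^d$, the subsequential limit is unique, whence the whole sequence converges weakly. The main obstacle is the key estimate of the second paragraph, namely pinpointing where uniformity enters; once the norm identity $\|V^* v_2(\mathbf{k}) V - v_1(\mathbf{k})\| = \|V_{\mathbf{k}} - V_0\|$ is established, the dominated-convergence step and the tightness upgrade are routine.
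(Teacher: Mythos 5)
Your proposal is correct and takes essentially the same route as the paper: both proofs hinge on the same consequence of uniformity---that $\|V^{*} v_2(\kk/t) V - v_1(\kk/t)\| = \|V_{\kk/t} - V\| \to 0$ as $t \to \infty$ (the paper phrases this as $v_2(\kk/t)\, V\, v_1(\kk/t)^{-1} \to V$ in norm)---use the intertwining relation to move everything to $\Hil_1$, and invoke Theorem \ref{theorem: asymptotically having a compact support} for tightness of $\{\nu_{U_2,t,V\xi}\}_{t}$. The only difference is in packaging: the paper tests directly against the characters $\vv \mapsto \exp(\ii\, \kk \cdot \vv)$, i.e.\ compares characteristic functions pointwise and concludes from pointwise convergence plus asymptotic compact support, whereas you test against $C_c^\infty(\R^d)$ functions via Fourier inversion and dominated convergence and then finish with Prokhorov's theorem---a slightly longer path, but one that spells out the final measure-theoretic step which the paper leaves terse.
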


\begin{proof}
Let $\xi$ be a unit vector in $\Hil_1$.
We compare two sequences of probability measures
$\{\nu_{U_1, t, \xi}\}_{t \in \N}$ and $\{\nu_{U_2, t, V\xi}\}_{t \in \N}$.
For every $\kk$ in $\R^d$,
we have
\begin{eqnarray*}
&&
\left|
\int_{\vv \in \R^d} \exp(\ii \kk \cdot \vv) d\nu_{U_1, t, \xi}(\vv)-
\int_{\vv \in \R^d} \exp(\ii \kk \cdot \vv) d\nu_{U_2, t, V \xi}(\vv)
\right|\\
&=&
\left|
\int_{\xx \in \R^d} \exp(\ii \kk \cdot \xx/t) d\mu_{U_1^t \xi}(\xx)-
\int_{\xx \in \R^d} \exp(\ii \kk \cdot \xx/t) d\mu_{U_2^t V \xi}(\xx)
\right|\\
&=&
\left|
\left\langle
\int_{\xx \in \R^d} \exp(\ii \kk \cdot \xx/t) dE_1(\xx) \cdot U_1^t \xi, U_1^t \xi
\right\rangle
-
\left\langle
\int_{\xx \in \R^d} \exp(\ii \kk \cdot \xx/t) dE_2(\xx) \cdot U_2^t V\xi, U_2^t V\xi
\right\rangle
\right|\\
&=&
\left|
\left\langle
v_1 (\kk/t) \cdot U_1^t \xi, U_1^t \xi
\right\rangle
-
\left\langle
V^{-1} v_2 (\kk/t) V \cdot U_1^t \xi, U_1^t \xi
\right\rangle
\right|
\\
&\le&
\left\|
v_1 (\kk /t)
-
V^{-1} v_2(\kk / t) V
\right\|
=
\left\|
v_1 (\kk /t)
-
V^{-1} \cdot v_2(\kk / t) V  v_1(\kk / t)^{-1} \cdot v_1(\kk / t)
\right\|.
\end{eqnarray*}
Since $V$ is uniform with respect to $E_1$ and $E_2$, as $t$ tends to $\infty$, the operator $v_2(\kk / t) V v_1(\kk / t)^{-1}$ converges to $V$.
It follows that for every $\kk \in \R^d$,
\begin{eqnarray*}
\lim_{t \to \infty} \left|
\int_{\vv \in \R^d} \exp(\ii \kk \cdot \vv) d\nu_{U_1, t, \xi}(\vv)-
\int_{\vv \in \R^d} \exp(\ii \kk \cdot \vv) d\nu_{U_2, t, V \xi}(\vv)
\right|
= 0.
\end{eqnarray*}
If $\{\nu_{U_1, t, \xi}\}_{t \in \N}$ weakly converges,
then for every $\kk$, the sequence
\[\left\{ \int_{\vv \in \R^d} \exp(\ii \kk \cdot \vv) d\nu_{U_2, t, V \xi}(\vv) \right\}_{t \in \N}\] 
converges.
Since the sequence  $\{\nu_{U_2, t, V\xi}\}_{t \in \N}$ asymptotically has a compact support (Theorem \ref{theorem: asymptotically having a compact support}), the sequence weakly converges.
\end{proof}

\section{The category of homogeneous quantum walks}

A $d$-dimensional quantum walk is said to be homogeneous or space-homogeneous, 
if it is compatible with the space structure of $\R^d$.
Let us formulate homogeneous quantum walks, adding a structure of the regular representation to the triplet.

\begin{definition}
Let $t$ be an element of the group of time $\Z$ or $\R$.
The quadruple $U = (\Hil$, $(U^{(t)})_{t}$, $E$, $(\rho(\xx))_{\xx \in \Z^d})$ is called a $d$-dimensional {\rm homogeneous} quantum walk, if it satisfies the following conditions:
\begin{enumerate}
\item
The triplet $(\Hil, (U^{(t)})_{t}, E)$ is a $d$-dimensional quantum walk,
\item
The unitary representation $\rho$ is an action of $\Z^d$ on $\Hil$,
\item
For every Borel subset $\Omega$ of $\R^d$, and every $\xx \in \Z^d$,
\[\rho(\xx)^{-1} E(\Omega) \rho(\xx) = E(\Omega + \xx), \]
\item
For every $t$, and for every $\xx \in \Z^d$, 
$U^{(t)} \rho(\xx) = \rho(\xx) U^{(t)}$,
\item
The dimension of the image of the orthogonal projection $E([0, 1)^d)$
is finite.
\end{enumerate}
The dimension at the last item is called {\rm the degree of freedom}
of $U$.
Define the four kinds of regularity (having finite propagation, analyticity, smoothness, uniformity) of $U$ by those of the triple $(\Hil, (U^{(t)})_{t}, E)$.
In the case that $(U^{(t)})_{t}$ is a representation of $\Z$, we call the quadruple a discrete-time homogeneous quantum walk or simply a homogeneous quantum walk.
In the case that $(U^{(t)})_{t}$ is a strongly continuous representation of $\R$, we call the quadruple a continuous-time homogeneous quantum walk.
\end{definition}

The equation $\rho(\xx)^{-1} E(\Omega) \rho(\xx) = E(\Omega + \xx)$
means that the unitary operator $\rho(\xx)$ shifts the support of vectors $\xi \in \Hil$ in the direction of $- \xx$.
We call $\rho$ a right regular representation of $\Z^d$.
The equation $U \rho(\xx) = \rho(\xx) U$ means that
$U$ is compatible with the regular representation $\rho$.
The subset $[0, 1)^d$ is a Borel fundamental domain of the action of 
$\Z^d$ on $\R^d$.

We can also define morphisms for homogeneous walks, although they are not the subject of this paper.

\begin{definition}
Let $j$ be $1$ or $2$.
Let 
$U_j = \left(\Hil_j, \left(U_j^{(t)} \right)_{t}, E_j, (\rho_j(\xx))_{\xx \in \Z^d} \right)$ be two $d$-dimensional homogeneous quantum walks.
Let $T \colon \Hil_1 \to \Hil_2$ be a bounded linear operator.
The operator $T$ is said to be {\rm a homogeneous intertwiner or a homogeneous morphism} in the category of $d$-dimensional homogeneous quantum walks, if
\begin{itemize}
\item
for every $t$, $T U_1^{(t)} = U_2^{(t)} T$,
\item
for every $\xx \in \Z^d$, $T \rho_1(\xx) = \rho_2(\xx) T$.
\end{itemize}
Define the four kinds of regularity of $T$, using $E_1$ and $E_2$
as in Definition \ref{definition: regularity}.
\end{definition}
%
%\begin{definition}
%If for two $d$-dimensional homogeneous quantum walks, there exists a homogeneous invertible smooth intertwiner, then we say that these walks are {\rm similar}.
%\end{definition}

Now we have two categories. One is the category of general $d$-dimensional walks, the other is that of $d$-dimensional homogeneous walks.
We can easily define a forgetful functor from the category of homogeneous walks to that of general walks,
forgetting the right regular representations.
The homogeneous morphisms map to morphisms in the category of general walks. The converse does not hold true in general.
We will pay attention on general morphisms between two homogeneous walks.

In Section \ref{section: morphisms between 1-dim homogeneous QWs}, we study $1$-dimensional homogeneous analytic quantum walks, and clarify the way to determine the set of uniform morphisms which are not necessarily homogeneous.

\section{General theory of homogeneous quantum walks}
\label{section: general theory of multi-dim QWs}

\begin{theorem}[Subsection 2.6 in \cite{SakoMultidimQW}]\label{theorem: Homogeneous QW in ell-form}
Suppose that $U$ is a $d$-dimensional discrete-time homogeneous quantum walk whose degree of freedom is $n$.
Then $U$ is similar to a homogeneous walk if this form
\[\left( \ell_2(\Z^d) \otimes \C^n, (U^t)_{t \in \Z}, E, (\rho(\xx) \otimes \mathrm{id})_{\xx \in \Z^d} \right),\]
where $E$ is the standard coordinate system of $\ell_2(\Z^d) \otimes \C^n$ introduced in Example \ref{example: standard coordinate system},
and $\rho$ is the right regular representation of $\Z^d$ on $\ell_2(\Z^d)$.
As an intertwiner which give similarity, we can choose 
a unitary which has finite propagation.
Under this choice of unitary, the regularity of the concrete walk is the same as that of the original walk $U$.
\end{theorem}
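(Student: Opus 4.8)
The plan is to build an explicit finite-propagation unitary $W$ by tiling $\R^d$ with the integer translates of the fundamental domain $[0,1)^d$ and transporting an orthonormal basis of the central cell around by the regular representation. First I would set $C_\xx = [0,1)^d + \xx$ and $P_\xx = E(C_\xx)$ for $\xx \in \Z^d$. Since the cells $\{C_\xx\}_{\xx \in \Z^d}$ partition $\R^d$, countable additivity of $E$ makes the $P_\xx$ mutually orthogonal projections with $\sum_\xx P_\xx = I$. Applying the homogeneity relation $\rho(\xx)^{-1}E(\Omega)\rho(\xx) = E(\Omega + \xx)$ with $\Omega = [0,1)^d$ gives $P_\xx = \rho(\xx)^{-1}P_{\mathbf 0}\rho(\xx)$, so each $P_\xx$ has the same finite rank $n$ as $P_{\mathbf 0} = E([0,1)^d)$. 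Fixing an orthonormal basis $\ee_1, \dots, \ee_n$ of $P_{\mathbf 0}\Hil$ and setting $\ee_{\xx, j} = \rho(\xx)^{-1}\ee_j$, the same relation shows $\ee_{\xx, j} \in P_\xx\Hil$, so $\{\ee_{\xx, j}\}_{\xx, j}$ is an orthonormal basis of $\Hil = \bigoplus_\xx P_\xx\Hil$. I then define $W \colon \Hil \to \ell_2(\Z^d) \otimes \C^n$ by $W\ee_{\xx, j} = \delta_\xx \otimes \delta_j$, which is unitary as a bijection between orthonormal bases.

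Next I would verify the compatibilities. Because $\Z^d$ is abelian, $\rho(\yy)\ee_{\xx, j} = \rho(\yy)\rho(\xx)^{-1}\ee_j = \ee_{\xx - \yy, j}$, and $W$ sends this to $\delta_{\xx - \yy}\otimes\delta_j = (\rho(\yy)\otimes\mathrm{id})(\delta_\xx \otimes \delta_j)$, so $W\rho(\yy) = (\rho(\yy)\otimes\mathrm{id})W$. Putting $U' = WUW^{-1}$, the identity $WU = U'W$ together with $W\rho = (\rho\otimes\mathrm{id})W$ and the commutation $U\rho(\yy) = \rho(\yy)U$ shows $U'$ commutes with every $\rho(\yy)\otimes\mathrm{id}$, so the target quadruple $(\ell_2(\Z^d)\otimes\C^n, ((U')^t)_t, E_{\mathrm{std}}, (\rho(\xx)\otimes\mathrm{id})_\xx)$ is a genuine homogeneous quantum walk of degree $n$. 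The essential technical point, where care is needed, is \emph{finite propagation}: $W$ does \emph{not} conjugate $E$ to the standard coordinate system, since $E$ may be non-atomic, so propagation must be estimated directly. As $WE(A)\Hil$ lies in the span of those $\delta_\xx\otimes\C^n$ with $A\cap C_\xx \neq\emptyset$, while $E_{\mathrm{std}}(A')$ projects onto the span of $\delta_\yy\otimes\C^n$ with $\yy \in A'\cap\Z^d$, a nonzero $E_{\mathrm{std}}(A')WE(A)$ forces some $\xx\in A'$ with $A\cap C_\xx\neq\emptyset$, whence $\mathrm{dist}(A, A') \le \mathrm{diam}(C_\xx) = \sqrt d$. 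Thus $W$ has finite propagation with $R = \sqrt d$, and taking adjoints in this condition (its definition is symmetric in the two coordinate systems) shows $W^{-1} = W^*$ has finite propagation as well.

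Finally, the regularity statement follows formally. By Theorem \ref{theorem: 4 kinds of regularity} a finite-propagation morphism is analytic, hence smooth, hence uniform; and compositions of analytic (smooth, uniform) morphisms remain analytic (smooth, uniform). Applying this to the two identities $U' = WUW^{-1}$ and $U = W^{-1}U'W$ shows that $U$ and $U'$ carry exactly the same regularity of each of the four kinds. I expect the finite-propagation bound of the middle paragraph to be the only genuine obstacle; the rest is the bookkeeping of the tiling and of the regular representation, which becomes routine once the basis $\{\ee_{\xx, j}\}$ is in place.
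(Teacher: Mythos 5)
Your proof is correct, and it supplies something the paper itself does not contain: the theorem is stated here only with a citation to Subsection 2.6 of \cite{SakoMultidimQW}, so there is no in-paper argument to compare against. Your tiling construction is the natural self-contained route, and every step checks out against the paper's conventions. In particular, $P_\xx = \rho(\xx)^{-1}P_{\mathbf{0}}\rho(\xx)$ does follow from the homogeneity axiom with $\Omega = [0,1)^d$, the transported vectors $\ee_{\xx,j} = \rho(\xx)^{-1}\ee_j$ lie in $P_\xx\Hil$ and assemble into an orthonormal basis of $\Hil = \bigoplus_\xx P_\xx\Hil$, and the computation $\rho(\yy)\ee_{\xx,j} = \ee_{\xx-\yy,j}$ is consistent with the paper's convention that the right regular representation shifts supports in the direction $-\yy$ (which is exactly what the standard coordinate system on $\ell_2(\Z^d)\otimes\C^n$ demands of $\rho(\yy)\otimes\mathrm{id}$); hence $W\rho(\yy) = (\rho(\yy)\otimes\mathrm{id})W$, and $U' = WUW^{-1}$ commutes with $\rho(\yy)\otimes\mathrm{id}$, so the target quadruple is a genuine homogeneous walk of degree of freedom $n$. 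You are also right to isolate the one genuinely delicate point: $W$ does not conjugate $E$ onto the standard coordinate system (only the cell-level decomposition $E(A) = \sum_\xx E(A \cap C_\xx)$ survives), so finite propagation must be estimated by hand; your bound $R = \sqrt{d}$ is correct, the adjoint argument gives finite propagation of $W^{-1}$, and the transfer of all four regularities between $U$ and $U'$ follows, as you say, from Theorem \ref{theorem: 4 kinds of regularity} together with closure of each regularity class under composition, applied to $U' = WUW^{-1}$ and $U = W^{-1}U'W$ for each power $t$ separately.
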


We apply the inverse Fourier transform $\mathcal{F}^{-1} \colon \ell_2(\Z^d) \otimes \C^n \to L^2(\T_{2 \pi}^d) \otimes \C^n$
to the quadruple.
The space $\T_{2 \pi}^d$ is the $d$-dimensional torus $\R^d / (2 \pi \Z)^d$. For every simply connected domain in $\T_{2 \pi}^d$ has a coordinate $(k_1, \cdots, k_d)$ arising from $\R^d$.
On the Hilbert space $L^2(\T_{2 \pi}^d) \otimes \C^n$, we have $n$-tuple
$\left( \frac{1}{\ii} \frac{\partial}{\partial k_1}, \cdots, \frac{1}{\ii} \frac{\partial}{\partial k_d} \right)$
of mutually commuting self-adjoint operators.
Denote by $\widehat E$ the spectral measure of of the $d$-tuple.
For every $\xx = (x_1, \cdots, x_d) \in \Z^d$, we define $\widehat \rho(\xx)$ by the multiplication operator
by the function $f_{-\xx}(\kk) = \exp(- \ii \xx \cdot \kk)$ on $\T_{2 \pi}^d$

\begin{theorem}[Subsection 2.6 in \cite{SakoMultidimQW}]\label{theorem: Homogeneous QW in ell-form}
Suppose that $U$ is a $d$-dimensional discrete-time homogeneous quantum walk whose degree of freedom is $n$.
Then $U$ is similar to a homogeneous walk if this form
\[\left( L^2(\T_{2 \pi}^d) \otimes \C^n, \left( \widehat U^t \right)_{t \in \Z}, \widehat E, (\widehat \rho(\xx) \otimes \mathrm{id})_{\xx \in \Z^d} \right),\]
where
every entry of $\widehat{U}$ is a multiplication operator of a function on
$L^2(\T_{2 \pi}^d)$.
As an intertwiner which give similarity, we can choose 
a unitary which has finite propagation.
Under this choice of unitary, 
the regularity of $U$ is rephrased as follows:
\begin{enumerate}
\item
$U$ has finite propagation, if and only if
every entry of $\widehat{U}$ is a multiplication operator of a function which is a finite linear combination of $\{f_{-\xx}(\kk) = \exp(- \ii \xx \cdot \kk) \}_{\xx \in \Z^d}$.
\item
$U$ is analytic (smooth, or uniform), if and only if
every entry of $\widehat{U}$ is a multiplication operator of an analytic (a smooth, or a continuous, respectively) function on $\T_{2 \pi}^d$.
\end{enumerate}
\end{theorem}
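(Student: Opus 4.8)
The plan is to deduce the statement from the preceding $\ell_2$-model by transporting every structure through the Fourier transform; the sole substantial point is a commutant computation that forces $\wU$ to be a matrix of multiplication operators.

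By the preceding theorem, $U$ is similar via a finite-propagation unitary $V_0$ to the concrete walk on $\ell_2(\Z^d)\otimes\C^n$ carrying the standard coordinate system $E$ and the right regular representation $\rho(\xx)\otimes\mathrm{id}$; I denote this model again by $U$ and set $\wU = \F^{-1}U\F$ on $L^2(\T_{2\pi}^d)\otimes\C^n$. Since $\F$ preserves supports it has finite propagation, so $\F^{-1}V_0$ is a composite of finite-propagation unitaries, hence itself a finite-propagation unitary, and it intertwines the original walk with $\wU$. The classical Fourier dictionary gives $\F^{-1}E\F = \widehat E$ (the position operators pass to $\frac{1}{\ii}\frac{\partial}{\partial k_j}$) and $\F^{-1}(\rho(\xx)\otimes\mathrm{id})\F = \widehat\rho(\xx)\otimes\mathrm{id} = M[f_{-\xx}]\otimes\mathrm{id}$ (each shift passes to multiplication by the character $f_{-\xx}$).

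Now comes the main step. Homogeneity, $U\,(\rho(\xx)\otimes\mathrm{id}) = (\rho(\xx)\otimes\mathrm{id})\,U$, transports to $\wU\,(M[f_{-\xx}]\otimes\mathrm{id}) = (M[f_{-\xx}]\otimes\mathrm{id})\,\wU$ for all $\xx\in\Z^d$. Viewing $\wU$ as an $n\times n$ matrix $(\wU_{ij})$ over $\B(L^2(\T_{2\pi}^d))$, each entry commutes with every $M[f_{-\xx}]$. The span of $\{f_{-\xx}\}_{\xx\in\Z^d}$ is a unital, conjugation-closed, point-separating subalgebra, hence dense in $C(\T_{2\pi}^d)$ by Stone--Weierstrass and weak-$*$ dense in $L^\infty(\T_{2\pi}^d)$; so $\wU_{ij}$ commutes with $M[f]$ for every $f\in L^\infty(\T_{2\pi}^d)$. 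Putting $u_{ij} = \wU_{ij}\mathbf{1}$ and using $\wU_{ij}M[f]\mathbf{1} = M[f]\wU_{ij}\mathbf{1}$ on the dense set of bounded $f$ identifies $\wU_{ij}$ with the multiplication operator $M[u_{ij}]$, while boundedness of $\wU_{ij}$ forces $u_{ij}\in L^\infty(\T_{2\pi}^d)$. Hence every entry of $\wU$ is a scalar multiplication operator, and $\F^{-1}V_0$ is the promised finite-propagation unitary.

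Finally, since $\F^{-1}V_0$ and its inverse have finite propagation and composition preserves each of the four regularities (the bullet list after Definition \ref{definition: regularity}), the regularity of $U$ equals that of $\wU$. The representation $v_{\widehat E}(\kk)$ attached to $\widehat E$ is translation by $\kk$ on $\T_{2\pi}^d$, so conjugation carries $M[u_{ij}]$ to $M[\tau_\kk u_{ij}]$, where $(\tau_\kk u)(\cdot) = u(\cdot + \kk)$, and thus $\wU_\kk = (M[\tau_\kk u_{ij}])_{ij}$. A map into $\B(L^2(\T_{2\pi}^d)\otimes\C^n)\cong M_n(\B(L^2(\T_{2\pi}^d)))$ is continuous, smooth, or holomorphic exactly when each of its finitely many entry-maps $\kk\mapsto M[\tau_\kk u_{ij}]$ is, and the block operator has finite propagation exactly when each block does; by the multiplication-operator examples (whose computation is unchanged in $d$ dimensions because $v_{\widehat E}$ is torus translation), $\kk\mapsto M[\tau_\kk u_{ij}]$ is uniform, smooth, or analytic iff $u_{ij}$ is continuous, smooth, or analytic, and $M[u_{ij}]$ has finite propagation iff the Fourier series of $u_{ij}$ is finitely supported, i.e.\ $u_{ij}$ is a finite linear combination of the $f_{-\xx}$. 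Assembling these over all $n^2$ entries gives items (1) and (2). I expect the commutant step to be the main obstacle: one must upgrade commutation with the discrete family $\{M[f_{-\xx}]\}$ to commutation with all of $L^\infty(\T_{2\pi}^d)$ and then use maximality of the multiplication algebra; the rest is bookkeeping along a finite-propagation unitary.
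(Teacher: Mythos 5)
Your proposal is correct and follows essentially the same route as the paper: the paper obtains this theorem by applying the inverse Fourier transform to the preceding $\ell_2(\Z^d)\otimes\C^n$-form theorem (with the details deferred to Subsection 2.6 of \cite{SakoMultidimQW}), which is exactly your reduction. Your commutant/maximal-abelian-algebra argument forcing the entries of $\wU$ to be multiplication operators, and your entry-wise transfer of the four regularities along the finite-propagation unitary $\F^{-1}V_0$, correctly supply the details that the paper leaves to the citation.
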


In our new category of homogeneous quantum walks, an inverse Fourier transform of a quantum walk is not merely an intermediate product of study.
It is also an object of the category.
The author thinks that this is merit of our new framework.

\begin{theorem}[Theorem 3.4 in \cite{SakoMultidimQW}]\label{theorem: limit of HQW}
Every $d$-dimensional homogeneous discrete-time analytic quantum walk $U$ has a weak limit distribution $\lim_{t \to \infty} \nu_{U, t, \xi}$ of velocity for every initial unit vector $\xi$.
The support of the weak limit is compact. 
\end{theorem}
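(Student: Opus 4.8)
The plan is to reduce the general homogeneous analytic walk to a concrete multiplication model and then run a Fourier-analytic stationary-phase argument. By Theorem~\ref{theorem: Homogeneous QW in ell-form} (the Fourier-transformed version), $U$ is similar, via a unitary intertwiner $V$ of finite propagation, to a walk $\wU$ on $L^2(\T_{2 \pi}^d) \otimes \C^n$ acting as pointwise multiplication by an analytic unitary-matrix-valued symbol $\kk \mapsto \wU(\kk)$. A finite-propagation unitary is uniform and an analytic walk is smooth (both by Theorem~\ref{theorem: 4 kinds of regularity}), so Theorem~\ref{theorem: convergence and intertwiner} applies to $V$; since $V$ is a surjective unitary, weak convergence for every unit vector of the model transfers to every unit vector of $U$. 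Hence it suffices to establish the limit for $\wU$.

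On the model I reuse the characteristic-function identity from the proof of Theorem~\ref{theorem: convergence and intertwiner}: for a unit vector $\xi$ and $\kk \in \R^d$,
\[
\int_{\vv \in \R^d} \exp(\ii \kk \cdot \vv)\, d\nu_{\wU, t, \xi}(\vv) = \langle v_{\widehat{E}}(\kk/t)\, \wU^t \xi,\, \wU^t \xi \rangle .
\]
The simplification special to the model is that $v_{\widehat{E}}$, attached to the position observables $\frac{1}{\ii}\frac{\partial}{\partial k_j}$, acts on $L^2(\T_{2 \pi}^d)$ as translation of the wavenumber, so $v_{\widehat{E}}(\kk/t)$ shifts the argument by $\kk/t$. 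Writing $\wU^t \xi$ as pointwise multiplication by $\wU(\mathbf{m})^t$, the inner product becomes an integral over the torus of $\langle \wU(\mathbf{m}+\kk/t)^t \xi(\mathbf{m}+\kk/t),\, \wU(\mathbf{m})^t \xi(\mathbf{m}) \rangle_{\C^n}$.

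Next I diagonalize the symbol. As $\wU(\mathbf{m})$ is an analytic family of unitary matrices, its eigenvalues $\exp(\ii \theta_r(\mathbf{m}))$ and eigenprojections are analytic off the discriminant locus $D$ where eigenvalues collide, and $D$ is a proper real-analytic subvariety, hence of Lebesgue measure zero. Expanding $\xi(\mathbf{m}) = \sum_r c_r(\mathbf{m}) \phi_r(\mathbf{m})$ in eigenvectors and substituting, the integrand splits into terms carrying the phase $\exp(\ii t(\theta_s(\mathbf{m}+\kk/t) - \theta_r(\mathbf{m})))$. For $r \neq s$ the exponent converges to $t(\theta_s(\mathbf{m}) - \theta_r(\mathbf{m}))$, which oscillates, so a Riemann--Lebesgue argument kills these terms as $t \to \infty$; for $r = s$ a first-order expansion gives $t(\theta_r(\mathbf{m}+\kk/t) - \theta_r(\mathbf{m})) \to \kk \cdot \nabla\theta_r(\mathbf{m})$. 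Thus the characteristic functions converge pointwise to
\[
\int_{\T_{2 \pi}^d} \sum_r \exp(\ii \kk \cdot \nabla\theta_r(\mathbf{m}))\, |c_r(\mathbf{m})|^2 \, d\mathbf{m},
\]
the characteristic function of the distribution of the group velocity $\nabla\theta_R$, with $\mathbf{m}$ uniform and branch $R=r$ weighted by $|c_r(\mathbf{m})|^2$.

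Finally, the pointwise convergence of characteristic functions, combined with the tightness supplied by Theorem~\ref{theorem: asymptotically having a compact support} (which confines every $\nu_{\wU, t, \xi}$ to one fixed compact set), yields weak convergence by L\'evy's continuity theorem; the same uniform compactness passes to the limit, so its support is compact without any further regularity input. I expect the genuine obstacle to be the degeneracy locus $D$: there the eigenprojections lose regularity and the group velocities $\nabla\theta_r$ may be unbounded, so both the Riemann--Lebesgue step and the interchange of limit and integral must be justified by excising a shrinking neighborhood of $D$ whose (measure-zero) contribution is controlled uniformly in $t$. Turning the formal stationary-phase computation into a rigorous dominated-convergence estimate that is robust across $D$ is the delicate technical heart of the argument.
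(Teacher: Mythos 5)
Your reduction step is fine and consistent with the paper's framework: passing to the Fourier-transformed model via a finite-propagation (hence uniform) unitary, invoking Theorem~\ref{theorem: convergence and intertwiner} in both directions, and getting tightness plus compact support of the limit from Theorem~\ref{theorem: asymptotically having a compact support}. But the core of your argument is exactly the na\"ive diagonalization that this paper refuses to use: the paper gives no proof of Theorem~\ref{theorem: limit of HQW} at all --- it imports it as Theorem~3.4 of \cite{SakoMultidimQW} --- and the remark placed immediately after the statement exists precisely to declare the argument ``diagonalize the symbol, differentiate the eigenvalue functions, conclude'' inappropriate. The obstacles listed there are fatal to your sketch as written. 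In dimension $d \ge 2$ the eigenvalue functions of an analytic (even finite-propagation) symbol need not be differentiable: the $2$-dimensional walk displayed in that remark has no differentiable eigenvalue function. Furthermore, branches can be multi-valued; off the degeneracy locus $D$ the eigenvalues are only locally analytic, and monodromy around $D$ can prevent the existence of globally defined functions $\theta_r$ and coefficients $c_r(\mathbf{m})$ on $\T_{2\pi}^d \setminus D$. Your limit formula $\int \sum_r \exp(\ii \kk \cdot \nabla\theta_r(\mathbf{m}))\,|c_r(\mathbf{m})|^2\, d\mathbf{m}$ presupposes exactly these global objects. In one dimension the paper repairs multi-valuedness by passing to model walks on tori of longer period (Section~\ref{section: structure theorems of $1$-dim HQW}); no analogue of that repair appears in your sketch for $d \ge 2$.

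Two specific steps would fail even off $D$. First, the Riemann--Lebesgue mechanism for the cross terms is wrong: if two branches differ by a constant phase (e.g.\ the constant eigenvalue branches $1$ and $-1$ of the $4$-state Grover walk of Example~\ref{example: 4-state Grover walk}, or any walk containing $M[\lambda]\oplus M[-\lambda]$), then $\theta_s - \theta_r$ has no $\mathbf{m}$-dependence, $\exp(\ii t(\theta_s-\theta_r))$ merely oscillates in $t$, and the cross term does not tend to $0$ by any oscillatory-integral argument. What actually kills the cross terms is orthogonality, $\langle \phi_s(\mathbf{m}+\kk/t),\phi_r(\mathbf{m})\rangle \to \langle\phi_s(\mathbf{m}),\phi_r(\mathbf{m})\rangle = 0$ --- and continuity of the eigenprojections is precisely what degenerates near $D$. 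Second, you yourself flag the uniform-in-$t$ control of a shrinking neighborhood of $D$ as ``the delicate technical heart'' and then do not supply it; that is not a deferrable detail but the entire content of the theorem, and it is the same deferral the remark criticizes in the earlier literature. So your proposal establishes the easy outer layer (transfer to the model, tightness, L\'evy continuity), which the paper indeed provides, and leaves unproven exactly the part for which the paper cites \cite{SakoMultidimQW}.
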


\begin{remark}
Limit distributions of homogeneous quantum walks has attracted much attention of researchers. 
A preceding paper insists the above theorem by the following argument:
``For every homogeneous quantum walk,
we can diagonalize its inverse Fourier transform.
The eigenvalue functions are smooth.
Therefore, the walk has limit distribution.''
Unfortunately, this argument is inappropriate.

The crucial problem is that some papers did not clarify what they call quantum walks, so the theorem was not clarified.
Moreover, some preceding papers did not pay attention on the following types of examples:
\begin{itemize}
\item
The $2$-dimensional homogeneous quantum walk 
$U
=
\dfrac{1}{2}
\left(
\begin{array}{cc}
S_{\rm r} + S_{\rm u} & - S_{\rm r}^{-1} + S_{\rm u}^{-1} \\ 
S_{\rm r} - S_{\rm u} & S_{\rm r}^{-1} + S_{\rm u}^{-1}
\end{array}
\right)
$
does not have a differentiable eigenvalue function,
although it has finite propagation.
See \cite[Section 3]{SakoMultidimQW}.
\item
For the $1$-dimensional homogeneous quantum walk 
$\left(
\begin{array}{cc}
a S & -b S\\
b & a
\end{array}
\right)$
with finite propagation, the eigenvalue function is multi-valued.
See \cite[Subsection 9.2]{SaigoSako}.
\end{itemize}
Therefore, a naive use of diagonalization is not appropriate for the proof.
\end{remark}

We can generalize Theorem \ref{theorem: limit of HQW} for time-periodic homogeneous analytic quantum walks.
Let $V_1, \cdots, V_p$ be analytic unitary operators on the Hilbert space $(\Hil, E)$ with a $d$-dimensional coordinate system and suppose that each unitary operator commutes with the right regular representation $\rho(\xx)$ of $\Z^d$.
For a natural number $t$, define $V_n$ by the following:
if $t \equiv j \ (\mod p)$, then $V_t = V_j$.
For $t \in \N$, define $U^{(t)}$ by $U^{(0)} = 1, U^{(t)} = V_t V_{t -1} \cdots V_2 V_1$.
Let us call the quadruple $(\Hil, (U^{(t)})_{t \in \N}, E, \rho)$ a 
$d$-dimensional space-homogeneous {\it time-periodic} discrete-time analytic quantum walk.

\begin{theorem}\label{theorem: convergence of time-periodic homogeneous walk}
Every $d$-dimensional space-homogeneous {\rm time-periodic} discrete-time analytic quantum walk $\left( \Hil, \left( U^{(t)} \right)_{t \in \N}, E, \rho \right)$ has a weak limit distribution of velocity for {\rm every} initial unit vector.
The support of the weak limit is compact. 
\end{theorem}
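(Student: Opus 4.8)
The plan is to collapse the time-periodic walk into a single honest homogeneous walk by advancing one full period, apply the known limit theorem to that walk, and then transport the limit back through a change of scale, controlling the residual factor by the uniformity trick already used in the proof of Theorem~\ref{theorem: convergence and intertwiner}. First I would record the periodic structure. Writing $P_r = V_r V_{r-1}\cdots V_1$ for $1\le r< p$, $P_0 = 1$, and $W = V_p\cdots V_1$, the periodicity $V_{mp+j}=V_j$ gives, for $0\le r<p$,
\[U^{(mp+r)} = P_r\,W^m.\]
Since each $V_j$ is analytic and commutes with $\rho$, so is every $P_r$ and so is $W$; by Theorem~\ref{theorem: 4 kinds of regularity} each $P_r$ is in particular a uniform morphism. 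Moreover $\left(\Hil,(W^t)_{t\in\Z},E,\rho\right)$ is a genuine $d$-dimensional homogeneous discrete-time analytic quantum walk, so Theorem~\ref{theorem: limit of HQW} supplies, for every unit vector $\xi$, a weak limit $\nu_W^\infty = \lim_m \nu_{W,m,\xi}$ with compact support.

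Next I would compare characteristic functions along a single residue class. Fix $\xi$ and $r$, put $s=mp+r$, and let $v(\kk)=\int \exp(\ii\xx\cdot\kk)\,dE(\xx)$ be the unitary representation attached to $E$, with the notation $(P_r^{*})_\kk = v(\kk)P_r^{*}v(-\kk)$ from Definition~\ref{definition: regularity}. Then
\[\int \exp(\ii\kk\cdot\vv)\,d\nu_{U,s,\xi}(\vv) = \langle P_r^{*}\,v(\kk/s)\,P_r\, W^m\xi,\ W^m\xi\rangle.\]
The crux is to handle $P_r^{*}v(\kk/s)P_r$: because $v(\cdot)$ is only strongly continuous I cannot compare $v(\kk/s)$ with $v(\kk/(mp))$ in norm, so instead I factor $P_r^{*}v(\kk/s)P_r = v(\kk/s)\,(P_r^{*})_{-\kk/s}\,P_r$, exactly as in Theorem~\ref{theorem: convergence and intertwiner}. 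Uniformity of $P_r$ forces $(P_r^{*})_{-\kk/s}\to P_r^{*}$ in operator norm as $m\to\infty$, whence $(P_r^{*})_{-\kk/s}P_r\to 1$, and the quantity above differs from $\langle v(\kk/s)W^m\xi,W^m\xi\rangle$ by at most $\|(P_r^{*})_{-\kk/s}P_r-1\|\to 0$.

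It then remains to identify the scaling limit. A direct change of variables gives $\langle v(\kk/s)W^m\xi,W^m\xi\rangle = \hat\nu_{W,m,\xi}\!\big(\tfrac{m}{s}\kk\big)$, the $W$-velocity characteristic function evaluated at the rescaled argument $\tfrac{m}{s}\kk\to\kk/p$. Since $\{\nu_{W,m,\xi}\}$ converges weakly it is tight, so the family $\{\hat\nu_{W,m,\xi}\}$ is equicontinuous; combining equicontinuity with $\tfrac{m}{s}\kk\to\kk/p$ yields $\hat\nu_{W,m,\xi}(\tfrac{m}{s}\kk)\to\hat\nu_W^\infty(\kk/p)$. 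Crucially this limit does not depend on $r$, so all $p$ residue subsequences share the same pointwise limit and therefore $\int\exp(\ii\kk\cdot\vv)\,d\nu_{U,t,\xi}(\vv)\to \hat\nu_W^\infty(\kk/p)$ as $t\to\infty$ for every $\kk\in\R^d$.

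Finally I would invoke L\'evy's continuity theorem. The limit function $\kk\mapsto\hat\nu_W^\infty(\kk/p)$ is continuous, being the characteristic function of the push-forward of $\nu_W^\infty$ under $\vv\mapsto\vv/p$; hence $\nu_{U,t,\xi}$ converges weakly to that push-forward, whose support is $\tfrac1p\,\supp(\nu_W^\infty)$ and is therefore compact. I expect the only genuine obstacle to be the $P_r$-conjugation step, where the failure of norm-continuity of $v(\cdot)$ must be circumvented by the uniformity factorization; the scaling mismatch between $1/(mp+r)$ and $1/m$ is then dispatched by the equicontinuity coming from tightness, and the periodicity is what guarantees the $r$-independence needed to pass from the residue subsequences to the full sequence.
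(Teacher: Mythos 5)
Your proposal is correct and follows essentially the same route as the paper's own proof: the same decomposition $U^{(mp+r)} = P_r W^m$, the same appeal to Theorem~\ref{theorem: limit of HQW} for the one-period walk, the same uniformity trick to discard the prefix $P_r$ (the paper phrases it as norm-vanishing of the commutator of $P_r$ with $v(\kk/s)$), and the same equicontinuity argument to absorb the scaling mismatch between $1/s$ and $1/(mp)$. The only cosmetic difference is that you scale velocities by the number of $W$-steps and identify the limit explicitly as the push-forward of $\nu_W^\infty$ under $\vv \mapsto \vv/p$, concluding by L\'evy continuity, whereas the paper scales by total time throughout and concludes weak convergence from the asymptotic compactness of supports (Theorem~\ref{theorem: asymptotically having a compact support}).
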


\begin{proof}
For every $t \in \N$, there exist unique $q = q(t) \in \Z_{\ge 0}$ and 
$r = r(t) \in \{0, 1, \cdots, d - 1\}$ such that $t = p q + r$.
Define $W^{(r)}$ by
\[
W^{(0)} = 1, \quad
W^{(1)} = V_1, \quad
W^{(2)} = V_1 V_2, \quad
\cdots, \quad
W^{(d -1)} = V_1 V_2 \cdots V_{d - 1}.\]
We have $U^{(t)} = W^{(r)} {U^{(p)}}^q$.
Note that the quadruple $(\Hil, (U^{(p q)})_{p \in \N}, E, \rho)$ is a usual homogeneous quantum walk, and has a weak limit for every initial unit vector $\xi$ (Theorem \ref{theorem: limit of HQW}).

Choose and fix a unit vector $\xi$ of $\Hil$.
The $t$-th distribution $\nu_{U, t, \xi}$ of the velocity is defined by the following:
\[\nu_{U, t, \xi}(\Omega) 
= \left\langle E(t \Omega) U^{(t)} \xi, U^{(t)} \xi \right\rangle
= \left\langle E(t \Omega) W^{(r)} {U^{(p)}}^q \xi, W^{(r)} {U^{(p)}}^q \xi \right\rangle.
\]
Let us denote by $\phi_{U, t, \xi}(\kk)$ the characteristic function 
$\kk \mapsto \int_{\xx \in \R^d} \exp(\ii \xx \cdot \kk) d \nu_{U, t, \xi}$ 
of the probability measure $\nu_{U, t, \xi}$.

For every wavenumber $\kk \in \R^d$, the value of the characteristic function $\phi_{U, t, \xi}(\kk)$ of $\nu_{U, t, \xi}$ is equal to the following:
\begin{eqnarray*}
\int_{\xx \in \R^d} \exp(\ii \xx \cdot \kk) d \nu_{U, t, \xi}
&=&
\int_{\xx \in \R^d} \exp \left( \ii \frac{\xx}{t} \cdot \kk \right) 
d \mu_{U^{(t)} \xi}(\xx)\\
&=&
\left\langle
\int_{\xx \in \R^d} \exp \left( \ii \frac{\xx}{t} \cdot \kk \right) 
d E(\xx) \cdot W^{(r)} {U^{(p)}}^q \xi, W^{(r)} {U^{(p)}}^q \xi
\right\rangle\\
&=&
\left\langle
\left( {W^{(r)}}^{-1} \cdot
\int_{\xx \in \R^d} \exp \left( \ii \frac{\xx}{t} \cdot \kk \right)
d E(\xx) \cdot W^{(r)} \right) {U^{(p)}}^q \xi, {U^{(p)}}^q \xi
\right\rangle.
\end{eqnarray*}
Since $r$ is an element of a finite subset, and $W^{(r)}$ is uniform,
the commutator of $W^{(r)}$ and $v_E(\kk / t) = \int_{\xx \in \R^d} 
\exp \left( \ii \xx / t \cdot \kk \right) 
d E(\xx)$
converges to $0$ as $t$ tends to $\infty$.
Therefore, if $t$ is large, then the above quantity is almost the same as the following
\begin{eqnarray*}
&&\left\langle
\int_{\xx \in \R^d} \exp \left( \ii \frac{\xx}{t} \cdot \kk \right) 
d E(\xx) \cdot {U^{(p)}}^q \xi, {U^{(p)}}^q \xi
\right\rangle\\
&=&
\left\langle
\int_{\xx \in \R^d} \exp \left( \ii \frac{\xx}{p q} \cdot \frac{ p q}{t}  \kk \right) 
d E(\xx) \cdot {U^{(p q)}} \xi, {U^{(p q)}} \xi
\right\rangle\\
&=&
\int_{\xx \in \R^d} \exp \left( \ii \frac{\xx}{p q} \cdot \frac{ p q}{t}  \kk \right) 
d \mu_{U^{(p q)} \xi}(\xx)
=
\int_{\xx \in \R^d} \exp \left( \ii \xx \cdot \frac{p q}{t}  \kk \right) 
d \nu_{U, p q, \xi}.
= \phi_{U, p q, \xi} \left( \frac{p q}{t}  \kk \right).
\end{eqnarray*}
It follows that for every positive number $\epsilon$, and $\kk \in \R^d$,  there exists a natural number $N_1 = N_1(\epsilon, \kk)$ such that
for every $t$ at least $N_1$,
\[\left| \phi_{U, t, \xi}(\kk) - \phi_{U, p q, \xi} \left( \frac{p q}{t}  \kk \right) \right| < \epsilon.\]

Since the sequence of the probability measures $\{\nu_{U, pq, \xi}\}_{q \in \N}$ asymptotically has compact support, 
the characteristic functions $\left\{ \phi_{U, pq, \xi}(\kk) = \int_{\xx \in \R^d} \exp(\ii \xx \cdot \kk) d \nu_{U, pq, \xi} \right\}_{q \in \N}$ of the sequence are equicontinuous.
It follows that 
for every positive number $\epsilon$, and $\kk \in \R^d$,  there exists a natural number $N_2 = N_2(\epsilon, \kk)$ such that
for every $q$ at least $N_2$,
\[\left| \phi_{U, p q, \xi} \left( \frac{p q}{t}  \kk \right) 
- \phi_{U, p q, \xi} \left( \kk \right)\right| < \epsilon.\]

Since the quantum walk $(\Hil, (U^{(p q)})_{q \in \N}, E, \rho)$ is a usual homogeneous analytic quantum walk, it has a weak limit distribution.
It follows that for every $\kk \in \R^d$, as $q$ tends to $\infty$, the sequence $\{\phi_{U, p q, \xi} \left( \kk \right)\}_{q \in \N}$ converges.
Combining the above two inequalities, we obtain that for every $\kk$, the sequence $\{\phi_{U, t, \xi}(\kk)\}_{t \in \N}$ converges.

Since the measures $\{\nu_{U, t, \xi}\}_{t \in \N}$ asymptotically has compact support, the point-wise convergence of the characteristic functions implies that the sequence $\{\nu_{U, t, \xi}\}_{t \in \N}$ weakly converges.
\end{proof}

\section{Structure of $1$-dimensional homogeneous quantum walks}
\label{section: structure theorems of $1$-dim HQW}

Our next project is to study $1$-dimensional homogeneous {\it analytic} quantum walks $U = (\Hil, (U^t)_{t \in \Z}, E, (\rho(x))_{x \in \Z})$.
We finally get a way to determine how many uniform intertwiners two such quantum walks have.
We would like to study not only homogeneous intertwiners but also non-homogeneous intertwiners.
For this purpose, forgetting the fourth entry $\rho$, we study $1$-dimensional homogeneous quantum walks in the cagegory of $1$-dimensional quantum walks.

Suppose that $\wU$ is a $1$-dimensional quantum walk of the form
\[\widehat U = 
\left( L^2(\T_{2 \pi}) \otimes \C^n, \left( \widehat U^t \right)_{t \in \Z}, \widehat E \right),\]
where
\begin{itemize}
\item
the space $\T_{2 \pi}$ is the quotient space $\R / 2 \pi \Z$ and every open interval in $\T_{2 \pi}$ has a coordinate $k$ arising form $\R$,
\item
$L^2(\T_{2 \pi}) \otimes \C^n$ can be regarded as the set of all the column vectors consisting of $n$ vectors in $L^2 (\T_{2 \pi})$,
\item
the operator $\widehat {U}$ can be expressed as a $(n \times n)$-matrix whose entries are multiplication operators on $L^2 (\T_{2 \pi})$,
\item
$\widehat E$ is the spectral decomposition of the self-adjoint operator
$\frac{1}{\ii} \frac{d}{dk} \otimes \mathrm{id}$.
\end{itemize}
It has been already shown in Theorem \ref{theorem: Homogeneous QW in ell-form} that every $1$-dimensional homogeneous quantum walk $U$
is similar to a walk of this form.
In such a case, we call $\widehat U$ an inverse Fourier transform of $U$.

For the rest of this paper, we focus on $1$-dimensional analytic homogeneous quantum walks of the form $\widehat U$.
Because each matrix entries of $\wU$ gives an analytic function, for every $k \in \R$ we obtain a $(n \times n)$-complex matrix $\wU(k)$.
Thus $\wU(\ \cdot\ )$ gives an analytic map from a domain containing $\R$ to the space $M_n(\C)$ of the complex matrices.
The positive number $2 \pi$ is a period of the analytic map $\wU (\ \cdot \ )$.

Let us keep in mind that $\wU$ is not an intermediate product of our study.
The triple $\widehat U = 
\left( L^2(\T_{2 \pi}) \otimes \C^n, \left( \widehat U^t \right)_{t \in \Z}, \widehat E \right)$ is an object in our new category.
This strategy makes our study simple.

\subsection{Eigenvalue functions of $1$-dimensional homogeneous quantum walks}

Denote by $\T$ the set of complex numbers whose absolute values are $1$.
\begin{definition}
An analytic function $\lambda \colon \R \to \T$ is called {\rm an eigenvalue function} of 
\[\widehat U = 
\left( L^2(\T_{2 \pi}) \otimes \C^n, \left( \widehat U^t \right)_{t \in \Z}, \widehat E \right)\]
(or an eigenvalue function of $U$), if for every $k \in \R$,
$\lambda(k)$ is an eigenvalue of $\wU(k)$.
\end{definition}

Since the map $\wU(k)$ has a period, and since for every $k$ the set of eigenvalues of $\wU(k)$ is a finite subset of $\T$,
$\lambda$ has to have period.
There exists $d \in \N$ such that $2 \pi d$ is a period of $\lambda$.

\begin{example}
Although $2 \pi$ is a period of the map $\wU(k)$,
$2 \pi$ is not necessarily a period of the eigenvalue function $\lambda$.
Let us consider the walk
$U = 
\left(
\begin{array}{cc}
a S & -b S\\
b & a
\end{array}
\right)$
on $\ell_2(\Z) \otimes \C^2$.
Its inverse Fourier transform $\wU$ is 
$\left(
\begin{array}{cc}
a \exp(\ii k) & -b \exp(\ii k)\\
b & a
\end{array}
\right)$.
The analytic function 
\[\lambda(k) 
= \exp (\ii k /2) \left( a \cos (k / 2)
- \ii \sqrt{1 - a^2 \cos^2 (k /2)} \right)\]
is an eigenvalue function of $\wU$.
The minimal period of $\lambda$ is $4 \pi$.
\end{example}

\begin{example}
The eigenvalue function $\lambda$ has to have a period of the form $2 \pi d$, but
the minimal period is not necessarily of the form $2 \pi d$.

Let us consider the walk
$U = 
\left(
\begin{array}{ccc}
0 & S & 0\\
0 & 0 & S\\
1 & 0 & 0
\end{array}
\right)$
on $\ell_2(\Z) \otimes \C^3$.
Its inverse Fourier transform $\wU$ is 
$\left(
\begin{array}{ccc}
0 & \exp(\ii k) & 0\\
0 & 0 & \exp(\ii k)\\
1 & 0 & 0
\end{array}
\right)$.
The analytic function $\lambda(k) = \exp (2 \ii k /3)$
is an eigenvalue function of $\wU$.
The number $6 \pi = 2 \pi \cdot 3$ is a period of $\lambda$. 
The minimal period of $\lambda$ is $3 \pi$.
\end{example}

The author would like to emphasize that existence of an eigenvalue function is non-trivial fact,
since there might exists $k \in \R$ such that $\wU(k)$ has a multiple eigenvalue.

\begin{theorem}[Proposition 4.8 in \cite{SaigoSako}]
For every $1$-dimensional analytic homogeneous quantum walk of the form 
\[\left( L^2(\T_{2 \pi}) \otimes \C^n, \left( \widehat U^t \right)_{t \in \Z}, \widehat E \right),\] 
there exists a finite family of eigenvalue functions $\lambda_1, \cdots, \lambda_m \colon \R \to \C$ of $\wU$
satisfying that for every $k \in \R$
the set of all the eigenvalues of $\wU(k)$ is equal to
\[\{\lambda_i(k + 2 \pi j) \ | \ i \in \{1, \cdots, m\}, j \in \Z\}.\]
\end{theorem}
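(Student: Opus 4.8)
The plan is to first produce $n$ globally real-analytic eigenvalue functions on all of $\R$ (listing the eigenvalues of $\wU(k)$ with multiplicity), and then to exploit the period $2\pi$ of $\wU(\ \cdot\ )$ to collapse this family of $n$ functions into the asserted family of $m$ representatives together with their $2\pi$-translates. Throughout I regard $\wU(\ \cdot\ )$ as a holomorphic $M_n(\C)$-valued map on a domain of $\C$ containing $\R$, which is $2\pi$-periodic and takes unitary values on $\R$; its characteristic polynomial $p(k, \lambda) = \det(\lambda I - \wU(k))$ then has coefficients holomorphic in $k$.

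The heart of the argument, and the step I expect to be the main obstacle, is local analyticity of the eigenvalues across a point $k_0$ at which $\wU(k_0)$ has a multiple eigenvalue $\lambda_0$. Near such a point, Weierstrass preparation factors $p$ as a Weierstrass polynomial in $\lambda$ whose roots split into Puiseux cycles: in a cycle of length $q$ the roots are $\Lambda(\omega^s (k - k_0)^{1/q})$ for $s = 0, \dots, q-1$, where $\Lambda$ is a single holomorphic germ and $\omega = \exp(2 \pi \ii / q)$. I would use unitarity to exclude $q \ge 2$. Indeed, for real $k$ every root lies on $\T$, so $|\Lambda(z)| = 1$ on the $2q$ rays $\{z : z^q \in \R\}$; expanding $|\Lambda(z)|^2 - 1$ and matching the vanishing orders along these rays is the classical Rellich computation, and it forces every exponent occurring in $\Lambda$ to be divisible by $q$. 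Hence $\Lambda$ is a holomorphic function of $z^q = k - k_0$, contradicting minimality of the cycle length unless $q = 1$. Thus each branch is genuinely analytic through $k_0$, with values in $\T$. Equivalently, one may choose a local analytic branch of $-\ii \log$ on a neighborhood of the spectrum and invoke Rellich's theorem for the resulting self-adjoint analytic family, then exponentiate.

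Once local analyticity is known, global continuation is immediate: $\R$ is simply connected, so the local analytic branches patch into real-analytic functions $\lambda_1, \dots, \lambda_n \colon \R \to \T$ such that $\{\lambda_1(k), \dots, \lambda_n(k)\}$ is, with multiplicity, the spectrum of $\wU(k)$ for every $k$; the identity $|\lambda_j(k)|^2 = 1$ on $\R$ keeps the values on $\T$. Each $\lambda_j$ is then an eigenvalue function in the sense of the preceding definition.

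Finally I would organize these $n$ functions using the period. Since $\wU(k + 2\pi) = \wU(k)$, the translated family $\{\lambda_j(\ \cdot\ + 2\pi)\}_j$ again lists the spectrum, so on any interval of simple-spectrum parameters (the complement in $\R$ of the discrete zero set of the discriminant) each $\lambda_j(\ \cdot\ + 2\pi)$ coincides with a single $\lambda_i$; by the identity theorem this defines a permutation $\sigma \in S_n$ with $\lambda_j(\ \cdot\ + 2\pi) = \lambda_{\sigma(j)}$. Decomposing $\sigma$ into disjoint cycles and choosing one representative $\lambda_i$ per cycle, a cycle of length $\ell$ consists precisely of $\lambda_i(\ \cdot\ + 2\pi s)$ for $s = 0, \dots, \ell - 1$ and forces $\lambda_i$ to have period $2\pi \ell$. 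Letting $m$ be the number of cycles and $\lambda_1, \dots, \lambda_m$ the chosen representatives, the translates $\lambda_i(k + 2\pi j)$ over $i \in \{1, \dots, m\}$ and $j \in \Z$ reproduce all of $\lambda_1(k), \dots, \lambda_n(k)$, hence exactly the set of eigenvalues of $\wU(k)$, as required. The degenerate case where the discriminant vanishes identically (for instance a constant or globally coincident spectrum) is covered by the same bookkeeping once the global $\lambda_j$ are in hand.
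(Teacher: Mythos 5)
You should be aware that the paper itself contains no proof of this statement: it is imported verbatim as Proposition 4.8 of \cite{SaigoSako}, and the only in-text commentary is the remark that existence of eigenvalue functions is non-trivial precisely because $\wU(k)$ may have multiple eigenvalues, together with examples showing the functions may only be periodic with period $2\pi d$, $d > 1$. Your argument addresses exactly these two points and is, as far as I can check, correct: it is the classical Rellich--Puiseux mechanism for analytic families with unitary values, followed by the $2\pi$-translation bookkeeping, and it proves the (stronger) multiplicity-respecting statement. Three steps deserve tightening. First, ``matching the vanishing orders along these rays'' is too terse to verify; the clean version is to set $\Lambda^*(z) := \overline{\Lambda(\bar z)}$ and note that $|\Lambda| = 1$ on the ray of angle $\pi s/q$ gives $\Lambda(t e^{\ii \pi s/q})\,\Lambda^*(t e^{-\ii \pi s/q}) = 1$ for real $t$, hence identically in $t$; comparing each $s$ with $s = 0$ (where $\Lambda\Lambda^* = 1$) yields
\[
\Lambda^*\bigl(e^{-2\pi \ii s/q}\, z\bigr) = \Lambda^*(z), \qquad s = 0, 1, \dots, 2q-1,
\]
so every Taylor coefficient of $\Lambda^*$, hence of $\Lambda$, outside degrees divisible by $q$ vanishes, which collapses the Puiseux cycle and forces $q = 1$. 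Second, the global gluing is not quite ``immediate'': one should argue that a maximal analytic branch defined on $(a,b)$ extends past $b$ because near $b$ the spectrum is again listed by finitely many analytic functions, and the branch must agree with one of them on a subset of $(b-\epsilon,b)$ with nonempty interior (finitely many closed sets covering an interval; Baire), then apply the identity theorem. Third, your permutation $\sigma$ is not well defined when two global branches coincide identically, since then no interval of simple spectrum exists; the repair is the same Baire argument (for fixed $j$ the sets $\{k : \lambda_j(k+2\pi) = \lambda_i(k)\}$ cover $\R$, so $\lambda_j(\cdot + 2\pi) \equiv \lambda_i$ for some $i$), plus a Hall-type matching between the two multisets of branches, both of which list the spectrum with multiplicity, to make the assignment bijective. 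These are routine repairs, not conceptual gaps: the architecture of local analyticity through crossings, global continuation along $\R$, and the cycle decomposition of the translation permutation producing representatives of period $2\pi\ell$ is sound, and it is consistent with the phenomena the paper emphasizes, such as the period-$4\pi$ eigenvalue function of the walk $\left(\begin{smallmatrix} a S & -b S\\ b & a \end{smallmatrix}\right)$.
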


\begin{example}
Let $U$ be the $3$-state Grover walk.
%defined by 
%\[U = 
%\dfrac{1}{3}
%\left(
%\begin{array}{ccc}
%S^{-1} &    0    & 0       \\
%0       &    1     & 0       \\
%0       &    0     & S
%\end{array}
%\right)
%\left(
%\begin{array}{ccccc}
%1 & -2 & -2 \\
%-2 & 1 & -2 \\
%-2 & -2 & 1 \\
%\end{array}
%\right).
%\]
The inverse Fourier transform is given by 
\[\wU = 
\dfrac{1}{3}
\left(
\begin{array}{ccc}
\exp(- \ii k) &    0    & 0       \\
0       &    1     & 0       \\
0       &    0     & \exp(\ii k)
\end{array}
\right)
\left(
\begin{array}{ccccc}
1 & -2 & -2 \\
-2 & 1 & -2 \\
-2 & -2 & 1 \\
\end{array}
\right).
\]
Define $\lambda_1, \lambda_2 \colon \R \to \T$
by
\[\lambda_1(k) = -1, \quad \lambda_2 = \frac{2 + \cos k}{3} + \frac{1}{3} \ii \sin \frac{k}{2} \sqrt{10 + 2 \cos k}.\]
The set of all the eigenvalues of $\wU(k)$ is equal to $\{\lambda_1(k), \lambda_2(k), \lambda_2(k  + 2 \pi)\}$.
\end{example}

\subsection{Model quantum walks and the structure theorem}

Every $1$-dimensional homogeneous quantum walks is decomposed as a direct sum of model quantum walks.
Let us define model quantum walks.
Let $\lambda \colon \R \to \C$ is an analytic function and let $p$ be a positive number and a period of $\lambda$. The function $\lambda$ can be a constant function.
Let $\T_{p}$ be the torus $\R / p \Z$ of length $p$.
Note that every interval in $\T_{p}$ has a coordinate $k$ arising from $\R$.
Also note that the self-adjoint operator $\frac{1}{\ii} \frac{d}{d k}$ gives a spectral measure on $\frac{2 \pi}{p} \Z \subset \R$.
Denote by $M[\lambda]$ be the multiplication operator on $L^2(\T_p)$
by $\lambda$.

\begin{definition}
Define {\rm a model quantum walk} $\wU_{\lambda, p}$
by the triple
$\left( L^2(\T_p), (M[\lambda]^t)_{t \in \Z}, \frac{1}{\ii} \frac{d}{d k} \right)$.
\end{definition}
All the model quantum walk is analytic.
Note that the spectrum of the unitary operator $M[\lambda]$
is the image of $\lambda$.

What happens when $p$ is not the minimal period of $\lambda$?
\begin{lemma}
[Proposition 3.3 in \cite{SakoIntertwiner}]
\label{lemma: decomposition of model QW}
Let $n$ be a natural number
There exists a unitary intertwiner with finite propagation between
$\wU_{\lambda, n p}$ and the direct sum $\left(\wU_{\lambda, p}\right)^{\oplus n}$.
\end{lemma}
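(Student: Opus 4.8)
The plan is to construct the intertwiner explicitly in the Fourier (character) picture, where both coordinate systems are diagonalized. First I would fix orthonormal bases of characters: on $L^2(\T_{np})$ write $\chi_m(k) = \exp\left(\frac{2\pi \ii m}{np} k\right)$ for $m \in \Z$, and on $L^2(\T_p)$ write $\psi_q(k) = \exp\left(\frac{2\pi \ii q}{p} k\right)$ for $q \in \Z$. The coordinate operator $\frac{1}{\ii}\frac{d}{dk}$ is diagonal in these bases, placing $\chi_m$ at position $\frac{2\pi m}{np}$ and $\psi_q$ at position $\frac{2\pi q}{p}$. Splitting the index as $m = nq + j$ with $j \in \{0, \dots, n-1\}$ gives the orthogonal decomposition $L^2(\T_{np}) = \bigoplus_{j=0}^{n-1} H_j$, where $H_j$ is the closed span of $\{\chi_{nq+j} : q \in \Z\}$ (equivalently, the eigenspace of translation by $p$ for the eigenvalue $\exp(2\pi\ii j/n)$).

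Next I would use that $\lambda$ has period $p$: its Fourier expansion contains only harmonics $\exp\left(\frac{2\pi\ii l}{p}k\right)$, so $M[\lambda]$ sends $\chi_{nq+j}$ into the span of $\{\chi_{n(q+l)+j}\}_l$. Thus $M[\lambda]$ preserves each $H_j$ and acts there by exactly the same shift-with-coefficients rule by which $M[\lambda]$ acts on $L^2(\T_p)$ through the $\psi_q$. This makes the relabeling $W_j \colon H_j \to L^2(\T_p)$, $\chi_{nq+j} \mapsto \psi_q$, a unitary (a bijection of orthonormal bases) that intertwines $M[\lambda]$; the full intertwiner is $W = \bigoplus_{j=0}^{n-1} W_j \colon L^2(\T_{np}) \to L^2(\T_p)^{\oplus n}$, and by construction it intertwines $\wU_{\lambda, np}$ with $(\wU_{\lambda,p})^{\oplus n}$.

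It then remains to verify finite propagation with respect to the coordinate systems. The character $\chi_{nq+j}$ sits at position $\frac{2\pi(nq+j)}{np} = \frac{2\pi q}{p} + \frac{2\pi j}{np}$, while its image $\psi_q$ sits at position $\frac{2\pi q}{p}$. Hence $W$ displaces the coordinate support of every basis vector by $\frac{2\pi j}{np}$, a quantity bounded by $\frac{2\pi}{p}$ uniformly in $q$ and in $j \in \{0, \dots, n-1\}$. This is exactly the assertion that $W$ has finite propagation, with $R = \frac{2\pi}{p}$.

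The step I expect to be the main obstacle is choosing the right unitary, rather than doing the final estimates. The obvious candidate, namely restricting a function on $\T_{np}$ to a fundamental domain of length $p$ and reperiodizing, is also a unitary $H_j \to L^2(\T_p)$ intertwining $M[\lambda]$, but it destroys finite propagation, since a single character spreads across infinitely many modes after reperiodization. The key realization is that the character relabeling $\chi_{nq+j} \mapsto \psi_q$ differs from this restriction by the fixed unimodular twist $\exp\left(\frac{2\pi\ii j}{np}k\right)$, and it is precisely this twist that converts an infinite-propagation map into one with propagation at most $\frac{2\pi}{p}$. Once the relabeling is adopted, both the intertwining identity and the displacement bound follow immediately from the Fourier bookkeeping above.
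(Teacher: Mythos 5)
Your proof is correct and takes essentially the same route as the paper: the paper's unitary $V$ sends the character $f_x$, $x = y(x) + \tfrac{2\pi}{np}r(x)$, to $g_{y(x)}$ placed in the $r(x)$-th summand, which is exactly your relabeling $\chi_{nq+j} \mapsto \psi_q$ into the $j$-th component, and both arguments verify the intertwining identity through the period-$p$ Fourier expansion of $\lambda$. The only difference is cosmetic: where the paper dismisses finite propagation ``by definition,'' you make the displacement bound $\tfrac{2\pi j}{np} < \tfrac{2\pi}{p}$ explicit, which is a welcome clarification rather than a new idea.
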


\begin{proof}
We shall give a self-contained proof.
The quantum walk $\wU_{\lambda, n p}$ acts on $L^2(\T_{np})$,
and the quantum walk $\left(\wU_{\lambda, p}\right)^{\oplus n}$
acts on $L^2(\T_p)^{\oplus n}$.
For $x \in \frac{2\pi}{n p} \Z$, define $f_x \in L^2(\T_{np})$ by
$f_x(k) = \exp(\ii x k)$.
For $y \in \frac{2\pi}{p} \Z$, define $g_x \in L^2(\T_{p})$ by
$g_y(l) = \exp(\ii y l)$.
For every $x \in \frac{2\pi}{n p} \Z$, there exists a unique pair $(y(x), r(x)) \in  \left( \frac{2\pi}{p} \Z, \{0, 1, 2, \cdots, n -1\} \right)$
such that $x = y(x) + \frac{2 \pi}{n p} r(x)$.

Define a unitary operator $V \colon L^2(\T_{np}) \to L^2(\T_p) \otimes \ell_2(\{0, 1, \cdots, n-1\})$
by
\[f_x \mapsto (0, \cdots, 0, g_{y(x)}, 0, \cdots, 0),\]
where $g_{y(x)}$ is put on the $r(x)$-th entry.
The function $\lambda$ on $\T_p$ is of the form
$\sum_{y \in  \frac{2\pi}{p} \Z} c_y g_y$.
The function $\lambda$ on $\T_{n p}$ is of the form
$\sum_{y \in  \frac{2\pi}{p} \Z} c_y f_y$.
For every $x \in \frac{2\pi}{n p} \Z$,
we have
\begin{eqnarray*}
V M[\lambda] f_x &=& V \sum_{y \in  \frac{2\pi}{p} \Z} c_y f_{x + y}\\
&=&
\left( 
0, \cdots, 0, \sum_{y \in  \frac{2\pi}{p} \Z} c_y g_{y(x) + y}, 0, \cdots, 0
\right)
\\
&=&
(0, \cdots, 0, M[\lambda] g_{y(x)}, 0, \cdots, 0).
\end{eqnarray*}
It follows that $V$ intertwines $\wU_{\lambda, n p}$ and the direct sum $\left(\wU_{\lambda, p}\right)^{\oplus n}$.
By definition, $V$ has finite propagation.
\end{proof}

\begin{theorem}[Proposition 5.6 in \cite{SaigoSako}]\label{theorem: structure theorem}
Let $U$ be an arbitrary $1$-dimensional homogeneous analytic quantum walk.
Suppose that
eigenvalue functions $\lambda_1, \cdots, \lambda_m \colon \R \to \T$ of $\wU$ and elements $p(1), \cdots, p(m)$ of $2 \pi \N$ satisfy the following conditions:
\begin{enumerate}
\item
For every $j \in \{1, \cdots, m\}$, $p(j)$ is one of the periods of 
$\lambda_j$.
\item
For every $k \in \R$, the list of real numbers
\begin{itemize}
\item
$\lambda_1(k)$, $\lambda_1(k + 2 \pi)$, $\cdots$, $\lambda_1(k + p(1) - 2 \pi)$, 
\item
$\lambda_2(k)$, $\lambda_2(k + 2 \pi)$, $\cdots$, $\lambda_2(k + p(2) - 2 \pi)$,
\item
$\quad \vdots \quad $, 
\item
$\lambda_m(k)$, $\lambda_m(k + 2 \pi)$, $\cdots$, $\lambda_m(k + p(m) - 2 \pi)$.
\end{itemize}
coincides with the list of eigenvalues of $\wU(k)$ including multiplicity.
\end{enumerate}
Then there exists an analytic unitary which intertwines $\wU$ and the direct sum $\oplus_{j = 1}^m \wU_{\lambda_j, p(j)}$ of the model quantum walks.
\end{theorem}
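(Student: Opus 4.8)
The plan is to construct the intertwiner explicitly from a global analytic orthonormal eigenframe of the matrix family $\wU(k)$. Write $p(j) = 2 \pi d_j$ with $d_j \in \N$; condition (2), counting multiplicities, forces $\sum_{j=1}^m d_j = n$. The model walk $\wU_{\lambda_j, p(j)}$ acts on $L^2(\T_{p(j)})$ by multiplication by $\lambda_j$, so $\oplus_{j=1}^m \wU_{\lambda_j, p(j)}$ acts on $\oplus_{j=1}^m L^2(\T_{p(j)})$. Suppose that for each $j$ I can produce an analytic map $e_j \colon \R \to \C^n$ of unit vectors, of period $p(j)$, with $\wU(k) e_j(k) = \lambda_j(k) e_j(k)$, and such that the full family $\{ e_j(k + 2 \pi i) : 1 \le j \le m, \ 0 \le i \le d_j - 1 \}$ is an orthonormal basis of $\C^n$ for every $k$. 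I would then define $V \colon \oplus_j L^2(\T_{p(j)}) \to L^2(\T_{2 \pi}) \otimes \C^n$ on the $j$-th summand by
\[
(V \phi_j)(k) = \sum_{i=0}^{d_j - 1} \phi_j(k + 2 \pi i) \, e_j(k + 2 \pi i),
\]
a ``folding'' that mixes the $d_j$ sheets of $\lambda_j$ into the single fibre $\C^n$.

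The heart of the proof is producing this frame. Since $\wU(k)$ is a real-analytic family of unitary matrices in the single real parameter $k$, I would invoke the one-parameter analytic perturbation theorem (Rellich, Kato): there exist globally analytic eigenvalue functions $\mu_1, \cdots, \mu_n \colon \R \to \T$ and globally analytic eigenvectors $u_1(k), \cdots, u_n(k)$ forming an orthonormal basis of $\C^n$ for every $k$, and this survives the points where eigenvalues collide. At a point of simple spectrum each $\mu_l$ agrees with some $\lambda_j(\cdot + 2 \pi c)$, and by the identity theorem this identification is global, so the $\mu_l$ are precisely a relabelling of the functions $\lambda_j(\cdot + 2 \pi i)$, matched to multiplicities by condition (2). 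Grouping the $u_l$ by the family index $j$ gives analytic orthonormal eigenvector fields for the $d_j$ sheets of $\lambda_j$. Because $\wU$ has period $2 \pi$, the associated rank-one eigenprojection is analytic and $p(j)$-periodic; viewed as an analytic Hermitian line bundle over the circle $\T_{p(j)}$ it is trivial, and a parallel unit section has constant monodromy $\omega_j \in \T$, which I cancel by multiplying by the analytic phase $\exp(\ii \beta_j k)$ with $\exp(\ii \beta_j p(j)) = \omega_j^{-1}$. This yields a genuinely $p(j)$-periodic analytic unit eigenvector $e_j$ without disturbing orthonormality.

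This step is the main obstacle: obtaining a single analytic orthonormal choice that is periodic and that survives the eigenvalue crossings. The real-analytic one-parameter theory is exactly what makes it possible, and it is the reason one must avoid the naive Riesz projection onto an individual simple eigenvalue, which becomes singular at a crossing. The preceding remarks in the excerpt (the non-differentiable eigenvalue example, the multi-valued eigenvalue example) are precisely the phenomena that would defeat a crude diagonalization and that the analytic eigenvalue/eigenvector functions are designed to handle.

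Granting the frame, the remaining verifications should be routine. The output $V \phi_j$ is $2 \pi$-periodic in $k$ exactly because $e_j$ has period $p(j) = 2 \pi d_j$, so $V$ lands in $L^2(\T_{2 \pi}) \otimes \C^n$. Using $\wU(k) e_j(k + 2 \pi i) = \wU(k + 2 \pi i) e_j(k + 2 \pi i) = \lambda_j(k + 2 \pi i) e_j(k + 2 \pi i)$, a direct computation gives $\wU V \phi_j = V M[\lambda_j] \phi_j$, so $V$ intertwines $\oplus_j \wU_{\lambda_j, p(j)}$ with $\wU$. Orthonormality of $\{ e_j(k + 2 \pi i) \}_{j,i}$ makes $V$ an isometry on each summand with mutually orthogonal images, and the count $\sum_j d_j = n$ shows these images fill $\C^n$ fibrewise, so $V$ is unitary. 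For analyticity I would compute $v_2(\kk) \, V \, v_1(-\kk)$, where $v_1, v_2$ are the rotation representations attached to the two coordinate systems; this operator is the same folding formula with $e_j(k + 2 \pi i)$ replaced by $e_j(k - \kk + 2 \pi i)$, so it depends on $\kk$ only through the analytic map $e_j(\cdot - \kk)$ and extends holomorphically in $\kk$. Hence $V$ is an analytic unitary intertwiner, and its inverse is analytic as well, which proves the theorem.
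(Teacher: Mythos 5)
Your strategy coincides with the one the paper relies on: the paper reduces the whole theorem to the construction of global analytic sections of unit eigenvectors of $\wU(k)$ (deferring exactly that step to Proposition 4.11 of \cite{SaigoSako}), and your folding map $V$ is the intended way to assemble such sections into a unitary intertwiner. In the generic case where the $n$ functions $\lambda_j(\cdot + 2\pi i)$, $1 \le j \le m$, $0 \le i \le d_j - 1$, are pairwise distinct as analytic functions, your reconstruction is essentially correct: eigenvectors of the unitary $\wU(k)$ for distinct eigenvalues are orthogonal, continuity handles the crossing points, and your verifications of the intertwining relation, unitarity, and analyticity of $V$ go through.

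There is, however, a genuine gap: the hypotheses of the theorem do not confine you to that generic case. Condition (2) counts eigenvalues \emph{with multiplicity}, and condition (1) allows $p(j)$ to be a non-minimal period, so the list may contain identically equal functions. Then your argument fails at specific points: there may be no point of simple spectrum at all, so the identification of the Rellich branches with the functions $\lambda_j(\cdot + 2\pi i)$ cannot be anchored as you describe; the relevant eigenprojections are not rank one, so the line-bundle/monodromy step does not apply as stated; and, most seriously, orthogonality of the translates $e_j(k + 2\pi i)$ is no longer automatic, so $V$ need not be unitary. Concretely, let $\lambda$ have minimal period $2\pi$, let $\wU(k) = \mathrm{diag}(\lambda(k), \lambda(k))$ act on $L^2(\T_{2\pi}) \otimes \C^2$, and take $m = 1$, $\lambda_1 = \lambda$, $p(1) = 4\pi$: conditions (1) and (2) hold, every unit vector field is a unit eigenvector field, and your parallel-transport recipe can legitimately return the constant section $e_1(k) = (1,0)$, which is analytic, $4\pi$-periodic, and monodromy-free; but then $e_1(k) = e_1(k+2\pi)$, the claimed frame is not a basis, and the folded map $V$ annihilates the odd part of $\phi_1$, so it is far from unitary. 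What is missing is the extra constraint $e_1(k) \perp e_1(k + 2\pi)$, satisfied for instance by $e_1(k) = \frac{1}{\sqrt{2}}(1, \exp(\ii k/2))$, which nothing in your construction enforces. To close the gap one should first pass to minimal periods using Lemma \ref{lemma: decomposition of model QW}, then group identical eigenvalue functions into a single rank-$r$ analytic eigenprojection bundle over the relevant circle of length $q$, and run the monodromy correction at that rank: a parallel orthonormal frame $F$ satisfies $F(k+q) = F(k) M$ with $M$ a constant $r \times r$ unitary matrix, and replacing $F(k)$ by $F(k)\exp(\ii k A)$, where $A = A^*$ commutes with $M$ and $\exp(\ii q A) = M^{-1}$, produces $r$ periodic analytic sections that are orthonormal at every $k$; only such a simultaneous choice makes the full family $\{e_j(k + 2\pi i)\}_{j,i}$ an orthonormal basis fibrewise.
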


The most non-trivial part of the proof is a construction of analytic unitary intertwiner between $\wU$ and the direct sum of model quantum walks
$\oplus_{j = 1}^m \wU_{\lambda_j, p(j)}$.
This is done when we find global analytic sections of unit eigenvectors of $\wU(k)$.
The precise argument is in Proposition 4.11 of \cite{SaigoSako}.
The analytic unitary intertwiner does not necessarily have finite propagation.

In the case that $p(j)$ is not the minimal period, we may decompose the model quantum walk $\wU_{\lambda_j, p(j)}$ to a direct sum of model quantum walks.
Thus we obtain direct sum of constant quantum walks and model quantum walks with minimal periods. 

\section{Morphisms between $1$-dimensional homogeneous quantum walks}
\label{section: morphisms between 1-dim homogeneous QWs}

To determine the set of the uniform intertwiners between given $1$-dimensional homogeneous analytic quantum walks,
we have only to determine the set of the uniform intertwiners between
model quantum walks, by Theorem \ref{theorem: structure theorem}.
Note that there exists no non-zero intertwiner between constant a quantum walk and a model quantum walk of non-constant analytic function, since
the latter unitary operator has no eigenvector.
By Lemma \ref{lemma: decomposition of model QW}, we may focus on model quantum walks $\wU_{\lambda, p}$ defined by non-constant analytic functions $\lambda$ and its minimal periods $p$,

\subsection{The set of morphisms between two model QWs associated to minimal periods}

\begin{theorem}
\label{theorem: existence of an intertwiner}
Let $j$ be $1$ or $2$.
Let $\lambda_j \colon \R \to \T$ be a non-constant analytic function.
Let $p(j)$ be the minimal period of $\lambda_j$.
The the following conditions are equivalent:
\begin{enumerate}
\item
There exists a real number $l$ such that for every real number $k$,
$\lambda_2(k) = \lambda_1(k - l)$.
\item
There exists a unitary intertwiner from $\wU_{\lambda_1, p(1)}$
to $\wU_{\lambda_2, p(2)}$ with finite propagation.
\item
These walks $\wU_{\lambda_1, p(1)}$, $\wU_{\lambda_2, p(2)}$ are similar.
\item
There exists a non-zero uniform intertwiner from $\wU_{\lambda_1, p(1)}$
to $\wU_{\lambda_2, p(2)}$.
\end{enumerate}
\end{theorem}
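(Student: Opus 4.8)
The plan is to prove the cycle of implications $(1)\Rightarrow(2)\Rightarrow(3)\Rightarrow(4)\Rightarrow(1)$. The implications $(2)\Rightarrow(3)\Rightarrow(4)$ are formal consequences of the regularity hierarchy: by Theorem \ref{theorem: 4 kinds of regularity} a unitary intertwiner with finite propagation is analytic, hence smooth, hence a smooth invertible intertwiner, which is exactly a similarity; and a similarity is witnessed by a smooth invertible intertwiner, which is in particular uniform and nonzero. Thus the whole content is concentrated in $(1)\Rightarrow(2)$ (easy and explicit) and in $(4)\Rightarrow(1)$ (the real theorem).

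For $(1)\Rightarrow(2)$, first observe that if $\lambda_2(k)=\lambda_1(k-l)$ then $\lambda_2$ is a rigid translate of $\lambda_1$, so the two functions have the same set of periods; in particular $p(1)=p(2)=:p$, and both model walks act on $L^2(\T_p)$. Define the translation operator $T$ on $L^2(\T_p)$ by $(T\xi)(k)=\xi(k-l)$. A one-line computation gives $T M[\lambda_1]=M[\lambda_2]T$, so $T$ is a unitary intertwiner. On the Fourier basis $f_x(k)=\exp(\ii x k)$, $x\in\frac{2\pi}{p}\Z$, one has $Tf_x=\exp(-\ii x l)f_x$, so $T$ is diagonal in the position basis and preserves the support of every vector; hence $T$ has finite propagation (with $R=0$). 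This is $(2)$.

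The heart of the argument is $(4)\Rightarrow(1)$. Let $W\colon L^2(\T_{p(1)})\to L^2(\T_{p(2)})$ be a nonzero uniform intertwiner, so $WM[\lambda_1]=M[\lambda_2]W$. Because $W$ intertwines the two multiplication unitaries it intertwines every continuous function of them, $Wg(M[\lambda_1])=g(M[\lambda_2])W$; taking $g$ supported in a small arc of $\T$ shows that the kernel of $W$ is concentrated on the real-analytic coincidence curve $Z=\{(k,k'):\lambda_2(k)=\lambda_1(k')\}$. Since $W\neq0$ the common spectrum $\lambda_1(\R)\cap\lambda_2(\R)$ is nonvoid, so $Z\neq\emptyset$. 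Fix a point $z_0$ which is a regular value of both $\lambda_1$ and $\lambda_2$ (all but finitely many spectral points qualify, as the $\lambda_j$ are non-constant analytic), choose a smooth bump $\psi$ on $\T$ supported in a tiny arc around $z_0$, and replace $W$ by $W'=WM[\psi\circ\lambda_1]=M[\psi\circ\lambda_2]W$. Since $M[\psi\circ\lambda_j]$ is multiplication by a smooth function it is a smooth, hence uniform, morphism, so $W'$ is again a uniform intertwiner, and $z_0$ can be chosen so that $W'\neq0$. Over the selected arc the two multiplication operators have finite-multiplicity absolutely continuous spectrum with analytic eigen-branches, so $W'$ is a finite sum of \emph{weighted composition operators} $(W'\xi)(k)=\sum_i a_i(k)\,\xi(h_i(k))$, where each analytic branch $h_i$ satisfies $\lambda_2(k)=\lambda_1(h_i(k))$ and the $h_i$ are pairwise distinct.

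It remains to show that uniformity forces each branch with $a_i\not\equiv0$ to be a translation. Writing $v_j(s)$ for the translation-by-$s$ representation of the coordinate system on $L^2(\T_{p(j)})$, a direct computation gives that $W'_s=v_2(s)W'v_1(-s)$ is again a weighted composition operator, now with branch maps $h_{i,s}(k)=h_i(k+s)-s$. I would then test $W'_s-W'_0$ against the high-frequency modes $f_x$, $|x|\to\infty$: the phases $\exp(\ii x h_{i,s}(k))$ oscillate, Riemann--Lebesgue kills all cross terms between distinct branches as well as any branch for which $h_{i,s}-h_{i,0}$ is non-constant, and the \emph{unboundedness} of the position spectrum $\frac{2\pi}{p(j)}\Z$ lets one rotate the residual phase so the remaining contribution adds in antiphase. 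The upshot is that $\sup_x\|(W'_s-W'_0)f_x\|/\|f_x\|$ stays bounded below by a positive constant whenever some $h_{i,s}$ differs from $h_{i,0}$ on a positive-measure set. Norm-continuity of $s\mapsto W'_s$ (i.e.\ uniformity of $W'$) therefore forces $h_{i,s}=h_{i,0}$ for all $s$, that is $h_i(k+s)=h_i(k)+s$, whence $h_i(k)=k+c_i$; then $\lambda_2(k)=\lambda_1(k+c_i)$ on the arc and, by analytic continuation, on all of $\R$, giving $(1)$ with $l=-c_i$. The main obstacle is exactly this last reduction: establishing the weighted-composition structure over a regular arc while keeping the uniform structure intact (which is why smooth, rather than sharp spectral, cut-offs are essential), and proving the norm lower bound in the delicate case where two branch maps differ merely by a constant, where naive Riemann--Lebesgue fails and the unbounded position spectrum must be exploited.
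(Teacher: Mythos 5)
Your treatment of the easy implications matches the paper exactly: $(1)\Rightarrow(2)$ via the translation unitary (the paper's $\sigma_l$), which maps eigenvectors of $\frac{1}{\ii}\frac{d}{dk}$ to eigenvectors with the same eigenvalue and hence has finite propagation, and $(2)\Rightarrow(3)\Rightarrow(4)$ via the regularity hierarchy of Theorem \ref{theorem: 4 kinds of regularity}. The difference is in $(4)\Rightarrow(1)$: the paper does not prove this implication inline at all --- it invokes \cite[Proposition 3.10]{SakoIntertwiner} --- whereas you attempt a self-contained argument. That attempt is an outline, not a proof, and the gaps sit precisely at the two places you yourself flag.

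Concretely: (i) the claim that the localized intertwiner $W'$ is a finite sum of weighted composition operators $(W'\xi)(k)=\sum_i a_i(k)\,\xi(h_i(k))$ along analytic branches $h_i$ with $\lambda_2=\lambda_1\circ h_i$ is asserted, not proved. What general spectral-multiplicity theory gives for a bounded intertwiner of two multiplication operators is only a measurable field of matrices in a direct-integral picture; upgrading this to finitely many analytic branch maps with bounded weights, in a form compatible with the conjugation $W'_s=v_2(s)\,W'\,v_1(-s)$, is the structural heart of the implication and is missing. (ii) The key quantitative step --- that $\|W'_s-W'_0\|$ stays bounded below by a positive constant whenever some branch fails to satisfy $h_i(k+s)=h_i(k)+s$ --- is only described in the conditional (``I would then test\dots''), and you concede that the case where two branch maps differ by a constant defeats the naive Riemann--Lebesgue argument; this is exactly the case that must be handled, since it is the one produced by candidate intertwiners of the form $M[f]\sigma_l$. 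Moreover, even granting the dichotomy, norm-continuity of $s\mapsto W'_s$ forces $h_{i,s}=h_{i,0}$ only after one also rules out that the finitely many branches are permuted among themselves as $s$ varies, which needs a separate connectedness/discreteness argument. Until (i) and (ii) are carried out, your proposal establishes the equivalence of $(1)$, $(2)$, $(3)$ and the implication $(3)\Rightarrow(4)$, but not the one implication that carries the content of the theorem.
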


\begin{proof}
The implications $(2) \Rightarrow (3)$ and $(3) \Rightarrow (4)$ are trivial.
Suppose the condition $(1)$.
In this case $p := p(1) = p(2)$. Define $\sigma_l \colon L^2(\T_p) \to L^2(\T_p)$ by $\sigma_l \xi(k) = \xi(k - l)$.
It is straightforward to show that $\sigma_l$ is an intertwiner from $M[\lambda_1]$ to $M[\lambda_2]$.
The unitary operator $\sigma_l$ maps an eigenvector of $\frac{1}{\ii}\frac{d}{d k}$ to an eigenvector of $\frac{1}{\ii}\frac{d}{d k}$,
and the eigenvalue does not change.
It follows that $\sigma_l$ has finite propagation.

The implication $(4) \Rightarrow (1)$ is non-trivial.
It has been shown in \cite[Proposition 3.10]{SakoIntertwiner}.
\end{proof}

Recall that uniformity is the weakest regularity for intertwiners.
The author does not think that a non-uniform intertwiner is related to the theory of quantum walks.

\begin{example}
Let $a_1$ and $a_2$ be real numbers satisfying $0 < a_j < 1$, $a_1 \neq a_2$.
Define $b_j$ by $\sqrt{1 - a_j^2}$.
Then there exists no non-zero uniform intertwiner between
\[
U_1 = 
\left(
\begin{array}{cc}
a_1 S^{-1} & -b_1 S^{-1}\\
b_1 S & a_1 S
\end{array}
\right),
\quad
U_2 = 
\left(
\begin{array}{cc}
a_2 S^{-1} & -b_2 S^{-1}\\
b_2 S & a_2 S
\end{array}
\right).
\]
Indeed, the eigenvalue functions of the former walk are
\[\lambda_{1, \pm}(k) = a_1 \cos k \pm \ii \sqrt{1 - a_1^2 \cos^2 k},\]
and the eigenvalue functions of the latter walk are
\[\lambda_{2, \pm}(k) = a_2 \cos k \pm \ii \sqrt{1 - a_2^2 \cos^2 k}.\]
The positive number $2 \pi$ is the minimal period of all of them.
The walk $U_j$ is similar to the direct sum 
$\wU_{\lambda_{j, +}} \oplus \wU_{\lambda_{j, +}}$.
The function $\lambda_{1, +}$ is not a translate of $\lambda_{2, +}$ 
nor $\lambda_{j, -}$, and 
$\lambda_{1, -}$ is not a translate of $\lambda_{2, +}$ 
nor $\lambda_{j, -}$.
Now, we can apply the above theorem.
There is no non-zero uniform intertwiner between 
$\wU_{\lambda_{1, +}} \oplus \wU_{\lambda_{1, +}}$ and 
$\wU_{\lambda_{2, +}} \oplus \wU_{\lambda_{2, +}}$.
Therefore,
There is no non-zero uniform intertwiner between $U_1$ and $U_2$.
\qed
\end{example}

We can determine the set of uniform intertwiners between two model quantum walks as follows.

\begin{theorem}[Proposition 3.10 in \cite{SakoIntertwiner}]
\label{theorem: the set of intertwiners}
Let $\lambda_1, \lambda_2 \colon \R \to \T$ be non-constant analytic functions.
Let $l$ be a real number.
Suppose that $\lambda_2(k) = \lambda_1(k - 1)$.
Denote by $p$ the minimal period of $\lambda_1$ and $\lambda_2$.
Denote by $\sigma_l \colon L^2(\T_p) \to L^2(\T_p)$ the unitary operator given by the translation by $l$.
Denote by $C(\T_p)$ the set of all the complex valued continuous functions on $\T_p$.
Then the set of all the uniform intertwiners 
from $\wU_{\lambda_1, p}$
to $\wU_{\lambda_2, p}$
is equal to
\[\left\{ M[f] \sigma_l \ |\ f \in C(\T_p) \right\}.\]

The intertwiner $M[f] \sigma_l$ is smooth, if and only if $f$ is smooth.
The intertwiner $M[f] \sigma_l$ is analytic, if and only if $f$ is analytic.
\end{theorem}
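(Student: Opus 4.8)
The plan is to prove the two inclusions defining the set separately, and then read off the smoothness and analyticity refinements at the end. Throughout I treat the hypothesis as $\lambda_2(k)=\lambda_1(k-l)$, so that the translation $\sigma_l$ is the natural candidate intertwiner.

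For the inclusion $\supseteq$ I would first record that $\sigma_l$ is itself an intertwiner from $\wU_{\lambda_1,p}$ to $\wU_{\lambda_2,p}$: since $(\sigma_l M[\lambda_1]\xi)(k)=\lambda_1(k-l)\xi(k-l)=\lambda_2(k)(\sigma_l\xi)(k)$, one gets $\sigma_l M[\lambda_1]=M[\lambda_2]\sigma_l$, exactly as in the proof of Theorem~\ref{theorem: existence of an intertwiner}; moreover $\sigma_l$ sends each coordinate eigenvector $e_x(k)=\exp(\ii x k)$ to a phase multiple of itself, so it has finite propagation and hence is analytic by Theorem~\ref{theorem: 4 kinds of regularity}. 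Because multiplication operators commute, $M[f]\sigma_l M[\lambda_1]=M[\lambda_2]M[f]\sigma_l$, so $M[f]\sigma_l$ is an intertwiner for every $f$. For uniformity I would compute $(M[f]\sigma_l)_\kappa=\sigma_{-\kappa}M[f]\sigma_l\sigma_\kappa=M[f(\cdot+\kappa)]\sigma_l$, using that $\sigma_\kappa$ commutes with $\sigma_l$ and conjugates $M[f]$ into $M[f(\cdot+\kappa)]$; then $\|(M[f]\sigma_l)_\kappa-(M[f]\sigma_l)_{\kappa_0}\|=\|f(\cdot+\kappa)-f(\cdot+\kappa_0)\|_\infty\to 0$ for continuous $f$ by uniform continuity on the compact torus $\T_p$, so $M[f]\sigma_l$ is uniform.

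For $\subseteq$, given a uniform intertwiner $T$ I would set $S=T\sigma_{-l}$. Since $\sigma_{-l}M[\lambda_2]=M[\lambda_1]\sigma_{-l}$ (again from $\lambda_2(k)=\lambda_1(k-l)$), one checks $SM[\lambda_2]=TM[\lambda_1]\sigma_{-l}=M[\lambda_2]S$, so $S$ lies in the commutant of $M[\lambda_2]$, and $S$ is uniform as the composite of the uniform $T$ with the finite-propagation (hence uniform) $\sigma_{-l}$. Everything then reduces to the claim that \emph{a uniform operator $S$ commuting with $M[\lambda]$, where $\lambda$ is non-constant analytic with minimal period $p$, is a multiplication operator $M[g]$}. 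Granting this, $S=M[g]$ is uniform, so $g\in C(\T_p)$ by the Example characterising the regularity of multiplication operators, and $T=S\sigma_l=M[g]\sigma_l$ with $f=g$. To attack the claim I would work in the coordinate basis $e_x$, $x\in\frac{2\pi}{p}\Z$, where $\sigma_\kappa e_x=\exp(-\ii x\kappa)e_x$ and $M[\lambda]$ is the Toeplitz matrix $(\hat\lambda_{y-x})$. Writing $S=(S_{yx})$, uniformity says precisely that $\kappa\mapsto(\exp(\ii(y-x)\kappa)S_{yx})$ is norm-continuous, a uniform off-diagonal decay condition that already excludes symmetries like the reflection $e_x\mapsto e_{-x}$ (which commutes with $M[\lambda]$ for even $\lambda$ but has entries arbitrarily far off the diagonal). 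Imposing $SM[\lambda]=M[\lambda]S$ and passing to the variable $\theta$ dual to the lattice index $x$, the commutation relation collapses to the single functional identity $[\lambda(\cdot+\theta)-\lambda(\cdot)]\,T_\theta=0$ for every $\theta$, where $T_\theta(k)=\sum_{x,y}S_{yx}\exp(\ii y k)\exp(-\ii x(k+\theta))$ is the generating symbol of $S$. Here the minimal-period hypothesis enters: for $\theta\notin p\Z$ the analytic function $\lambda(\cdot+\theta)-\lambda(\cdot)$ is not identically zero, so by analyticity its zero set in $\T_p$ is finite.

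The step I expect to be the main obstacle is upgrading $T_\theta$ from an a priori distribution to an honest function. The identity $[\lambda(\cdot+\theta)-\lambda(\cdot)]T_\theta=0$ does not by itself force $T_\theta=0$, because a distribution supported on the finite coincidence set $\{k:\lambda(k+\theta)=\lambda(k)\}$ is annihilated too; the reflection example realises exactly such a nonzero solution. Uniformity is the ingredient that forbids this: it must be used to guarantee that $T_\theta$ carries no mass on that finite set, whence $T_\theta\equiv0$ for every $\theta\notin p\Z$. Consequently the symbol of $S$ is supported only at $\theta\in p\Z$, meaning each diagonal $x\mapsto S_{x+m,x}$ is constant, so $S=M[g]$ with $\hat g_m=S_{x+m,x}$. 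Geometrically this is the statement that, since the minimal period rules out any nonzero constant translation commuting with $M[\lambda]$, the only commuting symmetries compatible with uniformity are genuine multiplications. Finally, for the refinements I would note that $\sigma_l$ and $\sigma_{-l}=\sigma_l^{-1}$ both have finite propagation, so composition both preserves and reflects regularity; hence $M[f]\sigma_l$ is smooth (respectively analytic) if and only if $M[f]$ is, which by the multiplication-operator Example holds if and only if $f$ is smooth (respectively analytic).
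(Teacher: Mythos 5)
Your proposal gets the routine parts right, but it does not actually prove the theorem. Note first that the paper itself offers no proof of this statement: it is imported verbatim as Proposition 3.10 of \cite{SakoIntertwiner}, so what you set out to supply blind is precisely the content of that cited proposition. Your inclusion $\supseteq$ is complete and correct: $\sigma_l$ intertwines $M[\lambda_1]$ and $M[\lambda_2]$, it has finite propagation, and the identity $(M[f]\sigma_l)_\kappa = M[f(\cdot+\kappa)]\sigma_l$ plus uniform continuity of $f$ on the compact torus gives uniformity. The reduction of $\subseteq$ to the claim that a uniform operator $S$ commuting with $M[\lambda_2]$ must be a multiplication operator is also correct (via $S=T\sigma_{-l}$ and the closure of uniform morphisms under composition), as is the final regularity bookkeeping, since $\sigma_{\pm l}$ have finite propagation and the multiplication-operator example characterizes when $M[f]$ is smooth or analytic.

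The genuine gap is the commutant claim itself, which is the entire mathematical substance of the theorem, and your text concedes it rather than closes it: ``uniformity is the ingredient that forbids this: it must be used to guarantee that $T_\theta$ carries no mass on that finite set'' is a statement of what must be proved, not an argument. Two things are missing. First, the generating symbol $T_\theta(k)=\sum_{x,y}S_{yx}e^{\ii yk}e^{-\ii x(k+\theta)}$ is a purely formal double series for a general bounded $S$, so even the identity $[\lambda(\cdot+\theta)-\lambda(\cdot)]\,T_\theta=0$ has not been given a rigorous meaning. Second, and decisively, you never show how norm-continuity of $\kappa\mapsto S_\kappa$ excludes solutions supported on the finite coincidence sets $\{k:\lambda(k+\theta)=\lambda(k)\}$, $\theta\notin p\Z$ --- exactly where the reflection-type elements of the commutant (which genuinely exist, e.g.\ when $\lambda$ is even) must be killed. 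A workable route would be: use norm-continuity and $p$-periodicity of $\kappa\mapsto S_\kappa$ to define the diagonal components of $S$ as the (bounded, Ces\`aro-summable) Fourier coefficients of this operator-valued map; identify the commutant of $M[\lambda]$ with decomposable operators over the finite fibers of $\lambda$, whose off-diagonal parts are weighted composition operators along analytic branches $\phi$ satisfying $\lambda\circ\phi=\lambda$, $\phi\neq\mathrm{id}$; and then prove quantitatively that conjugation by $\sigma_\kappa$ cannot move a nonzero such composition operator norm-continuously unless $\phi$ is a translation $k\mapsto k+c$, which by minimality of the period forces $c\in p\Z$ and hence $\phi=\mathrm{id}$. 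Without an argument of this kind, your proposal assumes at the crux exactly what the cited Proposition 3.10 proves.
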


\subsection{How to determine the set of morphisms between two $1$-dim homogeneous QWs}

Following below, we can determine whether there exists a non-zero uniform intertwiner 
between two $1$-dimensional homogeneous analytic quantum walks $U_1$, $U_2$:
\begin{enumerate}
\item
Calculate the inverse Fourier transforms $\wU_j(k)$,
\item
Calculate the characteristic polynomials of $\wU_j(k)$,
\item
Calculate the lists of eigenvalues of $\wU_j(k)$,
\item
Express the lists by analytic functions.
\end{enumerate}
Here we obtain the direct sum decomposition of $\wU$ into model quantum walks
$\oplus_{j = 1}^m \wU_{\lambda_j, p(j)}$.
In the case that $\lambda_j$ is not a constant function,
we further decompose the model quantum walk $\wU_{\lambda_j, p(j)}$
into the direct sum of the model quantum walk with minimal period.
\begin{enumerate}
\setcounter{enumi}{4}
\item
Compare two lists. If we can find an eigenvalue function $\lambda_1 \colon \R \to \T$ of $\wU_1$ and an eigenvalue function $\lambda_2 \colon \R \to \T$ of $\wU_2$ satisfying that $\lambda_2$ is a translate of $\lambda_1$, then we get to know there exists a non-zero uniform intertwiner between them.
Otherwise, there exists no non-zero uniform intertwiner between them.
\end{enumerate}
By the above procedure, we can also determine whether there exists a non-zero smooth (or analytic) intertwiner.

\begin{example}
There exists an analytic isometric intertwiner from the quantum walk
\[U_2 = \dfrac{1}{2}\left(
\begin{array}{cc}
S^3 + S & S - S^{-1}\\
S - S^{-1} & S^{-1} + S^{-3}
\end{array}
\right)\]
to the $1$-dimensional $4$-state Grover walk $U_4$ defined in Example \ref{example: 4-state Grover walk}.
Indeed,
the eigenvalue functions of $U_2$ are 
\[\lambda_\pm(k) = \dfrac{\cos k + \cos 3k}{2} \pm \ii \sin k \sqrt{1 + 4 \cos^4 k}.\]
Therefore, there exists an analytic unitary intertwiner between $U_2$
and the direct sum $\wU_{\lambda_+, 2 \pi} \oplus \wU_{\lambda_-, 2 \pi}$
of model quantum walks.
The eigenvalue functions of $U_4$ are 
\[1, - 1, \lambda_\pm(k).\]
Therefore, there exists an analytic unitary intertwiner between $U_4$
and the direct sum $1 \oplus (-1) \oplus \wU_{\lambda_+, 2 \pi} \oplus \wU_{\lambda_-, 2 \pi}$ acting on $L^2(\T_{2\pi})^{\oplus 4}$.
Since there exists a (trivial) isometric and analytic intertwiner from $\wU_{\lambda_+, 2 \pi} \oplus \wU_{\lambda_-, 2 \pi}$ to $1 \oplus (-1) \oplus \wU_{\lambda_+, 2 \pi} \oplus \wU_{\lambda_-, 2 \pi}$, 
there exists an isometric and analytic intertwiner from $U_2$ to $U_4$.
\end{example}

\begin{corollary}
Let $U_1$ and $U_2$ be $1$-dimensional homogeneous analytic quantum walks.
Suppose that there exists a non-zero uniform intertwiner from $U_1$ to $U_2$. Then there exists a non-zero analytic intertwiner from $U_1$ to $U_2$, which is a partial isometry.
\end{corollary}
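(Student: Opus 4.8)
The plan is to reduce everything to the model quantum walks, where the complete classification of uniform intertwiners (Theorem \ref{theorem: the set of intertwiners}) and the equivalence of the four conditions (Theorem \ref{theorem: existence of an intertwiner}) are available. First I would apply the structure theorem (Theorem \ref{theorem: structure theorem}), together with Lemma \ref{lemma: decomposition of model QW}, to produce analytic unitary intertwiners $T_1$ and $T_2$ identifying $U_1$ with a direct sum $\bigoplus_i \wU_{\mu_i, p_i}$ and $U_2$ with a direct sum $\bigoplus_j \wU_{\nu_j, q_j}$, where for each non-constant eigenvalue function $\mu_i$ (respectively $\nu_j$) the number $p_i$ (respectively $q_j$) is its minimal period. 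Since $T_1, T_2$ are analytic unitaries they are in particular uniform, and conjugation by them preserves all four regularity classes.

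Next I would transport the given non-zero uniform intertwiner $W \colon U_1 \to U_2$ to $\widetilde W = T_2 W T_1^{-1}$, a non-zero uniform intertwiner from $\bigoplus_i \wU_{\mu_i, p_i}$ to $\bigoplus_j \wU_{\nu_j, q_j}$. Writing $\widetilde W$ as a matrix of blocks $\widetilde W_{ji} = \pi_{j} \widetilde W \iota_{i}$, with $\iota_i$ the inclusion of the $i$-th summand and $\pi_j$ the projection onto the $j$-th summand, each block is again a uniform intertwiner, because $\iota_i$ and $\pi_j$ have finite propagation, hence are uniform, and intertwine the block-diagonal dynamics. Since $\widetilde W \neq 0$, some block $\widetilde W_{j_0 i_0} \colon \wU_{\mu_{i_0}, p_{i_0}} \to \wU_{\nu_{j_0}, q_{j_0}}$ is a non-zero uniform intertwiner between two model walks.

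Now I would analyze this single non-zero block. If at least one of $\mu_{i_0}, \nu_{j_0}$ is non-constant, then both are non-constant, since a constant model walk and a non-constant one admit no non-zero intertwiner (the latter multiplication operator has no eigenvector). Theorem \ref{theorem: existence of an intertwiner} then applies: the existence of a non-zero uniform intertwiner forces $\nu_{j_0}$ to be a translate of $\mu_{i_0}$ and yields a \emph{unitary} intertwiner $V_0 \colon \wU_{\mu_{i_0}, p_{i_0}} \to \wU_{\nu_{j_0}, q_{j_0}}$ with finite propagation; by Theorem \ref{theorem: 4 kinds of regularity} such a $V_0$ is analytic. If instead both $\mu_{i_0}$ and $\nu_{j_0}$ are constant, say equal to $c$ and $c'$, the intertwining relation gives $c\,\widetilde W_{j_0 i_0} = c' \widetilde W_{j_0 i_0}$, so $c = c'$; here I would simply take for $V_0$ the rank-one partial isometry sending the normalized constant function to the normalized constant function, which carries the position-$0$ eigenvector to the position-$0$ eigenvector, hence has propagation $0$ and is analytic. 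In both cases $V_0$ is a non-zero analytic partial isometry intertwiner.

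Finally I would assemble the intertwiner for the original walks as $\Phi = T_2^{-1} \iota_{j_0} V_0 \pi_{i_0} T_1$. Each factor is an analytic intertwiner ($T_1, T_2$ analytic unitaries; $\iota_{j_0}, \pi_{i_0}$ of finite propagation; $V_0$ analytic), so $\Phi$ is a non-zero analytic intertwiner from $U_1$ to $U_2$. That $\Phi$ is a partial isometry follows from a short computation: $\pi_{i_0}$ is a coisometry, $\iota_{j_0}$ an isometry, and $V_0$ a partial isometry, so $R := \iota_{j_0} V_0 \pi_{i_0}$ satisfies $R^* R = \pi_{i_0}^*(V_0^* V_0)\pi_{i_0}$, which is a projection because $\pi_{i_0}\pi_{i_0}^* = I$ and $V_0^* V_0$ is a projection; conjugating $R$ by the unitaries $T_1, T_2$ keeps it a partial isometry. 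The step I expect to carry the real content is the passage from a merely uniform block to the finite-propagation unitary $V_0$ in the non-constant case; this is exactly where Theorem \ref{theorem: existence of an intertwiner} is indispensable, and the only genuinely new bookkeeping is the reduction to minimal periods and the separate, elementary treatment of constant eigenvalue functions, which the cited equivalence does not cover.
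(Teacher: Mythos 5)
Your proof is correct and follows essentially the same route as the paper's: decompose both walks into model quantum walks via Theorem \ref{theorem: structure theorem} and Lemma \ref{lemma: decomposition of model QW}, locate a pair of summands joined by a non-zero uniform intertwiner, and upgrade that to an analytic unitary via Theorem \ref{theorem: existence of an intertwiner}. Your write-up is in fact more complete than the paper's two-line proof, which leaves implicit the block-decomposition argument producing the non-zero summand-to-summand intertwiner, the assembly of the final partial isometry through the inclusions and projections, and the case of two constant eigenvalue functions (not covered by the cited theorem), which you handle separately.
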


\begin{proof}
In this case, there exist a direct summand $\wU_{\lambda_1, p(1)}$ of $U_1$ and a direct summand $\wU_{\lambda_2, p(2)}$ of $U_2$
such that there exists a non-zero uniform intertwiner between them.
By theorem \ref{theorem: existence of an intertwiner}, there exists 
an analytic unitary intertwiner from $\wU_{\lambda_1, p(1)}$ to $\wU_{\lambda_2, p(2)}$.
\end{proof}

\subsection{Indecomposable quantum walks and some factorization results}

\begin{corollary}\label{corollary: indecomposable model quantum walk}
Suppose that there exists a unifrom unitary intertwiner from $U$ to
a model quantum walk defined by the minimal period.
Then there is no uniform unitary intertwiner from $U$ to a non-trivial direct sum of two $1$-dimensional quantum walks.
\end{corollary}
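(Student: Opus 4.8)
The plan is to reduce the statement to the indecomposability of a single model walk and then to invoke the explicit description of its self-intertwiners. Write $V$ for the given uniform unitary intertwiner from $U$ to the model walk $\wU_{\lambda, p}$, where $p$ is the minimal period of the non-constant analytic function $\lambda$, and suppose toward a contradiction that there is also a uniform unitary intertwiner $T$ from $U$ to a non-trivial direct sum $W_1 \oplus W_2$ of $1$-dimensional quantum walks acting on $\Hil_1 \oplus \Hil_2$ with $\Hil_1, \Hil_2 \neq 0$. First I would form the composite $S = T V^*$. Since the adjoint of a uniform morphism is again uniform (one checks directly that $(V^*)_\kk = (V_\kk)^*$ and that taking adjoints is a norm isometry), and since a composition of uniform morphisms is uniform, $S$ is a uniform unitary intertwiner from $\wU_{\lambda, p}$ to $W_1 \oplus W_2$, satisfying $S M[\lambda] S^* = W_1 \oplus W_2$.

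Next I would pull back the obvious non-trivial projection on the target. Let $Q = \mathrm{id}_{\Hil_1} \oplus 0$ be the orthogonal projection onto $\Hil_1$. It commutes with $W_1 \oplus W_2$, and because the associated representation splits as $v_{E_1} \oplus v_{E_2}$ it is constant in $\kk$, hence a uniform self-intertwiner of $W_1 \oplus W_2$. Set $P = S^* Q S$. Using $S^*(W_1 \oplus W_2) = M[\lambda] S^*$ and that $Q$ commutes with $W_1 \oplus W_2$, one verifies
\[P M[\lambda] = S^* Q (W_1 \oplus W_2) S = S^* (W_1 \oplus W_2) Q S = M[\lambda] P,\]
so $P$ is a self-intertwiner of $\wU_{\lambda, p}$; it is a projection because $Q$ is; it is uniform as a composition of uniform morphisms; and it is non-trivial, i.e.\ $0 \neq P \neq \mathrm{id}$, because $S$ is unitary and $0 \neq Q \neq \mathrm{id}$.

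Finally I would appeal to Theorem \ref{theorem: the set of intertwiners}. Applying it with $\lambda_1 = \lambda_2 = \lambda$ and $l = 0$ shows that every uniform self-intertwiner of $\wU_{\lambda, p}$ is a multiplication operator $M[f]$ with $f \in C(\T_p)$. Since $P$ is a self-adjoint idempotent, $f$ must satisfy $\bar f = f$ and $f^2 = f$, so $f(k) \in \{0, 1\}$ for every $k$; being continuous on the connected space $\T_p$, the function $f$ is therefore constant, whence $P = 0$ or $P = \mathrm{id}$. This contradicts the non-triviality of $P$, and the corollary follows.

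I expect the main obstacle to be the bookkeeping around the word \emph{uniform}: one must be certain that passing to adjoints, composing, and conjugating the constant projection $Q$ by $S$ all stay inside the class of uniform morphisms, for only then is Theorem \ref{theorem: the set of intertwiners} genuinely applicable to $P$. The remaining input -- that a continuous idempotent on the connected torus $\T_p$ is trivial -- is elementary; the essential role of the minimality of $p$ is to guarantee, through the hypothesis of Theorem \ref{theorem: the set of intertwiners}, that the uniform self-intertwiners are exactly the multiplications $M[f]$ with no genuine translation part $\sigma_l$. Indeed, were $p$ not minimal, Lemma \ref{lemma: decomposition of model QW} would already exhibit $\wU_{\lambda, p}$ as a non-trivial direct sum, and non-trivial projection self-intertwiners would reappear.
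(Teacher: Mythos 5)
Your proof is correct and follows essentially the same route as the paper's: reduce to the model walk $\wU_{\lambda,p}$ via the given uniform unitary intertwiner, invoke Theorem \ref{theorem: the set of intertwiners} to identify the uniform self-intertwiners with $\left\{ M[f] \mid f \in C(\T_p) \right\}$, and observe that a non-trivial direct sum decomposition would yield a non-trivial uniform projection, which is impossible since $C(\T_p)$ contains no idempotents other than $0$ and $1$ by connectedness of $\T_p$. The only difference is that you spell out the bookkeeping (adjoints and compositions of uniform morphisms, conjugating the projection $Q$ by $S=TV^*$) that the paper compresses into ``we may assume the walk is the model quantum walk.''
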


\begin{proof}
We may assume that the quantum walk is the model quantum walk 
$\wU_{\lambda, p}$.
By Theorem \ref{theorem: the set of intertwiners},
the set of uniform intertwiners from $\wU_{\lambda, p}$
to $\wU_{\lambda, p}$
is equal to
\[\left\{ M[f] \ |\ f \in C(\T_p) \right\}.\]
The space does not contain a projection other than $0, 1$.
Suppose that there were a direct sum decomposition.
Then there would exist a non-trivial uniform projection which commutes with $\wU_{\lambda_1, p}$.
\end{proof}

\begin{definition}
If a walk $(\Hil, (U^t)_{t \in \Z}, E)$ is similar to a non-trivial direct sum of two quantum walks, $U$ is said to be {\rm decomposable}.
Otherwise, it is said to be {\rm indecomposable}.
\end{definition}

\begin{corollary}
For every $1$-dimensional homogeneous analytic quantum walk $U$,
the following conditions are equivalent:
\begin{enumerate}
\item
$U$ is indecomposable.
\item
There exists an analytic unitary intertwiner between $U$ and some model quantum walk defined by the minimal period.
\item
The walk $U$ is similar to some model quantum walk defined by the minimal period.
\item
There exists a uniform invertible intertwiner between $U$ and some model quantum walk defined by the minimal period.
\end{enumerate}
\end{corollary}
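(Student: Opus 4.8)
The plan is to establish the cycle of implications $(1)\Rightarrow(2)\Rightarrow(3)\Rightarrow(4)\Rightarrow(1)$, so that it suffices to prove four single arrows. Two of these are immediate from the hierarchy of regularities in Theorem \ref{theorem: 4 kinds of regularity}: an analytic unitary intertwiner is in particular a smooth invertible one, so $(2)$ yields the similarity asserted in $(3)$; and a smooth invertible intertwiner is uniform, so the similarity in $(3)$ supplies the uniform invertible intertwiner of $(4)$. Thus the real work is concentrated in $(1)\Rightarrow(2)$ and in $(4)\Rightarrow(1)$.

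For $(1)\Rightarrow(2)$ I would start from the structure theorem (Theorem \ref{theorem: structure theorem}), which provides an analytic unitary intertwiner between $U$ and a finite direct sum $\bigoplus_{j=1}^{m}\wU_{\lambda_j,p(j)}$ of model walks. I would then normalise this decomposition: each summand whose period is not minimal is broken up by Lemma \ref{lemma: decomposition of model QW} into copies of the corresponding minimal-period model walk, the intertwiner there having finite propagation and hence being analytic by Theorem \ref{theorem: 4 kinds of regularity}. Composing everything keeps an analytic unitary intertwiner between $U$ and a direct sum of minimal-period model walks. Now indecomposability of $U$ forces this direct sum to be a single non-constant model walk: a direct sum with two or more nonzero summands is a nontrivial direct sum by definition, while a constant model walk $\wU_{c,p}$ itself splits (via the identity) as a nontrivial direct sum of the one-dimensional coordinate eigenspaces of $L^2(\T_p)$, so either alternative would make $U$ decomposable. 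The surviving single non-constant minimal-period summand is exactly condition $(2)$.

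The main obstacle is $(4)\Rightarrow(1)$, which I would prove by contradiction. Write $M=\wU_{\lambda,p}$ for the non-constant minimal-period model walk and let $T$ be a uniform invertible intertwiner from $U$ to $M$. If $U$ were decomposable, it would be similar to a nontrivial direct sum $U_1\oplus U_2$ through a smooth invertible intertwiner $S$; since $(S^{-1})_\kk=(S_\kk)^{-1}$ and the operators $S_\kk$ are invertible with uniformly bounded inverses, $S^{-1}$ is again smooth and in particular uniform, so $R:=T S^{-1}$ is a uniform invertible intertwiner from $U_1\oplus U_2$ onto $M$. Transporting the (constant, hence uniform) projection $P$ onto the first summand, I obtain $Q:=RPR^{-1}$, a uniform idempotent commuting with $M$, that is, a uniform self-intertwiner of $M$. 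By Theorem \ref{theorem: the set of intertwiners} (applied with $\lambda_1=\lambda_2=\lambda$ and $l=0$, exactly as in the proof of Corollary \ref{corollary: indecomposable model quantum walk}) every such self-intertwiner is a multiplication operator $M[f]$ with $f\in C(\T_p)$; the relation $Q^2=Q$ forces $f^2=f$, and connectedness of $\T_p$ gives $f\equiv 0$ or $f\equiv 1$, so $Q\in\{0,I\}$. This contradicts $Q\neq 0,I$, which holds because $P$ is a nontrivial projection and $R$ is invertible. Hence no decomposition exists and $U$ is indecomposable, closing the cycle.

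The genuinely delicate point is this transport step: one must be certain that passing from a hypothetical decomposition of $U$ to a decomposition of $M$ stays inside the class of uniform morphisms, even though $T$ and $S$ are only invertible rather than unitary. This is exactly where I expect the difficulty, and it is handled by two observations, namely that the inverse of a uniform (respectively smooth) invertible intertwiner is again uniform (respectively smooth), because $(W^{-1})_\kk=(W_\kk)^{-1}$ and inversion is smooth on a uniformly bounded family of invertibles, and that $P$ is a constant morphism. Since compositions of uniform morphisms are uniform, $Q$ is uniform and the commutant computation of Theorem \ref{theorem: the set of intertwiners} applies; everything else is routine bookkeeping with the regularity implications of Theorem \ref{theorem: 4 kinds of regularity}.
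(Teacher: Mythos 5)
Your proof is correct, and for three of the four arrows it coincides with the paper's: $(2)\Rightarrow(3)\Rightarrow(4)$ are the same trivial regularity observations, and your $(1)\Rightarrow(2)$ is the paper's argument (Theorem \ref{theorem: structure theorem} plus Lemma \ref{lemma: decomposition of model QW}), with two details made explicit that the paper glosses over, namely that a constant summand would itself split along coordinate eigenspaces and that two or more summands would contradict indecomposability. The genuine divergence is in $(4)\Rightarrow(1)$. The paper takes the uniform invertible intertwiner $V$, passes to its polar decomposition to extract a uniform \emph{unitary} intertwiner, and then invokes Corollary \ref{corollary: indecomposable model quantum walk}, whose proof (together with the lemma that similar walks admit a smooth unitary intertwiner) reduces matters to the absence of nontrivial projections in $\left\{ M[f] \ |\ f \in C(\T_p) \right\}$. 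You instead keep the invertible intertwiner as is, conjugate the direct-sum projection $P$ by $R = T S^{-1}$, and rule out the resulting uniform idempotent $Q = R P R^{-1}$ by the same commutant computation (Theorem \ref{theorem: the set of intertwiners}) plus connectedness of $\T_p$. Both routes hinge on that theorem; yours buys self-containedness, since you never need the fact (used implicitly, not proved, in the paper) that the polar part of a uniform invertible operator is again uniform, nor the similarity-implies-unitary lemma, at the cost of verifying that inverses of uniform or smooth invertible morphisms retain their regularity via $(W^{-1})_\kk = (W_\kk)^{-1}$ and that idempotents (not just orthogonal projections) in $C(\T_p)$ are trivial. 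The paper's route buys brevity by reusing Corollary \ref{corollary: indecomposable model quantum walk} verbatim and staying within unitary intertwiners throughout.
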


\begin{proof}
The implications $(2) \Rightarrow (3)$ and $(3) \Rightarrow (4)$ are trivial.

Suppose that $(4)$ holds. Let $V$ be the invertible intertwiner from $U$ to the model quantum walk defined by the minimal period.
Let $W |V|$ be the polar decomposition of $V$.
Then $W$ a uniform unitary intertwiner between $U$ and the model walk.
By Corollary \ref{corollary: indecomposable model quantum walk}, we obtain $(1)$.

Suppose that $U$ is indecomposable.
By Theorem \ref{theorem: structure theorem}, and by indecomposability of $U$, there exists an analytic unitary which intertwines $U$ and a model quantum walk $\wU_{\lambda, p}$. By Lemma \ref{lemma: decomposition of model QW}
$p$ is the minimal period of $\lambda$.

\end{proof}

\begin{example}
Let $a, b$ be real numbers satisfying $0 < a < 0$, $a^2 + b^2 = 1$.
The walk
$U_1 = \left(
\begin{array}{cc}
a S^{-1} & - b S^{-1}\\
b S  & a S
\end{array}
\right)$
is decomposable.
Indeed, the eigenvalue functions are
\[a \cos k \pm \ii \sqrt{1 - a^2 \cos^2 k}. \]
It follows that $U_1$ is similar to a direct sum of two model quantum walks.

The walk $U = \left(
\begin{array}{cc}
a S & -b S\\
b & a
\end{array}
\right)$ is indecomposable.
Indeed,
the eigenvalue function is
\[\lambda(k) 
= \exp (\ii k /2) \left( a \cos (k / 2)
- \ii \sqrt{1 - a^2 \cos^2 (k /2)} \right)\]
The minimal period is $4 \pi$.
The walk is similar to the model quantum walk $\wU_{\lambda, 4 \pi}$
\end{example}

The following corollaries are factorization results on $1$-dimensional homogeneous walks.

\begin{corollary}
Every $1$-dimensional homogeneous analytic quantum walk $U$
is similar to a direct sum of
\begin{itemize}
\item
constant quantum walks,
\item
and indecomposable model quantum walks.
\end{itemize}
One of two items can be empty.
\end{corollary}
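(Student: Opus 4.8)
The plan is to combine the structure theorem with the decomposition lemma for model walks and then invoke the characterization of indecomposable walks from the preceding corollary; there is no genuinely new idea needed, only an assembly of the cited results.

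First I would apply Theorem \ref{theorem: structure theorem} to the inverse Fourier transform $\wU$ of $U$. Choosing eigenvalue functions $\lambda_1, \cdots, \lambda_m$ together with periods $p(1), \cdots, p(m) \in 2 \pi \N$ that reproduce the eigenvalue list of $\wU(k)$ with multiplicity, we obtain an analytic unitary intertwiner exhibiting $U$ as similar to the direct sum $\oplus_{j = 1}^m \wU_{\lambda_j, p(j)}$ of model quantum walks. Since similarity is an equivalence relation and the direct sum of smooth invertible intertwiners is again a smooth invertible intertwiner, it then suffices to put each summand $\wU_{\lambda_j, p(j)}$ into the desired form.

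Next I would split the summands according to whether $\lambda_j$ is constant. If $\lambda_j$ is constant, then $\wU_{\lambda_j, p(j)}$ is by definition a constant quantum walk, and nothing further is required. If $\lambda_j$ is non-constant, let $q_j$ denote its minimal period. Because a non-constant analytic (hence continuous) periodic function has period group exactly $q_j \Z$, the period $p(j)$ is an integer multiple $p(j) = n_j q_j$ with $n_j \in \N$. Applying Lemma \ref{lemma: decomposition of model QW} with $n = n_j$ then yields a unitary intertwiner with finite propagation between $\wU_{\lambda_j, p(j)}$ and $\left( \wU_{\lambda_j, q_j} \right)^{\oplus n_j}$, so that $\wU_{\lambda_j, p(j)}$ is similar to a direct sum of copies of the minimal-period model walk $\wU_{\lambda_j, q_j}$.

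Finally, each $\wU_{\lambda_j, q_j}$ with $\lambda_j$ non-constant and $q_j$ its minimal period is an indecomposable model quantum walk: it is trivially similar to a model quantum walk defined by the minimal period, which is one of the equivalent conditions for indecomposability established in the preceding corollary. Assembling the pieces and using transitivity of similarity together with its compatibility with direct sums, $U$ is similar to a direct sum of constant quantum walks and indecomposable model quantum walks, either list being possibly empty. The only step requiring a little care---rather than a true obstacle---is the period-divisibility claim $p(j) \in q_j \N$, since this is exactly what licenses the use of Lemma \ref{lemma: decomposition of model QW} to reduce an arbitrary period to the minimal one; everything else is bookkeeping on top of the cited structure and decomposition results.
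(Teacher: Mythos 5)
Your proposal is correct and takes essentially the same route as the paper, whose (implicit) argument is exactly this assembly: Theorem \ref{theorem: structure theorem} to pass to a direct sum of model walks, Lemma \ref{lemma: decomposition of model QW} to reduce each non-constant summand to minimal-period copies, and Corollary \ref{corollary: indecomposable model quantum walk} to certify those summands as indecomposable. Your explicit justification of the divisibility $p(j) \in q_j \N$ (the period group of a non-constant continuous periodic function is discrete) is a detail the paper leaves unstated but is exactly what licenses the use of the decomposition lemma.
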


\begin{corollary}
Let $U_1$, $U_2$ be $1$-dimensional homogeneous analytic quantum walks.
Suppose that there exists a non-zero uniform intertwiner from $U_1$ to $U_2$.
Then there exists a quantum walk $U$ such that
\begin{itemize}
\item
$U$ is a constant walk or an indecomposable quantum walk,
\item
there exists an analytic isometric intertwiner from $U$ to $U_1$,
\item
and there exists an analytic isometric intertwiner from $U$ to $U_2$.
\end{itemize}
\end{corollary}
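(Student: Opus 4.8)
The plan is to reduce, by the structure theory of Section~\ref{section: structure theorems of $1$-dim HQW}, to a single pair of indecomposable summands and then let Theorem~\ref{theorem: existence of an intertwiner} upgrade the bare uniformity of the hypothesis to an analytic unitary equivalence. First I would apply Theorem~\ref{theorem: structure theorem} together with Lemma~\ref{lemma: decomposition of model QW} to realize $U_1$ as similar to a direct sum $\bigoplus_i A_i$ and $U_2$ as similar to $\bigoplus_j B_j$, where each $A_i$ and each $B_j$ is either a constant walk or a model walk $\wU_{\lambda, p}$ with $\lambda$ non-constant and $p$ its minimal period. These similarities are realized by analytic unitaries, and since the adjoint of an analytic unitary is again analytic (a holomorphic extension of the conjugated function is supplied by Schwarz reflection), both the similarities and their inverses are analytic isometric intertwiners. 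Conjugating the given non-zero uniform intertwiner $T\colon U_1 \to U_2$ by them produces a non-zero uniform intertwiner $\bigoplus_i A_i \to \bigoplus_j B_j$; cutting it down by the canonical projections and inclusions (each analytic and isometric) yields its matrix of blocks $T_{ji}\colon A_i \to B_j$, not all zero, and I fix indices with $T_{j_0 i_0} \neq 0$.

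Next I would analyze the surviving block $T_{j_0 i_0}\colon A_{i_0} \to B_{j_0}$. As observed at the beginning of this section, there is no non-zero intertwiner between a constant walk and a non-constant model walk, so $A_{i_0}$ and $B_{j_0}$ must be of the same type. In the principal case both are non-constant indecomposable model walks, say $A_{i_0} = \wU_{\lambda_1, p(1)}$ and $B_{j_0} = \wU_{\lambda_2, p(2)}$. Because $T_{j_0 i_0}$ is a non-zero uniform intertwiner between them, condition $(4)$ of Theorem~\ref{theorem: existence of an intertwiner} holds, hence so does condition $(2)$: there is a unitary intertwiner $\Psi\colon A_{i_0} \to B_{j_0}$ with finite propagation, and therefore analytic. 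I would then set $U := A_{i_0}$. The composition of the inclusion $A_{i_0}\hookrightarrow \bigoplus_i A_i$ with the analytic similarity onto $U_1$ is an analytic isometric intertwiner $U \to U_1$, while the composition of $\Psi$ with the inclusion $B_{j_0}\hookrightarrow \bigoplus_j B_j$ and the analytic similarity onto $U_2$ is an analytic isometric intertwiner $U \to U_2$. Since $U$ is an indecomposable model walk, this settles the principal case.

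It remains to treat the case in which $A_{i_0}$ and $B_{j_0}$ are both constant walks, say $\wU_{c_1, p_1}$ and $\wU_{c_2, p_2}$, whose periods we may take in $2\pi \N$. Here the intertwining relation $T_{j_0 i_0}\, M[c_1] = M[c_2]\, T_{j_0 i_0}$ forces $(c_1 - c_2) T_{j_0 i_0} = 0$, hence $c_1 = c_2 =: c$ because $T_{j_0 i_0}\neq 0$. I would choose the common constant walk $U := \wU_{c, 2\pi}$; since $2\pi$ divides both $p_1$ and $p_2$, Lemma~\ref{lemma: decomposition of model QW} exhibits $U$ as an analytic finite-propagation direct summand of each of $\wU_{c, p_1}$ and $\wU_{c, p_2}$. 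Composing these summand inclusions with the analytic similarities onto $U_1$ and $U_2$ produces the two required analytic isometric intertwiners, with $U$ a constant walk, as allowed by the statement.

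The main obstacle, I expect, is not conceptual but the bookkeeping of regularity through these compositions: one must confirm that the inverses of the structural similarities remain analytic (not merely smooth, as guaranteed by the bare definition of similarity), and that projecting onto and including from a direct summand preserves both analyticity and the isometric property. Once Theorem~\ref{theorem: existence of an intertwiner} is available, the genuinely analytic content—the passage from a mere uniform intertwiner to an analytic unitary between two non-constant model walks—is already accomplished, and what remains is the organizational task of selecting $U$ as a single indecomposable (or constant) summand and verifying that every assembled arrow is an analytic isometry.
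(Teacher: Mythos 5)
Your proposal is correct and follows essentially the same route the paper takes: decompose both walks into constant and minimal-period model summands via Theorem \ref{theorem: structure theorem} and Lemma \ref{lemma: decomposition of model QW}, locate a pair of summands joined by a non-zero uniform block, and upgrade it with Theorem \ref{theorem: existence of an intertwiner} to an analytic unitary, taking $U$ to be that summand. Your explicit treatment of the constant--constant case (where Theorem \ref{theorem: existence of an intertwiner} does not apply, since it assumes non-constant eigenvalue functions) and of the analyticity of the inverse similarities is a welcome tightening of details the paper leaves implicit.
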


Taking direct sum of $U$ in the corollary, we obtain ``the greatest common deviser'' of the two walks $U_1$ and $U_2$.

\section{Realizability and non-realizability by continuous-time quantum walks}
\label{section: realizability by CTQW}

\begin{theorem}\label{theorem: realizability by a CTQW}
Let $U = (\Hil, (U^t)_{t \in \Z}, E, \rho)$ be an arbitrary $1$-dimensional homogeneous analytic quantum walk.
The following conditions are equivalent:
\begin{enumerate}
\item
For every eigenvalue function, the winding number is zero.
\item
There exists an analytic homogeneous continuous-time quantum walk $(\Hil, (V^{(t)})_{t \in \R}, E, \rho)$ satisfying $V^{(1)} = U$.
\item
There exists a uniform continuous-time quantum walk 
$\left(\Hil, \left( V^{(t)} \right)_{t \in \R}, E \right)$ satisfying $V^{(1)} = U$.
\item
There exists a uniform continuous-time quantum walk $\left(\widetilde \Hil, \left(\widetilde V^{(t)}\right)_{t \in \R}, \widetilde E \right)$ and uniform unitary intertwiner $W$ from the discrete-time quantum walk $\left(\widetilde \Hil, \left(\widetilde V^{(t)}\right)_{t \in \Z}, \widetilde E \right)$ to $U$.
\end{enumerate}
\end{theorem}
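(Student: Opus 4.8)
The plan is to establish the cycle $(1)\Rightarrow(2)\Rightarrow(3)\Rightarrow(1)$ together with the equivalence $(3)\Leftrightarrow(4)$, reducing everything to the model walks furnished by the structure theorem. Since the winding numbers of the eigenvalue functions are intrinsic to $U$ and realizability by an analytic (resp.\ uniform) continuous-time walk is preserved under the analytic unitary similarity of Theorem \ref{theorem: structure theorem}, I would first replace $U$ by a direct sum $\bigoplus_j \wU_{\lambda_j, p(j)}$ of model walks with minimal periods (using Lemma \ref{lemma: decomposition of model QW}). The easy implications are then dispatched: $(2)\Rightarrow(3)$ is immediate from Theorem \ref{theorem: 4 kinds of regularity} (analytic $\Rightarrow$ uniform) after forgetting $\rho$, and $(3)\Rightarrow(4)$ follows by taking $\widetilde\Hil=\Hil$, $\widetilde V=V$ and $W$ the identity. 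For $(4)\Rightarrow(3)$ I would conjugate: given a uniform unitary intertwiner $W$ from the discrete restriction of $(\widetilde V^{(t)})$ to $U$, set $V^{(t)}=W\widetilde V^{(t)}W^*$; this is strongly continuous in $t$, satisfies $V^{(1)}=W\widetilde V^{(1)}W^*=U$, and each $V^{(t)}$ is uniform because it is a composition of the uniform morphisms $W^*,\widetilde V^{(t)},W$. Thus $(3)\Leftrightarrow(4)$ and the genuine content is $(1)\Leftrightarrow(3)$.

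For $(1)\Rightarrow(2)$, fix a model summand $\wU_{\lambda_j,p(j)}=\bigl(L^2(\T_{p(j)}),(M[\lambda_j]^t)_{t\in\Z},\frac{1}{\ii}\frac{d}{dk}\bigr)$. Vanishing of the winding number of the analytic $p(j)$-periodic map $\lambda_j\colon\R\to\T$ is exactly the condition for $\lambda_j$ to admit a continuous $p(j)$-periodic real logarithm $h_j$ with $\lambda_j=\exp(\ii h_j)$, and analyticity of $h_j$ follows from that of $\lambda_j$. Setting $V_j^{(t)}=M[\exp(\ii t h_j)]$ gives a norm-continuous unitary group with $V_j^{(1)}=M[\lambda_j]$; each $V_j^{(t)}$ is analytic and, being a multiplication operator, commutes with the right regular representation. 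Transporting $\bigoplus_j V_j^{(t)}$ back along the analytic homogeneous unitary of Theorem \ref{theorem: structure theorem} yields the desired analytic homogeneous continuous-time walk with $V^{(1)}=U$.

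The heart is $(3)\Rightarrow(1)$. Each $V^{(t)}$ commutes with $U$, so $(V^{(t)})$ is a one-parameter group of uniform automorphisms of $\bigoplus_j\wU_{\lambda_j,p(j)}$. I would group the summands into translation-equivalence classes of eigenvalue functions; by Theorem \ref{theorem: existence of an intertwiner} there is no nonzero uniform intertwiner between model walks whose eigenvalue functions are not translates of one another, so every $V^{(t)}$ is block-diagonal along these classes. Fix a class of multiplicity $m$ (finite, by the finite degree of freedom of $U$). After conjugating the summands by the translations $\sigma_l$ of Theorem \ref{theorem: the set of intertwiners} we may assume all its eigenvalue functions equal one $\mu$ of minimal period $p$, so $U$ restricted to this block is $\mu(k)\,\mathrm{id}_{\C^m}$. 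By Theorem \ref{theorem: the set of intertwiners} the block of $V^{(t)}$ is an $m\times m$ matrix of multiplication operators, hence is given at each $k$ by a unitary matrix $V^{(t)}(k)\in M_m(\C)$ continuous in $k$; finite-dimensionality of the block lets me invoke Stone's theorem fibrewise to write $V^{(t)}(k)=\exp(\ii t H(k))$ with $H(k)$ self-adjoint.

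I would finish by passing to determinants. The scalar $d_t(k)=\det V^{(t)}(k)$ is continuous in $k$, $p$-periodic, multiplicative in $t$, and satisfies $d_1(k)=\mu(k)^m$. Writing $d_t(k)=\exp(\ii t\,\theta(k))$ with $\theta(k)\in\R$ pinned down by the one-parameter group $t\mapsto d_t(k)$, continuity of each $d_t$ in $k$ forces $\theta$ to be continuous (a sequence $c_n\to c_0$ in $\R$ as soon as $\exp(\ii t c_n)\to\exp(\ii t c_0)$ for all $t$, by Pontryagin duality for $\R$), and $p$-periodicity of $\theta$ follows the same way. Since $\exp(\ii\theta)=\mu^m$ with $\theta$ continuous and periodic, $\mu^m$ has a continuous periodic logarithm, so its winding number $m\cdot\mathrm{wind}(\mu)$ vanishes, whence $\mathrm{wind}(\mu)=0$; applying this to every class gives $(1)$. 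The main obstacle I expect is precisely this step: one must leverage uniformity to cut the commutant down to matrices of multiplication operators via Theorem \ref{theorem: the set of intertwiners}, together with the finite degree of freedom, so that a \emph{bounded} self-adjoint generator exists fibrewise and the finite eigenvalue set cannot tolerate a nonzero winding. This finiteness is exactly what fails for the introductory walk, explaining its non-realizability.
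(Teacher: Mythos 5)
Your overall route coincides with the paper's: reduce to a direct sum of model walks via Theorem \ref{theorem: structure theorem} and Lemma \ref{lemma: decomposition of model QW}; dispose of $(2)\Rightarrow(3)$ and $(3)\Leftrightarrow(4)$ by forgetting regularity and by conjugation $V^{(t)}=W\widetilde V^{(t)}W^{*}$; prove $(1)\Rightarrow(2)$ by taking an analytic periodic real logarithm $h$ with $\lambda=\exp(\ii h)$ and setting $V^{(t)}=M[\exp(\ii t h)]$; and prove $(3)\Rightarrow(1)$ by observing that each $V^{(t)}$ is a uniform self-intertwiner of $U$ and invoking Theorems \ref{theorem: existence of an intertwiner} and \ref{theorem: the set of intertwiners} to represent it by (matrices of) multiplication operators. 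The paper actually delegates the equivalence $(1)\Leftrightarrow(2)\Leftrightarrow(3)$ to an external citation and only sketches the indecomposable model case; your block-diagonalization along translation classes and the determinant trick for multiplicity $m>1$ are a sensible way to fill that in.

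There is, however, a genuine gap at the end of your $(3)\Rightarrow(1)$. Invoking Stone's theorem fibrewise---and likewise writing $d_t(k)=\exp(\ii t\,\theta(k))$---requires the homomorphism $t\mapsto V^{(t)}(k)$ (resp.\ $t\mapsto d_t(k)\in\T$) to be continuous, or at least Lebesgue measurable, in $t$ \emph{for a fixed fibre} $k$. What condition $(3)$ gives you is strong operator continuity of $t\mapsto V^{(t)}$, which for multiplication operators yields only $L^2$-convergence of the symbols, i.e.\ convergence in measure in $k$; it does not control the value at a prescribed point $k$. Without measurability in $t$, the fibre group $t\mapsto d_t(k)$ could a priori be a pathological (non-measurable) character of $\R$, which is not of the form $\exp(\ii t\theta)$, so $\theta(k)$ is not ``pinned down.'' This can be repaired (choose a jointly measurable representative, use Fubini to get measurability of the fibre homomorphism for almost every $k$, apply automatic continuity of measurable homomorphisms, then use your Pontryagin argument to extend $\theta$ continuously to all $k$), but none of that is in your write-up, and it is not routine. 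The cleaner fix---which is exactly what the paper's sketch does in the scalar case---bypasses Stone and Pontryagin entirely: the winding number $w(d_t)$ is defined for each \emph{individual} $t$ because $d_t\colon\T_p\to\T$ is continuous in $k$, and the fibrewise group law (valid for all $k$, since it holds almost everywhere and both sides are continuous) gives $w(d_s)+w(d_t)=w(d_{s+t})$. Hence $t\mapsto w(d_t)$ is a group homomorphism from $\R$ to $\Z$; since $\R$ is divisible and $\Z$ has no nonzero divisible subgroup, it vanishes identically, so $m\cdot w(\mu)=w(d_1)=0$ with no continuity in $t$ needed anywhere.
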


If the condition $(3)$ holds, since the equation $V^{(t)} U = U V^{(t)}$ holds,
the family $(V^{(t)})_{t \in \R}$ gives a $1$-parameter group of uniform isomorphisms from the object $U$ to $U$.
Now we have already determined the set of uniform intertwiners from $U$ to $U$.
We are ready to prove the above theorem.

\begin{proof}
The implication $(3) \Rightarrow (4)$ is trivial.
Suppose $(4)$. Defining $V^{(t)}$ by $W \widetilde{V}^{(t)} W^{-1}$ we obtain $(3)$.
The conditions $(1)$, $(2)$, $(3)$ are equivalent by Theorem 4.2 in \cite{SakoIntertwiner}.

To get the idea of the proof for Theorem 4.2 in \cite{SakoIntertwiner},
let us consider the spectial case that $U$ is the indecomposable model quantum walk $\wU_{\lambda, p} = \left( L^2(\T_p), (M[\lambda]^t)_{t \in \Z}, \frac{1}{\ii}\frac{d}{dk} \right)$.
Recall that $p$ is the minimal period of $\lambda$ by Lemma \ref{lemma: decomposition of model QW}.
Suppose that the winding number of $\lambda$ is $0$.
Then there exists an analytic function $h \colon \R \to \R$ with period $p$ such that $\exp(\ii h) = \lambda$.
The continuous-time quantum walk $\left( L^2(\T_p), (M[\exp(\ii t h)])_{t \in \R}, \frac{1}{\ii}\frac{d}{dk} \right)$ realizes the discrete-time quantum walk $\wU_{\lambda, p}$.

Conversely, suppose that there exists a continuous-time uniform quantum walk
$( L^2(\T_p)$, $(V^{(t)})_{t \in \R}$, $\frac{1}{\ii}\frac{d}{dk} )$ 
satisfying that $V^{(1)} = M[\lambda]$.
Then for every real number $t$, $V^{(t)}$ is an intertwiner between 
$\wU_{\lambda, p}$ and $\wU_{\lambda, p}$.
By Theorem \ref{theorem: the set of intertwiners},
for every $t$ there exists a unique continuous function $f_t \colon \R \to \T_p$
with period $p$ satisfying that $V^{(t)} = M[f_t]$.
The family $\{f_t\}_{t \in \R}$ of continuous functions satisfies that
\[M[f_s f_t] = M[f_s] M[f_t] = V^{(s)} V^{(t)} = V^{(s+ t)} = M[f_{s + t}].\]
It follows that for every $s, t \in \R$, we have $f_s f_t = f_{s + t}$.
Let us denote by $w(f)$ the winding number of a continuous function $f \colon \T_p \to \R$.
We have $w(f_s) + w(f_t) = w(f_s f_t) = w(f_{s + t})$.
This means that the mapping $\R \ni t \mapsto w(f_t) \in \Z$ is a group homomorphism.
A group homomorphism from $\R$ to $\Z$ should be the $0$-map.
It follows that $w(\lambda) = w(f_1) = 0$.
\end{proof}

\begin{example}
\label{example : existence of CTQW}

Let $a, b$ be real numbers satisfying $0 < a < 0$, $a^2 + b^2 = 1$.
The walk
$\left(
\begin{array}{cc}
a S^{-1} & - b S^{-1}\\
b S  & - a S
\end{array}
\right)$
can be realized by a homogeneous analytic continuous-time quantum walk.
Indeed, the winding numbers of the eigenvalue functions
\[a \cos k \pm \ii \sqrt{1 - a^2 \cos^2 k} \]
are zero.

The walk $\left(
\begin{array}{cc}
a S & -b S\\
b & a
\end{array}
\right)$ can not be realized by a uniform continuous-time quantum walk.
Indeed, the winding number of the eigenvalue function
\[\lambda(k) 
= \exp (\ii k /2) \left( a \cos (k / 2)
- \ii \sqrt{1 - a^2 \cos^2 (k /2)} \right)\]
is $1$.
\end{example}

\begin{example}
The $1$-dimensional $3$-state Grover walk 
can be realized by a homogeneous analytic continuous-time quantum walk.
Indeed, the complete list of the eigenvalue functions is
\[\lambda_1(k) = -1, \quad \lambda_2 = \frac{2 + \cos k}{3} + \frac{1}{3} \ii \sin \frac{k}{2} \sqrt{10 + 2 \cos k}.\]
The winding numbers are $0$

The $1$-dimensional $4$-state Grover walk can not be realized by a uniform continuous-time quantum walk.
Indeed, the function
\[\lambda_+(k) = \dfrac{\cos k + \cos 3k}{2} + \ii \sin k \sqrt{1 + 4 \cos^4 k}\]
is one of the eigenvalue functions.
The winding number is $1$.
\end{example}

\section{concluding remark: how useful the category of quantum walks is.}

The category preserves a language.
By this language, we are able to say the following, for example.
\begin{enumerate}
\item
For a discrete-time quantum walk $U$,
its
realizability by a continuous-time quantum walk is equivalent to existence of $1$-parameter group of isomorphisms from $U$ to $U$.
\item
Given two walks $U_1$, $U_2$,
we can ask whether the dynamical system described by $U_1$ can be executed by $U_2$ or not.
This is equivalent to existence of an isometric morphism from the object $U_1$ to the object $U_2$,
\item
The inverse Fourier transform of a homogeneous walk is also an object.
Many researchers have regarded it as an intermediate product.
\end{enumerate}

\bibliographystyle{amsalpha}
\bibliography{category_of_QW.bib}

\end{document}